\newcommand{\setsorts}{\mathcal{S}}
\newcommand{\Var}{\mathcal{V}}
\newcommand{\Rules}{\mathcal{R}}
\newcommand{\Pairs}{\mathcal{P}}
\newcommand{\TCapp}{\mathit{TCap}}
\newcommand{\QQ}{\mathcal{Q}}
\newcommand{\DP}{\mathsf{DP}}
\newcommand{\decpijl}{\Rightarrow}
\newcommand{\arrz}{\ensuremath{\rightarrow}}
\newcommand{\arr}[1]{\ensuremath{\rightarrow_{#1}}}
\newcommand{\arrin}[1]{\ensuremath{\rightarrow_{#1}^{\mathsf{in}}}}
\newcommand{\arrr}[1]{\arr{#1}^*}
\newcommand{\domain}{\mathsf{domain}}
\newcommand{\userules}{\mathit{U\!R}}
\newcommand{\formrules}{\mathit{F\!R}}
\newcommand{\baseformrules}{\formrules_{\mathsf{base}}}
\newcommand{\sformrules}{\mathit{S\!R}}
\newcommand{\tcapsformrules}{\mathit{S\!R_T}}
\newcommand{\implformrules}{\formrules_T}
\newcommand{\filteruserules}{\userules}
\newcommand{\tcapuserules}{\userules_T}
\newcommand{\Ce}{\mathcal{C}_\epsilon}
\newcommand{\atype}{\iota}
\newcommand{\btype}{\kappa}
\newcommand{\asort}{\iota}
\newcommand{\bsort}{\kappa}
\newcommand{\atypedec}{\sigma}
\newcommand{\afun}{f}
\newcommand{\bfun}{g}
\newcommand{\avar}{x}
\newcommand{\bvar}{y}
\newcommand{\cvar}{z}
\newcommand{\aterm}{s}
\newcommand{\bterm}{t}
\newcommand{\cterm}{q}
\newcommand{\dterm}{u}
\newcommand{\eterm}{v}
\newcommand{\asub}{\gamma}
\newcommand{\bsub}{\delta}
\newcommand{\csub}{\chi}
\newcommand{\symb}[1]{\mathtt{#1}}
\newcommand{\oo}{\mathtt{o}}
\newcommand{\dpsort}{\mathtt{dpsort}}
\newcommand{\up}[1]{#1^\sharp}
\newcommand{\FV}{\mathit{Var}}
\newcommand{\rijtje}[1]{[#1]}
\newcommand{\N}{\mathbb{N}}
\newcommand{\filter}{\overline{\pi}}
\newcommand{\pitriv}{\pi_{\mathcal{T}}}
\newcommand{\seq}[1]{\vec{#1}}
\newcommand{\flag}{\mathit{f}}
\newcommand{\minimal}{\mathsf{m}}
\newcommand{\all}{\mathsf{a}}
\newcommand{\formative}{\mathsf{formative}}
\renewcommand{\formative}{\mathsf{form}}
\newcommand{\arbitrary}{\mathsf{arbitrary}}
\newcommand{\innermost}{\mathsf{innermost}}
\newcommand{\collapsing}{\mathit{Cl}}
\newcommand{\noncollapsing}{\mathit{NC}}
\newcommand{\lijst}{\symb{LIST}}
\newcommand{\nat}{\symb{NAT}}
\newcommand{\nil}{\symb{Nil}}
\newcommand{\cons}{\symb{Cons}}
\newcommand{\nul}{\symb{O}}
\newcommand{\suc}{\symb{S}}
\newcommand{\Fa}{\symb{a}}
\newcommand{\Fb}{\symb{b}}
\newcommand{\Fc}{\symb{c}}
\newcommand{\Ff}{\symb{f}}
\newcommand{\Fg}{\symb{g}}
\newcommand{\Fh}{\symb{h}}
\newcommand{\suptermeq}{\unrhd}
\newcommand{\ack}{\symb{Ack}}
\newcommand{\random}{\symb{Rnd}}
\newcommand{\upd}{\symb{Upd}}
\newcommand{\mbig}{\symb{Big}}
\newcommand{\error}{\symb{Err}}
\newcommand{\return}{\symb{Return}}
\newcommand{\result}{\symb{RESULT}}
\newcommand{\run}{\symb{Run}}
\newcommand{\aprove}{\textsf{AProVE}\xspace}
\newcommand{\secshort}{Section}
\newcommand{\paragraaf}[1]{\paragraph{\textbf{#1}}}
\newcommand{\confreport}[2]{#1}    
\renewcommand{\confreport}[2]{#2} 
\title{First-Order Formative Rules%
\thanks{Support by EPSRC \& the Austrian Science
Fund (FWF) international
project I963.}}
\author{Carsten Fuhs\inst{1} \and Cynthia Kop\inst{2}}
\institute{
  University College London,
  Dept.\ of Computer Science,
          London WC1E 6BT, UK
  \and
  University of Innsbruck, Institute of Computer Science,
          6020 Innsbruck, Austria
}
\begin{document}

\confreport{}{
\begin{center}
{\LARGE \textbf{First-Order Formative Rules}} \\[3ex]
{\large \textbf{Carsten Fuhs and Cynthia Kop}}
\end{center}
\definecolor{rectification}{rgb}{0.9, 0.9, 0.9}
\colorbox{rectification}{\parbox{12cm}{
\ \\
This technical report is an extended version of the paper
\cite{confversion}, to be published at RTA-TLCA 2014.
It includes an appendix for three purposes:
\begin{itemize}
\item giving complete proofs of the results which, for space reasons,
  are only sketched in the paper;
\item providing extensions to the results in the paper, which are
  used to obtain a more convenient implementation;
\item discussing implementation details.
\end{itemize}
\thispagestyle{empty}
}}
\newpage
\setcounter{page}{1}
}

\maketitle

\begin{abstract}
This paper discusses the method of \emph{formative rules} for
first-order term rewriting, which was previously defined for a
higher-order setting. Dual to the well-known \emph{usable rules},
formative rules allow dropping some of the term constraints that
need to be solved during a termination proof.
Compared to the higher-order definition, the first-order setting
allows for significant improvements of the technique.
\end{abstract}

\section{Introduction}\label{sec:intro}

In~\cite{kop:raa:11:1,kop:raa:12:1} C.\ Kop and F.\ van Raamsdonk
introduce the notion of \emph{formative rules}.  The technique is
similar to the method of
\emph{usable rules}~\cite{art:gie:00:1,gie:thi:sch:fal:06,hir:mid:07:1},
which is commonly used
in termination proofs, but has different strengths and weaknesses.

Since, by~\cite{tan:gal:91:1},
the more common \emph{first-order} style of term rewriting, both
with and without types, can be seen as a subclass of the formalism
of~\cite{kop:raa:12:1},
this result
immediately applies to first-order rewriting.  In an
untyped setting, we will, however, lose some of its strength, as
sorts play a relevant role
in 
formative rules.

On the other hand, by omitting the complicating aspects of
higher-order term rewriting (such as $\lambda$-abstraction and
``collapsing'' rules $l \arrz \avar \cdot \bvar$) we also gain
possibilities not present in the original setting; both things which
\emph{have not} been done, as the higher-order dependency pair
framework \cite{kop:12} is still rather limited, and things which
\emph{cannot} be done, at least with current theory.
Therefore, in this paper, \linebreak we will redefine the method for
(many-sorted) first-order term rewriting.

New compared to~\cite{kop:raa:12:1}, we will integrate formative
rules into the dependency pair framework~\cite{gie:thi:sch:05:2},
which is the basis of most contemporary termination provers for
first-order term rewriting.
Within this framework, formative rules are used either as a
stand-alone processor or with reduction pairs, and can be coupled
with usable rules and argument filterings.
We also formulate a semantic characterisation of formative rules,
to enable future generalisations of the definition.
Aside from this, we present a (new) way to weaken the detrimental
effect of collapsing rules.

This paper is organised as follows.
After 
the
preliminaries in \secshort~\ref{sec:prelim}, a first
definition of formative rules is given
and then generalised
in \secshort~\ref{sec:formrules}.
\secshort~\ref{sec:procs} shows various ways to use formative rules in 
the dependency pair frame\-work.
\secshort~\ref{sec:unsorted} 
gives
an alternative way to deal with
collapsing rules.
In \secshort~\ref{sec:innermost} we\confreport{}{\linebreak} consider innermost termination,
\secshort~\ref{sec:code} describes implementation and
experiments, and in
\secshort~\ref{sec:future} we
 point out possible
future work and conclude.
All proofs and an improved formative rules approximation are provided
in~\confreport{\cite{techreportversion}}{the appendix}.

\section{Preliminaries}\label{sec:prelim}

We consider \emph{many-sorted term rewriting}: term
rewriting with \emph{sorts}, basic types.
While sorts are not
usually considered in studies of first-order
term rewrite systems (TRSs) and
for instance the Termination Problems Data Base\footnote{More information
  on the \emph{TPDB}: \url{http://termination-portal.org/wiki/TPDB}}
does
not include them (for first-order TRSs),\footnote{This
may also be
  due
  to the fact that currently most termination tools for first-order
  rewriting only make very limited use of the additional information
  carried by types.} they are a natural
addition; in typical applications there is little
need to allow untypable terms like $\symb{3} + \symb{apple}$.
Even when no sorts are\linebreak present, a standard
TRS can be seen as a many-sorted TRS with only one
sort.\footnote{However, the method of this paper is stronger given
more sorts.  We may be able to (temporarily) infer richer sorts,
however.  We will say more about this in \secshort~\ref{sec:innermost}.}

\paragraaf{Many-sorted TRSs}
We assume given a non-empty set $\setsorts$ of \emph{sorts}; these
are typically things like $\mathtt{Nat}$ or $\mathtt{Bool}$, or (for
representing unsorted systems) $\setsorts$ might be the set with a
single sort $\{ \oo \}$.
A \emph{sort declaration} is a sequence $[\btype_1 \times \ldots
\times \btype_n] \decpijl \atype$ where $\atype$ and all $\btype_i$
are sorts.  A sort declaration $[] \decpijl \atype$ is just
denoted $\atype$.

A \emph{many-sorted signature} is a set $\Sigma$ of function symbols
$\afun$, each equipped with a sort declaration $\atypedec$, notation
$\afun : \atypedec \in \Sigma$.  Fixing a many-sorted signature
$\Sigma$ and an infinite set $\Var$ of sorted variables,
the set of \emph{terms}
consists of those expressions $\aterm$ over $\Sigma$ and $\Var$ for
which we can derive $\aterm : \atype$ for some sort $\atype$, using
the clauses:
\[
\begin{array}{rcl}
\avar : \atype & \mathrm{if} & \avar : \atype \in \Var \\
\afun(\aterm_1,\ldots,\aterm_n) : \atype & \mathrm{if} &
  \afun : [\btype_1 \times \ldots \times \btype_n] \decpijl \atype \in
  \Sigma\ \mathrm{and}\ \aterm_1 : \btype_1,\ \ldots,\ \aterm_n :
  \btype_n \\
\end{array}
\]
We often denote $\afun(\aterm_1,\ldots,\aterm_n)$ as just
$\afun(\seq{\aterm})$.
Clearly, every term has a unique sort.\linebreak
Let $\FV(\aterm)$ be the set of all variables occurring in a 
term $\aterm$.
A term $\aterm$ is \emph{linear} if every variable in
$\FV(\aterm)$ occurs only once in $\aterm$.
A term $\bterm$ is a \emph{subterm} of another term $\aterm$, notation
$\aterm \suptermeq \bterm$, if either $\aterm = \bterm$ or $\aterm =
\afun(\aterm_1,\ldots,\aterm_n)$ and some $\aterm_i \suptermeq
\bterm$.
A \emph{substi-\linebreak tution} $\gamma$ is a mapping from variables to terms of
the same sort; the application $\aterm\gamma$ of a substitution
$\gamma$ on a term $\aterm$ is $\aterm$ with each $\avar
\in \domain(\gamma)$ replaced by $\gamma(\avar)$.

A \emph{rule} is a pair $\ell \arrz r$ of terms with the same sort
such that $\ell$ is not a variable.%
\footnote{Often also $\FV(r) \subseteq \FV(\ell)$ is required.
However, we use
\emph{filtered} rules $\filter(\ell) \arrz \filter(r)$ later, where
the restriction is inconvenient.
As a rule is non-terminating if $\FV(r) \not \subseteq \FV(\ell)$,
as usual we
forbid such rules in the input
$\Rules$
and in
dependency pair problems.
}
A rule is \emph{left-linear} if $\ell$ is linear, and
\emph{collapsing} if $r$ is a va\-riable.
Given a set of rules $\Rules$, the \emph{reduction relation}
$\arr{\Rules}$ is
given by:
$\ell\gamma \arr{\Rules} r\gamma$ if $\ell \arrz r \in \Rules$ and
  $\gamma$ a substitution;
$\afun(\ldots,\aterm_i,\ldots) \arr{\Rules}
  \afun(\ldots,\aterm_i',\ldots)$ if
  $\aterm_i \arr{\Rules} \aterm_i'$.
A term $\aterm$ is in \emph{normal form} if there is no $\bterm$ such
that $\aterm \arr{\Rules} \bterm$.

The relation $\arrr{\Rules}$ is the transitive-reflexive closure of
$\arr{\Rules}$.  If there is a rule $\afun(\seq{l})
\arrz r \in \Rules$ we say that
$\afun$ is a \emph{defined symbol}; otherwise $\afun$ is a
\emph{constructor}.

A \emph{many-sorted term rewrite system (MTRS)} is
a pair
$(\Sigma,\Rules)$
with signature $\Sigma$\linebreak
and  a set $\Rules$
of rules $\ell \arrz r$ with $\FV(r) \subseteq \FV(\ell)$.
A term $\aterm$ is \emph{terminating} if there is\linebreak no infinite
reduction $\aterm \arr{\Rules} \bterm_1 \arr{\Rules} \bterm_2
\ldots$
An MTRS is terminating if all terms are.

\begin{example}\label{ex:running}
An example of a many-sorted TRS $(\Sigma,\Rules)$ with more than one sort is the
following system, which uses lists, natural numbers and a
$\result$ sort:\pagebreak
\[
\begin{array}{rclrclrcl}
\nul & : & \nat &
\cons & : & [\nat \times \lijst] \decpijl \lijst &
\run & : & [\lijst] \decpijl \result \\
\suc & : & [\nat] \decpijl \nat &
\ack & : & [\nat \times \nat] \decpijl \nat &
\return & : & [\nat] \decpijl \result \\
\nil & : & \lijst &
\mbig & : & [\nat \times \lijst] \decpijl \nat &
\random & : & [\nat] \decpijl \nat \\
\error & : & \result & 
\upd & : & [\lijst] \decpijl \lijst \\
\end{array}
\]
\[
\begin{array}{rrclrrcl}
1. & \random(\avar) & \arrz & \avar &
6. & \mbig(\avar,\nil) & \arrz & \avar \\
2. & \random(\suc(\avar)) & \arrz & \random(\avar) &
7. & \mbig(\avar,\cons(\bvar,\cvar)) & \arrz & \mbig(\ack(\avar,\bvar),\upd(\cvar)) \\
3. & \upd(\nil) & \arrz & \nil &
8. & \upd(\cons(\avar,\bvar)) & \arrz & \cons(\random(\avar),\upd(\bvar)) \\
4. & \run(\nil) & \arrz & \error &
9. & \run(\cons(\avar,\bvar)) & \arrz & \return(\mbig(\avar,\bvar)) \\
5. & \ack(\nul,\bvar) & \arrz & \suc(\bvar) &
10. & \ack(\suc(\avar),\bvar) & \arrz & \ack(\avar,\suc(\bvar)) \\
& & & & 
11. & \ack(\suc(\avar),\suc(\bvar)) & \arrz & \ack(\avar,\ack(\suc(\avar),\bvar)) \\
\end{array}
\]
$\run(\mathit{lst})$ calculates a potentially very large number,
depending on the elements of $\mathit{lst}$ and some randomness.  We
have chosen this example because it will help to demonstrate
the various aspects of formative rules, without being too long.
\end{example}

\paragraaf{The Dependency Pair Framework}
As a basis to study termination, we will use the \emph{dependency
pair (DP) framework}~\cite{gie:thi:sch:05:2}, adapted to include sorts.

Given an MTRS $(\Sigma,\Rules)$, let $\Sigma^\sharp = \Sigma
\cup \{ \up{\afun} : [\atype_1 \times \ldots \times \atype_n]
\decpijl \dpsort \mid \afun : [\atype_1 \times \ldots \times \atype_n]
\decpijl \btype \in \Sigma \;\!\land\;\! \afun$ a defined symbol of $\Rules\}$,
where $\dpsort$ is a fresh sort.
The set\linebreak $\DP(\Rules)$ of \emph{dependency pairs (DPs)} of $\Rules$
consists of all rules of the form $\up{\afun}(l_1,\ldots,l_n)\linebreak \arrz
\up{\bfun}(r_1,\ldots,r_m)$ where $\afun(\seq{l}) \arrz r \in \Rules$
and $r \suptermeq \bfun(\seq{r})$ with $\bfun$ a defined symbol.

\begin{example}\label{ex:DPs}
The dependency pairs of the system in Example~\ref{ex:running} are:
\[
\begin{array}{rclrcl}
\up{\random}(\suc(\avar)) & \arrz & \up{\random}(\avar) &
\up{\mbig}(\avar,\cons(\bvar,\cvar)) & \arrz & \up{\mbig}(\ack(\avar,\bvar),\upd(\cvar)) \\
\up{\upd}(\cons(\avar,\bvar)) & \arrz & \up{\random}(\avar) &
\up{\mbig}(\avar,\cons(\bvar,\cvar)) & \arrz & \up{\ack}(\avar,\bvar) \\
\up{\upd}(\cons(\avar,\bvar)) & \arrz & \up{\upd}(\bvar) &
\up{\mbig}(\avar,\cons(\bvar,\cvar)) & \arrz & \up{\upd}(\cvar) \\
\up{\run}(\cons(\avar,\bvar)) & \arrz & \up{\mbig}(\avar,\bvar) &
\up{\ack}(\suc(\avar),\suc(\bvar)) & \arrz & \up{\ack}(\avar,\ack(\suc(\avar),\bvar)) \\
\up{\ack}(\suc(\avar),\bvar) & \arrz & \up{\ack}(\avar,\suc(\bvar)) &
\up{\ack}(\suc(\avar),\suc(\bvar)) & \arrz & \up{\ack}(\suc(\avar),\bvar) \\
\end{array}
\]
\end{example}

For sets $\Pairs$ and $\Rules$ of rules, an
infinite $(\Pairs,\Rules)$-chain is a sequence
$[(\ell_i \arrz r_i,\linebreak\gamma_i) \mid i \in
\N]$ where each $\ell_i \arrz r_i \in \Pairs$ and $\gamma_i$ is a
substitution such that $r_i\gamma_i \arrr{\Rules} \ell_{i+1}
\gamma_{i+1}$.  This chain is \emph{minimal} if each $r_i\gamma_i$ is
terminating with respect to $\arr{\Rules}$.

\begin{theorem} (following \cite{art:gie:00:1,gie:thi:sch:05:2,gie:thi:sch:fal:06,hir:mid:07:1})\label{thm:dps}
An MTRS $(\Sigma,\Rules)$ is terminating if and only if
there is no infinite minimal $(\DP(\Rules),\Rules)$-chain.
\end{theorem}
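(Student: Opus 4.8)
The plan is to prove the contrapositive of both implications at once, showing that $(\Sigma,\Rules)$ is \emph{non}-terminating if and only if there exists an infinite minimal $(\DP(\Rules),\Rules)$-chain. The central notion throughout is that of a \emph{minimal non-terminating term}: a term $\aterm$ that admits an infinite $\arr{\Rules}$-reduction while every proper subterm of $\aterm$ is $\arr{\Rules}$-terminating. Since terms are finite and the subterm relation $\suptermeq$ is well-founded, every non-terminating term has a minimal non-terminating subterm, so such terms exist as soon as $(\Sigma,\Rules)$ is non-terminating. Two easy observations will be used repeatedly: a minimal non-terminating term must be of the form $\afun(\seq{\aterm})$ with $\afun$ a defined symbol (a variable is a normal form, and if $\afun$ were a constructor every reduction would take place in the --- terminating --- arguments); and because the marked symbols $\up{\afun}$ carry no rules in $\Rules$, a term $\up{\afun}(\seq{\aterm})$ is $\arr{\Rules}$-terminating exactly when all the $\aterm_i$ are.

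For the direction ``chain $\Rightarrow$ non-termination'' I would argue directly. Given an infinite minimal chain $[(\ell_i \arrz r_i,\gamma_i) \mid i \in \N]$, each pair has the shape $\up{\afun_i}(\seq{l}) \arrz \up{\bfun_i}(\seq{p})$ and originates, by definition of $\DP(\Rules)$, from a rule $\afun_i(\seq{l}) \arrz C_i[\bfun_i(\seq{p})] \in \Rules$ for some context $C_i$. Setting $\aterm_i := \afun_i(\seq{l})\gamma_i$, one step at the root gives $\aterm_i \arr{\Rules} C_i[\bfun_i(\seq{p})]\gamma_i \suptermeq \bfun_i(\seq{p})\gamma_i$, and erasing the (root-only) sharp mark turns the chain condition $r_i\gamma_i \arrr{\Rules} \ell_{i+1}\gamma_{i+1}$ into $\bfun_i(\seq{p})\gamma_i \arrr{\Rules} \aterm_{i+1}$. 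Composing, $\aterm_i \arrp{\Rules} C_i[\aterm_{i+1}]$, and iterating yields an infinite $\arr{\Rules}$-reduction $\aterm_0 \arrp{\Rules} C_0[\aterm_1] \arrp{\Rules} C_0[C_1[\aterm_2]] \arrp{\Rules} \cdots$. Minimality is not even needed here.

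For the converse I would build the chain by dependent choice, producing a sequence $\aterm_0, \aterm_1, \ldots$ of minimal non-terminating terms together with the required pairs and substitutions. Start from any minimal non-terminating $\aterm_0 = \afun_0(\seq{w})$. Its arguments are terminating, so only finitely many steps can occur strictly below the root before a root rewrite must happen; hence $\aterm_0 \arrr{\Rules} \afun_0(\seq{l})\gamma \arr{\Rules} r\gamma$ for a rule $\afun_0(\seq{l}) \arrz r$, a matching substitution $\gamma$, and a \emph{non}-terminating $r\gamma$. Now take a minimal non-terminating subterm $\aterm_1$ of $r\gamma$. The key point is that $\aterm_1$ cannot lie strictly inside any $\gamma(\avar)$: every such $\gamma(\avar)$ is a subterm of the terminating term $\afun_0(\seq{l})\gamma$ and hence terminating. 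Therefore $\aterm_1 = \bfun(\seq{p})\gamma$ for a subterm $\bfun(\seq{p}) \suptermeq r$ with $\bfun$ defined, which is exactly the shape producing a dependency pair $\up{\afun_0}(\seq{l}) \arrz \up{\bfun}(\seq{p}) \in \DP(\Rules)$ with substitution $\gamma$. Repeating this step ad infinitum gives the chain. Minimality holds because each right-hand side instance $\up{\bfun}(\seq{p})\gamma$ is the minimal non-terminating term $\aterm_1 = \bfun(\seq{p})\gamma$ with a sharp at the root; since the proper subterms of $\aterm_1$ --- in particular the arguments $\seq{p}\gamma$ --- all terminate, the second observation above shows $\up{\bfun}(\seq{p})\gamma$ is $\arr{\Rules}$-terminating.

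The main obstacle is the argument in the last direction that the newly chosen minimal non-terminating subterm sits at a position \emph{of the rule's right-hand side} rather than inside the matched substitution; this is what forces it to be an instance of a defined-symbol subterm of $r$ and thus to correspond to a genuine dependency pair. Making this precise relies on the termination of the arguments of a minimal non-terminating term, which is also exactly what delivers the minimality of the constructed chain. The remaining points --- existence of minimal non-terminating subterms, the ``below the root, then at the root'' decomposition of an infinite reduction, and the well-sortedness of the resulting pairs (the fresh sort $\dpsort$ only affects the root symbol) --- are routine.
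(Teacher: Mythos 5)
Your proof is correct and takes essentially the same route as the paper: for Theorem~\ref{thm:dps} the paper itself only cites the literature, but the construction you give (minimal non-terminating terms, an eventual root step, and the new minimal non-terminating subterm being forced out of the matching substitution and onto a defined-symbol subterm of $r$) is exactly the one carried out in the paper's proof of Theorem~\ref{thm:makechain}, of which Theorem~\ref{thm:dps} is the non-formative special case. One small slip to repair: $\afun_0(\seq{l})\gamma = \ell\gamma$ is \emph{not} terminating (it rewrites to the non-terminating $r\gamma$); the correct reason each $\gamma(\avar)$ terminates is that it is a subterm of some argument $l_i\gamma$, itself a reduct of a terminating argument of the minimal non-terminating term --- which is precisely how the paper argues it (``$p\delta$ would be a subterm of some $l_j\delta$, contradicting termination of $\bterm_j$'').
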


A 
\emph{DP problem}
is
a 
triple
$(\Pairs,\Rules,\flag)$ with $\Pairs$ and $\Rules$ sets of
rules and $\flag \in \{\minimal,\all\}$ 
(denoting
$\{$\textsf{m}inimal, \textsf{a}rbitrary$\}$).\footnote{Here we do
not modify the signature $\Sigma^\sharp$ of a DP problem, so
we leave $\Sigma^\sharp$ implicit.}
A DP problem $(\Pairs,\Rules,\flag)$ is \emph{finite} if there is no
infinite $(\Pairs,\Rules)$-chain, which is minimal if $\flag =
\minimal$.
A \emph{DP processor} is a function which 
maps a DP
problem 
to a set of DP problems.
A processor $\mathit{proc}$ is \emph{sound} if, for all
DP problems $A$: if 
all
$B \in\mathit{proc}(A)$ are finite, then $A$ is finite.

The goal of the DP framework is,
starting with a set $D = \{(\DP(\Rules),\Rules,\minimal)\}$,
to reduce $D$ 
to $\emptyset$ using sound processors.
Then we may
conclude termination of the initial MTRS $(\Sigma,\Rules)$.%
\footnote{The full DP framework \cite{gie:thi:sch:05:2} can also be used
for proofs of \emph{non-termination}. Indeed, by
\cite[Lemma 2]{gie:thi:sch:05:2},
all processors
introduced in this paper
(except
Theorem~\ref{thm:processor}
for innermost rewriting)
are ``complete'' and
may be applied 
in a
non-termination proof.
}
Various common processors \pagebreak 
use a
\emph{reduction pair}, 
a pair $(\succsim,\succ)$ of a monotonic, stable (closed under substitutions)
quasi-ordering $\succsim$ on terms
and a well-founded, stable ordering $\succ$ compatible with
$\succsim$ (i.e., $\succ \cdot \succsim\ \subseteq\ 
\succ$).

\begin{theorem} (following \cite{art:gie:00:1,gie:thi:sch:05:2,gie:thi:sch:fal:06,hir:mid:07:1})\label{thm:redpair}
Let $(\succsim,\succ)$ be a reduction pair.
The processor which
maps a DP problem
$(\Pairs,\Rules,\flag)$
to the following result is sound:
\begin{itemize}
\item $\{(\Pairs \setminus \Pairs^\succ,\Rules,\flag)\}$ if:
  \begin{itemize}
  \item $\ell \succ r$ for $\ell \arrz r \in \Pairs^\succ$ and
  $\ell \succsim r$ for $\ell \arrz r \in \Pairs \setminus \Pairs^\succ$
    (with $\Pairs^\succ \subseteq \Pairs$);
  \item $\ell \succsim r$ for $\ell \arrz r \in \Rules$.
  \end{itemize}
\item $\{(\Pairs, \Rules, \flag)\}$ otherwise
\end{itemize}
\end{theorem}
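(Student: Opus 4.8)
The plan is to prove soundness by contraposition, splitting on the two branches of the processor. The ``otherwise'' branch is immediate: there $\mathit{proc}(A)=\{A\}$, so if the single output problem is finite then the input problem is finite trivially. I therefore concentrate on the first branch and show that if $(\Pairs \setminus \Pairs^\succ,\Rules,\flag)$ is finite then so is $(\Pairs,\Rules,\flag)$. Equivalently, starting from an infinite $(\Pairs,\Rules)$-chain $[(\ell_i \arrz r_i,\gamma_i)\mid i\in\N]$ (minimal if $\flag=\minimal$), I will extract an infinite chain that uses only pairs from $\Pairs \setminus \Pairs^\succ$ and that inherits minimality.

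First I would translate the chain into the two orderings. Since $\succsim$ is stable and $\ell \succsim r$ for every $\ell \arrz r \in \Rules$, and since $\succsim$ is monotonic and a quasi-ordering (hence reflexive and transitive), one obtains $\arr{\Rules}\ \subseteq\ \succsim$ and therefore $\arrr{\Rules}\ \subseteq\ \succsim$; in particular every chain step $r_i\gamma_i \arrr{\Rules} \ell_{i+1}\gamma_{i+1}$ yields $r_i\gamma_i \succsim \ell_{i+1}\gamma_{i+1}$. Likewise, by stability of both relations, each pair contributes $\ell_i\gamma_i \succ r_i\gamma_i$ when $\ell_i \arrz r_i \in \Pairs^\succ$ and $\ell_i\gamma_i \succsim r_i\gamma_i$ otherwise. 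Concatenating these facts gives an infinite sequence of terms linked by $\succsim$, in which a strict $\succ$ step occurs exactly at the positions where a pair of $\Pairs^\succ$ is used.

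The key step is to show that pairs from $\Pairs^\succ$ can be used only finitely often. Suppose otherwise and let $p_1 < p_2 < \cdots$ enumerate the infinitely many indices of strict steps. Between two consecutive such indices every step is weak, so by transitivity of $\succsim$ I collapse them into one weak step $r_{p_k}\gamma_{p_k} \succsim \ell_{p_{k+1}}\gamma_{p_{k+1}}$; composing with the strict step $\ell_{p_k}\gamma_{p_k} \succ r_{p_k}\gamma_{p_k}$ and applying compatibility $\succ \cdot \succsim\ \subseteq\ \succ$ gives $\ell_{p_k}\gamma_{p_k} \succ \ell_{p_{k+1}}\gamma_{p_{k+1}}$. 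This yields an infinite $\succ$-descending chain $\ell_{p_1}\gamma_{p_1} \succ \ell_{p_2}\gamma_{p_2} \succ \cdots$, contradicting well-foundedness of $\succ$. Hence from some index $N$ onward no pair of $\Pairs^\succ$ is used.

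Finally, the tail $[(\ell_i \arrz r_i,\gamma_i)\mid i \geq N]$ is an infinite $(\Pairs \setminus \Pairs^\succ,\Rules)$-chain: it reduces with the same rule set $\Rules$, and if $\flag=\minimal$ then each $r_i\gamma_i$ was already terminating with respect to $\arr{\Rules}$, so the tail is minimal too. Thus $(\Pairs \setminus \Pairs^\succ,\Rules,\flag)$ is not finite, which establishes the contrapositive and hence soundness. I expect the only genuine subtlety to lie in the well-foundedness argument: one must collapse the intervening weak steps on the correct side so that the available compatibility $\succ \cdot \succsim\ \subseteq\ \succ$ (rather than the unassumed $\succsim \cdot \succ\ \subseteq\ \succ$) applies, together with the easily overlooked point that $\arrr{\Rules}\ \subseteq\ \succsim$ requires $\succsim$ to be both reflexive and transitive.
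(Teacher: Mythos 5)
Your proof is correct and matches the standard argument: the paper itself does not prove this theorem (it defers to the cited literature, noting it also follows from Theorems~\ref{thm:makechain} and~\ref{thm:redpairprocessor}), but the tail-extraction reasoning you give --- stability and monotonicity turn the chain into a $\succsim$-sequence, compatibility $\succ \cdot \succsim\ \subseteq\ \succ$ collapses each strict pair with the following weak segment, and well-foundedness bounds the number of strict occurrences --- is exactly the argument the paper uses in its appendix proofs of the analogous Theorems~\ref{thm:userules} and~\ref{thm:redpairprocessor}. Your closing remarks on which side the weak steps must be composed and on needing reflexivity and transitivity of $\succsim$ for $\arrr{\Rules}\ \subseteq\ \succsim$ are the right subtleties to flag.
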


Here,
we must orient all elements of $\Rules$ with
$\succsim$.  As there are many processors which remove elements from
$\Pairs$ and few which remove 
from $\Rules$, this may give many\linebreak
constraints.  \emph{Usable rules}, often combined with
\emph{argument filterings}, address this:

\begin{definition}\label{def:ur} (following \cite{gie:thi:sch:fal:06,hir:mid:07:1})
Let $\Sigma$ be a signature and $\Rules$ a set of rules.
An \emph{argument filtering} is a function that maps each
$\afun : [\asort_1 \times \ldots \times \asort_n] \decpijl \bsort$ to
a set $\{i_1,\ldots,i_k\} \subseteq \{1,\ldots,n\}$.%
\footnote{Usual definitions of argument filterings also allow
$\pi(\afun) = i$, giving $\filter(\afun(\seq{\aterm})) =
\filter(\aterm_i)$, but for usable rules, $\pi(\afun) =
i$ is treated 
the same
as $\pi(\afun) = \{ i \}$, cf.\ 
\cite[\secshort~4]{gie:thi:sch:fal:06}.
}
The \emph{usable rules} of a term $\bterm$ with respect to an
argument filtering $\pi$ are defined as the smallest set
$\userules(\bterm,\Rules,\pi) \subseteq \Rules$ such that:
\begin{itemize}
\item if $\Rules$ is not finitely branching
  (i.e.\ there are terms with infinitely many direct reducts),
  then $\userules(\bterm,\Rules,\pi) = \Rules$;
\item if $\bterm = \afun(\bterm_1,\ldots,\bterm_n)$, then
  $\userules(\bterm_i,\Rules,\pi) \subseteq
  \userules(\bterm,\Rules,\pi)$ for all $i \in \pi(\afun)$;
\item if $\bterm = \afun(\bterm_1,\ldots,\bterm_n)$, then
  $\{ \ell \to r \in \Rules \mid 
  \ell = \afun(\ldots) \}
  \subseteq \userules(\bterm,\Rules,\pi)$;
\item if $\ell \arrz r \in \userules(\bterm,\Rules,\pi)$, then
  $\userules(r,\Rules,\pi) \subseteq \userules(\bterm,\Rules,\pi)$.
\end{itemize}
For a set of rules $\Pairs$, we define $\userules(\Pairs,\Rules,\pi) =
\bigcup_{s \to t \in \Pairs} \userules(t,\Rules,\pi)$.
\end{definition}

Argument filterings $\pi$ are used to disregard arguments of certain
function\linebreak symbols.  Given $\pi$, let $\afun_\pi : [\atype_{i_1} \times
\ldots \times \atype_{i_k}] \decpijl \btype$ be a fresh function
symbol for all $\afun$\linebreak  with $\pi(\afun) = \{i_1,\ldots,i_k\}$
and $i_1 < \ldots < i_k$,
and define
  $\filter(\avar) = \avar$ for $\avar$ a variable,\linebreak  and
  $\filter(\afun(\aterm_1,\ldots,\aterm_n)) =
  \afun_\pi(\filter(\aterm_{i_1}),\ldots,\filter(\aterm_{i_k}))$
  if $\pi(\afun) = \{i_1,\ldots,i_k\}$ and $i_1 < \ldots < \linebreak i_k$.
For a set of rules $\Rules$, let $\filter(\Rules) = \{
\filter(l) \arrz \filter(r) \mid l \arrz r \in \Rules \}$.
The idea of usable rules is 
to only
consider rules
relevant to the 
pairs in $\Pairs$ after applying
$\filter$.

Combining usable rules, argument filterings and reduction pairs, we
obtain:

\begin{theorem}\label{thm:userules} (\cite{gie:thi:sch:fal:06,hir:mid:07:1})
Let $(\succsim,\succ)$ be a reduction pair and
$\pi$ an argument filtering.
The processor which
maps
a DP problem
$(\Pairs,\Rules,\flag)$
to
the following result is 
sound:
\begin{itemize}
\item $\{(\Pairs \setminus \Pairs^\succ,\Rules,\minimal)\}$ if $\flag = \minimal$ and:
  \begin{itemize}
  \item $\filter(\ell) \succ \filter(r)$ for $\ell \arrz r \in \Pairs^\succ$ and
  $\filter(\ell) \succsim \filter(r)$ for $\ell \arrz r \in \Pairs \setminus \Pairs^\succ$;
  \item $\filter(\ell) \succsim \filter(r)$ for $\ell \arrz r \in \userules(\Pairs,\Rules,\pi) \cup \Ce$, \\
    where $\Ce = \{ \Fc_\asort(\avar,\bvar) \arrz \avar, \Fc_\asort(\avar,\bvar) \arrz \bvar \mid$ all sorts $\asort \}$.
  \end{itemize}
\item $\{(\Pairs, \Rules, \flag)\}$ otherwise
\end{itemize}
\end{theorem}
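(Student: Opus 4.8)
The plan is to follow the standard interpretation-based soundness argument for usable rules with argument filterings, adapted to the many-sorted $\Ce$. First I dispose of the trivial branch: whenever the processor returns $\{(\Pairs,\Rules,\flag)\}$ it is sound because its image contains the input, so finiteness of the image forces finiteness of the input. Hence I may assume $\flag = \minimal$ and that the two orientation conditions hold, and it suffices to prove: if $(\Pairs \setminus \Pairs^\succ,\Rules,\minimal)$ is finite then $(\Pairs,\Rules,\minimal)$ is finite. Contrapositively, I assume an infinite minimal $(\Pairs,\Rules)$-chain $[(\ell_i \arrz r_i,\gamma_i) \mid i \in \N]$ with $r_i\gamma_i \arrr{\Rules}\ell_{i+1}\gamma_{i+1}$ and each $r_i\gamma_i$ terminating, and aim to extract an infinite minimal chain that uses no pair from $\Pairs^\succ$. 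Note first that every term occurring in these reductions (the $\ell_i\gamma_i$, $r_i\gamma_i$, and all intermediate terms) is $\arr{\Rules}$-terminating, since reducts of terminating terms are terminating.

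If $\Rules$ is not finitely branching, then $\userules(\Pairs,\Rules,\pi)=\Rules$ by definition and the second condition orients all of $\Rules$ with $\succsim$; here the interpretation $I$ below may be taken to be the identity (there are no non-usable rules to absorb), and the chain-translation argument goes through directly. So assume $\Rules$ is finitely branching. The heart of the proof is an interpretation $I$, defined by well-founded recursion on the terminating terms (using termination for well-foundedness and finite branching to keep the construction finite), relative to $U := \userules(\Pairs,\Rules,\pi)$ and $\pi$. Intuitively $I$ keeps the term structure at positions reachable from the root through unfiltered arguments of symbols whose rules are in $U$, and replaces every other maximal subterm $\aterm$ by a $\Ce$-combination, using the sorted combinators $\Fc_\asort$, of $\aterm$ together with all its one-step reducts. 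The sortedness of $\Ce$ is exactly what lets this combination typecheck: $\Fc_\asort$ combines terms of sort $\asort$, so we use the combinator matching the sort of $\aterm$. I want $I$ to satisfy two properties: (P1) if $\aterm \arr{\Rules} \bterm$ with $\aterm$ terminating, then $I(\aterm) \arrr{U\cup\Ce} I(\bterm)$; and (P2) for every DP step, $\filter(I(\ell_i\gamma_i))$ and $\filter(I(r_i\gamma_i))$ inherit the orientation of $\filter(\ell_i)$ versus $\filter(r_i)$ prescribed by the first condition. Property (P1) rests on the closure clauses of $\userules$: a step using a rule with root $\afun$ is either a usable rule (kept, simulated directly) or occurs under a filtered or non-usable position, where it is absorbed by projecting a $\Ce$-combination onto the chosen reduct; the clauses guarantee that once we keep a defined symbol we also keep the rules it can call, so the simulation never gets stuck.

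With $I$ in hand I translate the whole chain. Since $U \cup \Ce$ are oriented by $\succsim$ after filtering, and $\succsim$ is a monotonic stable quasi-ordering, (P1) applied along each reduction $r_i\gamma_i \arrr{\Rules}\ell_{i+1}\gamma_{i+1}$ gives $\filter(I(r_i\gamma_i)) \succsim \filter(I(\ell_{i+1}\gamma_{i+1}))$; here I use that a step $\arr{\Ce}$ is a $\succsim$-step because the $\Ce$ rules are oriented and $\succsim$ is closed under contexts and substitutions, and that $\filter$ commutes with substitution ($\filter(\aterm\csub) = \filter(\aterm)\,\filter(\csub)$). Property (P2), obtained from stability of $\succ$ and $\succsim$ together with the same commutation, gives $\filter(I(\ell_i\gamma_i)) \succ \filter(I(r_i\gamma_i))$ when $\ell_i \arrz r_i \in \Pairs^\succ$ and $\succsim$ otherwise. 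Concatenating, I obtain an infinite $(\succsim \cup \succ)$-sequence in which a $\succ$-step occurs exactly at each index with $\ell_i \arrz r_i \in \Pairs^\succ$. By compatibility of $\succ$ with $\succsim$ and well-foundedness of $\succ$, only finitely many $\succ$-steps can occur, so beyond some $N$ no pair lies in $\Pairs^\succ$; the tail $[(\ell_i \arrz r_i,\gamma_i)\mid i \geq N]$ is then an infinite minimal $(\Pairs \setminus \Pairs^\succ,\Rules)$-chain, contradicting finiteness of the reduced problem.

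The main obstacle is the precise definition of $I$ and the verification of (P1)--(P2), in particular making the single interpretation respect the filtering and the usable-rule closure simultaneously and checking the bookkeeping at DP steps (that $I$ commutes with the marked root symbol of a DP left-hand side so that the pair orientation transfers cleanly). Everything else --- the two reductions to the main case, the commutation of $\filter$ with substitution, and the well-foundedness counting argument --- is routine. The only genuinely new point over the unsorted treatment of \cite{gie:thi:sch:fal:06,hir:mid:07:1} is carrying the sort annotations through the $\Ce$-combinators, which is immediate once one observes that the sort of a subterm is preserved by reduction and by $I$.
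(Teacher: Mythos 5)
Your proposal is correct and follows essentially the same route as the paper's proof: an interpretation $\mathcal{I}$ on terminating terms (well-defined by termination and finite branching) that keeps the structure at unfiltered arguments of usable symbols and wraps everything else in sorted $\Ce$-combinations of the term with its one-step reducts, a simulation lemma showing $\Rules$-steps become $\succsim$-steps over the filtered usable rules plus $\Ce$, and the usual well-foundedness count to discard $\Pairs^\succ$ from a tail of the chain. The only (cosmetic) difference is that the paper bakes the argument filtering directly into the interpretation, whereas you apply $\filter$ afterwards.
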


We define $\userules(\Pairs,\Rules)$ as $\userules(\Pairs,\Rules,
\pitriv)$, where $\pitriv$ is the \emph{trivial 
filtering}:
$\pitriv(\afun)
= \{1,\ldots,n\}$ for $\afun : [\asort_1 \times \ldots
\times \asort_n] \decpijl \bsort \in \Sigma$.
Then Theorem~\ref{thm:userules} is exactly the
standard
reduction pair processor, but with constraints on
$\userules(\Pairs,\Rules) \cup \Ce$ instead of $\Rules$.  We could
also 
use
a processor which maps $(\Pairs,\Rules,\minimal)$ to
$\{(\Pairs,\userules(\Pairs,\Rules) \cup \Ce,\all)\}$, but as this 
loses the minimality flag, it is usually not a good idea (various
processors need this flag,
including usable rules!) and can only be done once.\pagebreak

\section{Formative Rules}\label{sec:formrules}

Where
\emph{usable rules}~\cite{art:gie:00:1,gie:thi:sch:fal:06,hir:mid:07:1} 
are defined 
primarily by the right-hand sides of $\Pairs$ and $\Rules$, the
\emph{formative rules} discussed here are defined by the left-hand
sides.
This has\linebreak
consequences; most importantly, we cannot handle
non-left-linear rules very well.

We fix a signature $\Sigma$.  
A
term $\aterm : \atype$ \emph{has shape} $\afun$
with $\afun :
  [\seq{\btype}]
  \decpijl \atype \in \Sigma$
if either
  $\aterm = \afun(r_1,\ldots,r_n)$, or
  $\aterm$ is a variable of sort $\atype$.
That is, there exists 
some
$\asub$ with $\aterm\asub =
\afun(\ldots)$: one can \emph{specialise} $\aterm$ to have $\afun$ as
its root symbol.

\begin{definition}\label{def:form}
Let $\Rules$ be a set of rules.
The \emph{basic formative rules} of a term $\bterm$ are defined as the
smallest set $\baseformrules(\bterm,\Rules) \subseteq \Rules$ such that:
\begin{itemize}
\item if $\bterm$ is not linear, then $\baseformrules(\bterm,\Rules) =
  \Rules$;
\item if $\bterm = \afun(\bterm_1,\ldots,\bterm_n)$, then
  $\baseformrules(\bterm_i,\Rules) \subseteq \baseformrules(\bterm,
  \Rules)$;
\item if $\bterm = \afun(\bterm_1,\ldots,\bterm_n)$, then
  $\{ \ell \to r \in \Rules \mid r$ has shape $\afun \} \subseteq
  \baseformrules(\bterm,\Rules)$;
\item if $\ell \arrz r \in \baseformrules(\bterm,\Rules)$, then
  $\baseformrules(\ell,\Rules) \subseteq \baseformrules(\bterm,\Rules)$.
\end{itemize}
For rules $\Pairs$, let $\baseformrules(\Pairs,\Rules) =
\bigcup_{s \to t \in \Pairs} \baseformrules(s,\Rules)$.
Note that $\baseformrules(\avar,\!\Rules) = \emptyset$.
\end{definition}

Note the strong symmetry with Definition~\ref{def:ur}.  We have
omitted the argument filtering $\pi$ here, because the definitions
are simpler without it.  In Section~\ref{sec:procs} we will see how
we can add argument filterings back in without changing the definition.

\begin{example}\label{ex:formrules}\label{ex:runningP}
In the system from Example~\ref{ex:running}, consider
$\Pairs = \{ \up{\mbig}(\avar,\cons(\bvar,\cvar)) \arrz
\up{\mbig}(\ack(\avar,\bvar),\upd(\cvar))\}$.  The symbols
in the left-hand side are just $\up{\mbig}$ (which has sort
$\dpsort$, which is not used in $\Rules$) and $\cons$.  Thus,
$\baseformrules(\Pairs,\Rules) = \{ 8 \}$.
\end{example}

Intuitively, the formative rules of a dependency pair $\ell \arrz r$
are those rules which might contribute to creating the pattern $\ell$.
In Example~\ref{ex:formrules}, to reduce a term
$\up{\mbig}(\ack(\suc(\nul),\nul),\upd(\cons(\nul,\nil)))$ to an
instance of $\up{\mbig}(\avar,
\cons(\bvar,\cvar))$,
a
single step with the $\upd$ rule 8
gives
$\up{\mbig}(\ack(\suc(\nul),\nul),\cons(\random(\nul),\upd(\nil)))$;
we 
need not
reduce the
$\ack()$ or $\random()$ subterms 
for this.
To create a \emph{non}-linear pattern, any rule could contribute, as a step
deep inside a term may be needed.

\begin{example}
Consider $\Sigma = \{ \Fa,\Fb : \symb{A}, \up{\Ff} : [\symb{B}
\times \symb{B}] \decpijl \dpsort, \Fh : [\symb{A}] \decpijl
\symb{B} \}$,\ $\Rules = \{ \Fa \to \Fb \}$ and
$\Pairs = \{ \up{\Ff}(\avar,\avar) \arrz \up{\Ff}(\Fh(\Fa),\Fh(\Fb))\}$.
Without the linearity restriction,
$\baseformrules(\Pairs,\Rules)$ would be $\emptyset$, as $\dpsort$ does
not occur in the rules and $\baseformrules(\avar,\Rules) = \emptyset$.
But there is no infinite $(\Pairs,\emptyset)$-chain, while we
do have an infinite $(\Pairs,\Rules)$-chain, with $\gamma_i =
[\avar:=\Fh(\Fb)]$ for all $i$. 
The $\Fa \to \Fb$ rule
is needed to
make $\Fh(\Fa)$ and $\Fh(\Fb)$ equal.
Note that this happens even though
the sort of
$\avar$ does not occur in $\Rules$!
\end{example}

Thus, as we will see, in an infinite $(\Pairs,\Rules)$-chain we can
limit interest to rules in $\baseformrules(\Pairs,\Rules)$.  We call
these \emph{basic} formative rules because while they demonstrate the
concept, in practice we would typically use more advanced
extensions of the idea.  For instance, following the $\TCapp$ idea
of~\cite[Definition 11]{gie:thi:sch:05:1}, a rule $l \arrz
\afun(\nul)$ does not need to be a formative rule of
$\afun(\suc(\avar)) \arrz r$ if $\nul$ is a constructor.

To use formative rules with 
DPs,
we will show that any
$(\Pairs,\Rules)$-chain can be altered so that the $r_i\gamma_i
\arrr{\Rules} \ell_{i+1}\gamma_{i+1}$ reduction has a very specific
form (which uses only formative rules of $\ell_{i+1}$).
To this end, we consider \emph{formative reductions}.  A formative
reduction is a reduction where, essentially, a rewriting step is
only done if it is needed to obtain a result of the right form.

\begin{definition}[Formative Reduction]
For a term $\ell$, substitution $\gamma$
and
term $\aterm$, we say $\aterm \arrr{\Rules} \ell\gamma$ by a
\emph{formative $\ell$-reduction} if one of the following holds:
\begin{enumerate}
\item $\ell$ is non-linear;
\item\label{it:variablecase}
  $\ell$ is a variable and $\aterm = \ell\gamma$;
\item $\ell = \afun(l_1,\ldots,l_n)$ and $\aterm = \afun(\aterm_1,
  \ldots,\aterm_n)$ and each $\aterm_i \arrr{\Rules} l_i\gamma$ by a
  formative $l_i$-reduction;
\item $\ell = \afun(l_1,\ldots,l_n)$ and there are a rule $\ell'
  \arrz r' \in \Rules$ and a substitution $\delta$ such that $\aterm
  \arrr{\Rules} \ell'\delta$ by a formative $\ell'$-reduction and
  $r'\delta = \afun(\bterm_1,\ldots,\bterm_n)$ and each $\bterm_i
  \arrr{\Rules} l_i\gamma$ by a formative $l_i$-reduction.
\end{enumerate}
\end{definition}

\noindent
Point~\ref{it:variablecase} is
the
key: a reduction $\aterm \arrr{\Rules}
\avar\gamma$ must be postponed.
Formative reductions
are the base of a semantic definition of formative rules:

\begin{definition}\label{def:semantic}
A function $\formrules$ that maps a term $\ell$ and a set of rules
$\Rules$ to a set $\formrules(\ell,\Rules) \subseteq \Rules$ is a
\emph{formative rules approximation} if for all $\aterm$ and $\gamma$: 
if $\aterm \arrr{\Rules} \ell\gamma$ by a formative $\ell$-reduction,
then this reduction uses only rules in $\formrules(\ell,\Rules)$.

Given a formative rules approximation $\formrules$, let
$\formrules(\Pairs,\Rules) = 
\bigcup_{s \to t \in \Pairs} \formrules(s,\Rules)$.
\end{definition}

As might be expected, $\baseformrules$ is indeed a formative rules
approximation:

\begin{lemma}\label{lem:redtorules}
A formative $\ell$-reduction $\aterm \arrr{\Rules} \ell\gamma$ uses
only rules in $\baseformrules(\ell,\Rules)$.
\end{lemma}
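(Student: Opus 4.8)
The plan is to argue by induction on the derivation witnessing that $\aterm \arrr{\Rules} \ell\gamma$ is a formative $\ell$-reduction, treating the four clauses of the definition one at a time. Performing the induction on this derivation (rather than on the size of $\ell$) is what makes the argument go through, since in the head-step clause the intermediate left-hand side $\ell'$ need not be a subterm of $\ell$. The pleasant feature to exploit is that the four clauses defining a formative reduction line up precisely with the four closure clauses of $\baseformrules$ in Definition~\ref{def:form}, so each case will be discharged by invoking the matching clause together with the induction hypothesis.

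The two degenerate clauses are immediate. If $\ell$ is non-linear, then $\baseformrules(\ell,\Rules) = \Rules$ by the first clause of Definition~\ref{def:form}, so any rule used trivially qualifies. If $\ell$ is a variable and $\aterm = \ell\gamma$, the reduction is empty and uses no rules at all, which agrees with $\baseformrules(\avar,\Rules) = \emptyset$.

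For the congruence clause, $\ell = \afun(l_1,\ldots,l_n)$, $\aterm = \afun(\aterm_1,\ldots,\aterm_n)$, and each $\aterm_i \arrr{\Rules} l_i\gamma$ is a formative $l_i$-reduction; the whole reduction is obtained by performing these argument reductions in place, so the rules it uses are exactly the union of the rules used in the per-argument reductions. By the induction hypothesis each of the latter uses only rules from $\baseformrules(l_i,\Rules)$, and the second clause of Definition~\ref{def:form} gives $\baseformrules(l_i,\Rules) \subseteq \baseformrules(\ell,\Rules)$, so all rules used lie in $\baseformrules(\ell,\Rules)$.

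The real work is the head-step clause, where $\ell = \afun(l_1,\ldots,l_n)$ and the reduction factors as $\aterm \arrr{\Rules} \ell'\delta \arr{\Rules} r'\delta = \afun(\bterm_1,\ldots,\bterm_n) \arrr{\Rules} \ell\gamma$ for a rule $\ell' \arrz r' \in \Rules$, with $\aterm \arrr{\Rules} \ell'\delta$ a formative $\ell'$-reduction and each $\bterm_i \arrr{\Rules} l_i\gamma$ a formative $l_i$-reduction. I expect the crux to be showing that the applied rule $\ell' \arrz r'$ is itself in $\baseformrules(\ell,\Rules)$: from $r'\delta = \afun(\ldots)$ one must conclude that $r'$ \emph{has shape} $\afun$, and the delicate subcase is when $r'$ is a variable, where $\delta$ sends it to a term rooted at $\afun$ and one checks that its sort is the output sort of $\afun$, exactly meeting the definition of ``has shape''. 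With that established, the third clause of Definition~\ref{def:form} places $\ell' \arrz r'$ in $\baseformrules(\ell,\Rules)$; the fourth clause then yields $\baseformrules(\ell',\Rules) \subseteq \baseformrules(\ell,\Rules)$, which together with the induction hypothesis covers the rules of $\aterm \arrr{\Rules} \ell'\delta$ and the head step itself; and the second clause again gives $\baseformrules(l_i,\Rules) \subseteq \baseformrules(\ell,\Rules)$, which together with the induction hypothesis covers the rules of each $\bterm_i \arrr{\Rules} l_i\gamma$. Combining the three contributions, every rule used in the reduction lies in $\baseformrules(\ell,\Rules)$, closing the induction.
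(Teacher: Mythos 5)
Your proof is correct and follows essentially the same route as the paper's: induction on the derivation of the formative $\ell$-reduction, matching each of its four clauses against the corresponding closure clause of $\baseformrules$, with the head-step case resolved by observing that $r'$ has shape $\afun$ (including the variable subcase via sort preservation) so that $\ell' \arrz r' \in \baseformrules(\ell,\Rules)$ and hence $\baseformrules(\ell',\Rules) \subseteq \baseformrules(\ell,\Rules)$.
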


\begin{proof}
By induction on the definition of a formative $\ell$-reduction.
If $\ell$ is non-linear, then $\baseformrules(\ell,\Rules) = \Rules$,
so this is clear.
If $\aterm = \ell\gamma$ then no rules play a part.

If $\aterm = \afun(\aterm_1,\ldots,\aterm_n)$ and $\ell = \afun(l_1,
\ldots,l_n)$ and each $\aterm_i \arrr{\Rules} l_i\gamma$ by a
formative $l_i$-reduction, then by the induction hypothesis each
formative $l_i$-reduction 
$\aterm_i \arrr{\Rules} l_i\gamma$ uses only rules in
$\baseformrules(l_i,\Rules)$.  Observing that by definition
$\baseformrules(l_i,\Rules) \subseteq \baseformrules(\ell,\Rules)$,
we see that all steps of the reduction use rules in
$\baseformrules(\ell,\Rules)$.

If $\aterm \arrr{\Rules} \ell'\delta \arr{\Rules} r'\delta =
\afun(\bterm_1,\ldots,\bterm_n) \arrr{\Rules} \afun(l_1,\ldots,l_n)
\gamma = \ell\gamma$, then by the same reasoning the reduction
$r'\delta \arrr{\Rules} \ell\gamma$ uses only formative rules of
$\ell$, and by the induction hypothesis $\aterm \arrr{\Rules} \ell'
\delta$ uses only formative rules of $\ell'$.
Noting that $r'$ obviously has the same sort as $\ell$, and either
$r'$ is a variable or a term 
$\afun(r'_1,\ldots,r'_n)$,
we see that $r'$ has shape $\afun$, so $\ell' \arrz r' \in
\baseformrules(\ell,\Rules)$.  Therefore $\baseformrules(\ell',\Rules)
\subseteq \linebreak
\baseformrules(\ell,\Rules)$, so all rules in the reduction are
formative rules of $\ell$.
\qed
\end{proof}

In the following, we will assume a fixed formative rules approximation
$\formrules$.  The relevance of formative rules is clear from their
definition: if we can prove that a $(\Pairs,\Rules)$-chain can be
altered to use formative reductions in the $\arr{\Rules}$ steps, then
we can drop all non-formative rules from a DP problem.

The key result in this paper is the following technical lemma, which
allows us to alter a reduction $\aterm \arrr{\Rules} \ell\gamma$ to a
formative reduction (by changing $\gamma$):

\newcommand{\arrpar}[1]{\longrightarrow\!\!\!\!\!\!\!\!\|~_{#1}\ }
\newcommand{\arrrpar}[2]{\arrpar{#1}\!\!\!^{#2}~}

\begin{lemma}\label{lem:base}
If $\aterm \arrr{\Rules} \ell\asub$ for some
terms $\aterm,\ell$ and a substitution $\asub$ on domain $\FV(\ell)$,
then 
there is a substitution $\bsub$ on the same domain such that
$\aterm \arrr{\formrules(\ell,\Rules)} \ell\bsub$ by a formative
$\ell$-reduction.
\end{lemma}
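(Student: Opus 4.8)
The first observation is that it suffices to construct \emph{some} substitution $\delta$ on $\FV(\ell)$ with $\aterm \arrr{\Rules} \ell\delta$ by a formative $\ell$-reduction: once such a reduction exists, Definition~\ref{def:semantic} immediately tells us that it uses only rules in $\formrules(\ell,\Rules)$, so it is also an $\arrr{\formrules(\ell,\Rules)}$-reduction. Thus the whole content of the lemma is that any reduction to an instance of $\ell$ can be \emph{reorganised} into a formative one by adjusting the substitution. To make the induction go through I would prove a slightly stronger statement: there are $\delta$ and a formative $\ell$-reduction $\aterm \arrr{\Rules} \ell\delta$ such that moreover $\ell\delta \arrr{\Rules} \ell\gamma$, and the combined length of these two reductions does not exceed the length of the original $\aterm \arrr{\Rules} \ell\gamma$. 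The extra clause $\ell\delta \arrr{\Rules} \ell\gamma$ records that the new binding merely \emph{postpones} work already present in $\gamma$, and the length bound is what feeds the termination argument. I would argue by induction on the pair (length of the reduction, size of $\ell$), ordered lexicographically.

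If $\ell$ is non-linear, the reduction is formative by the first clause of the definition and I take $\delta = \gamma$; if $\ell$ is a variable $\avar$, I set $\delta = [\avar := \aterm]$, so that $\aterm = \ell\delta$ is a zero-step formative reduction and $\ell\delta = \aterm \arrr{\Rules} \gamma(\avar) = \ell\gamma$. The interesting case is $\ell = \afun(l_1,\ldots,l_n)$ linear and not a variable. Here I split on whether the reduction $\aterm \arrr{\Rules} \afun(l_1\gamma,\ldots,l_n\gamma)$ contains a step at the root. If it does not, then $\aterm = \afun(\aterm_1,\ldots,\aterm_n)$ and the reduction decomposes into independent reductions $\aterm_i \arrr{\Rules} l_i\gamma$, to each of which the induction hypothesis applies (the lengths sum to that of the original, and each $l_i$ is strictly smaller than $\ell$). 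This yields substitutions $\delta_i$; here left-linearity of $\ell$ is essential, since it makes the domains $\FV(l_i)$ pairwise disjoint so that the $\delta_i$ combine into a single $\delta$ on $\FV(\ell)$, and the third clause reassembles the formative reductions of the arguments into one for $\ell$.

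If the reduction does contain a root step, I focus on the \emph{last} one, writing $\aterm \arrr{\Rules} \ell'\theta \arr{\Rules} r'\theta \arrr{\Rules} \ell\gamma$ where the displayed step is at the root using a rule $\ell' \arrz r' \in \Rules$ and no root step occurs afterwards. Then $r'\theta = \afun(\bterm_1,\ldots,\bterm_n)$ has root $\afun$, so $r'$ has shape $\afun$, i.e.\ either $r' = \afun(\ldots)$ or $r'$ is a variable, and $\bterm_i \arrr{\Rules} l_i\gamma$. The prefix $\aterm \arrr{\Rules} \ell'\theta$ is strictly shorter, so the induction hypothesis gives a formative $\ell'$-reduction $\aterm \arrr{\Rules} \ell'\eta$ together with $\ell'\eta \arrr{\Rules} \ell'\theta$. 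When $r'$ is not collapsing, i.e.\ $r' = \afun(r'_1,\ldots,r'_n)$, I put $c_i = r'_i\eta$, so that $r'\eta = \afun(c_1,\ldots,c_n)$ has the required root $\afun$; from $\ell'\eta \arrr{\Rules} \ell'\theta$ I obtain $c_i \arrr{\Rules} r'_i\theta = \bterm_i \arrr{\Rules} l_i\gamma$. A length computation shows these argument reductions are strictly shorter than the original, so the induction hypothesis applies to each $c_i \arrr{\Rules} l_i\gamma$, and the fourth clause assembles everything into a formative $\ell$-reduction for $\aterm$.

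The remaining, and genuinely delicate, case is a collapsing rule, $r' = \cvar$ a variable. Now $r'\eta = \eta(\cvar)$ need not have root $\afun$, since the formative $\ell'$-reduction leaves the binding of $\cvar$ unreduced, so the fourth clause cannot be applied directly with this rule. The plan here is to recurse on the strictly shorter reduction $\eta(\cvar) \arrr{\Rules} \ell\gamma$ — the target pattern is still $\ell$, but the length has dropped — to obtain a formative $\ell$-reduction $\eta(\cvar) \arrr{\Rules} \ell\delta$, and then to prepend the formative $\ell'$-reduction $\aterm \arrr{\Rules} \ell'\eta$ and the collapsing step $\ell'\eta \arr{\Rules} \eta(\cvar)$. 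The obstacle is that this concatenation is not covered by any single clause of the definition, so I must establish, as a separate closure property, that a formative $\ell'$-reduction followed by an $\ell' \arrz \cvar$ collapse followed by a formative $\ell$-reduction is again a formative $\ell$-reduction; I would prove this by an inner induction on the structure of the trailing formative $\ell$-reduction (when it begins with the third clause, $\eta(\cvar)$ already has root $\afun$ and the fourth clause applies with the collapsing rule; when it begins with the fourth clause, the claim follows for the intermediate pattern). Verifying this closure property, together with checking that the lexicographic measure strictly decreases in every branch — for which the length bound in the strengthened statement is exactly what is needed — is where I expect the main work to lie, unsurprisingly, since it is precisely the collapsing and non-left-linear rules that the preceding discussion flags as the hard cases for formative rules.
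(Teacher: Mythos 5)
Your overall strategy is the same as the paper's: strengthen the statement so that the new substitution still reduces to the old one, induct on (length of the reduction, size of $\ell$), split on whether a root step occurs, isolate the last root step, and finish with a closure property for concatenating a formative $\ell'$-reduction, a root step, and a formative $\ell$-reduction. The collapsing case and the closure property you identify are handled correctly and match the paper's final remark. However, there is a genuine gap in your induction measure: the invariant ``the combined length of the two reductions does not exceed the length of the original'' is false in the presence of duplicating rules, and with it the claim that the argument reductions $c_i \arrr{\Rules} l_i\asub$ are short enough for the induction hypothesis. Concretely, take $\Rules = \{\Fa \arrz \symb{a'},\ \symb{a'} \arrz \Fb,\ \Ff(\avar) \arrz \Fg(\Fh(\avar,\avar))\}$, $\ell = \Fg(\Fh(\bvar_1,\bvar_2))$, and the length-$3$ reduction $\Ff(\Fa) \arr{\Rules} \Ff(\symb{a'}) \arr{\Rules} \Ff(\Fb) \arr{\Rules} \Fg(\Fh(\Fb,\Fb)) = \ell\asub$. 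Your induction hypothesis on the prefix yields the empty formative $\Ff(\avar)$-reduction with $\eta(\avar)=\Fa$, so $c_1 = \Fh(\Fa,\Fa)$ and the reduction $c_1 \arrr{\Rules} \Fh(\Fb,\Fb)$ has length $4 > 3$: the lexicographic measure \emph{increases}, and the only formative $\ell$-reduction available, $\Ff(\Fa) \arr{\Rules} \Fg(\Fh(\Fa,\Fa))$ followed by $\ell\bsub \arrr{\Rules}^4 \ell\asub$, has combined length $5$. So the strengthened statement you propose to prove is simply not true.

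The paper avoids this by measuring the reduction in \emph{parallel} steps $\arrpar{\Rules}$ and by phrasing the invariant per variable ($\bsub(\avar) \arrrpar{\Rules}{k} \asub(\avar)$ for each $\avar \in \FV(\ell)$) rather than as a bound on total sequential length. The point is that if every $\bsub'(\avar)$ reaches $\asub'(\avar)$ in at most $m$ parallel steps, then $r'\bsub' \arrrpar{\Rules}{m} r'\asub'$ holds \emph{regardless of how often $r'$ duplicates its variables}, since all copies can be reduced simultaneously within each parallel step; hence $r'\bsub'$ reaches $\ell\asub$ in at most $k-1$ parallel steps and the induction goes through. Your proof can be repaired by making exactly this change of measure and invariant; as written, the step ``a length computation shows these argument reductions are strictly shorter than the original'' does not hold.
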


\begin{proof}
For non-linear $\ell$ this is clear, choosing $\delta := \gamma$.
So let $\ell$ be a linear term.
By definition of $\formrules$, it suffices to see that $\aterm
\arrr{\Rules} \ell\bsub$ by a formative $\ell$-reduc\-tion.  This
follows from the following claim:
\emph{If $\aterm \arrrpar{\Rules}{k} \ell\asub$ for some $k$, term
$\aterm$, linear term $\ell$ and substitution $\asub$ on domain
$\FV(\ell)$, then there is a substitution $\bsub$ on
$\FV(\ell)$
such that $\aterm \arrr{\Rules} \ell\bsub$
by a formative $\ell$-reduction, and each $\bsub(\avar)
\arrrpar{\Rules}{k} \asub(\avar)$.}\pagebreak

Here, the parallel reduction relation $\arrpar{\Rules}$ is defined
by:
  $\avar \arrpar{\Rules} \avar$;
  $\ell\gamma \arrpar{\Rules} r\gamma$ for $\ell \arrz r \in \Rules$;
  if $\aterm_i \arrpar{\Rules} \bterm_i$ for $1 \leq i \leq n$, then
  $\afun(\aterm_1,\ldots,\aterm_n) \arrpar{\Rules} \afun(\bterm_1,
  \ldots,\bterm_n)$.
The notation $\arrrpar{\Rules}{k}$ indicates $k$ \emph{or fewer} successive
$\arrpar{\Rules}$ steps.
Note that $\arrpar{\Rules}$ is reflexive, and if each $\aterm_i
\arrrpar{\Rules}{N_i} \bterm_i$, then
$\afun(\seq{\aterm}) \arrrpar{\Rules}{\max(N_1,\ldots,N_n)}
\afun(\seq{\bterm})$.

We prove the claim by induction first on $k$, second on the size of
$\ell$.

If $\ell$ is a variable we are immediately done, choosing $\delta :=
[\ell:=\aterm]$.

Otherwise, let $\ell = \afun(l_1,\ldots,l_n)$ and
$\gamma = \gamma_1 \cup \ldots \cup \gamma_n$ such that all $\gamma_i$
have disjoint domains and each $l_i\gamma_i = l_i\gamma$; this is
possible due to linearity.

First suppose the reduction $\aterm \arrrpar{\Rules}{k} \ell\asub$
uses no topmost steps.  Thus, we can write $\aterm = \afun(\aterm_1,
\ldots,\aterm_n)$ and each $\aterm_i \arrrpar{\Rules}{k} l_i\asub$.
By the second induction hypothesis we can find
$\bsub_1,\ldots,\bsub_n$ such that each $\aterm_i \arrr{\Rules} l_i
\bsub_i$ by a formative $l_i$-reduction and each $\bsub_i(\avar)
\arrrpar{\Rules}{k} \asub_i(\avar)$.
Choose $\bsub := \bsub_1 \cup \ldots \cup \bsub_n$; this is well-defined
by the assumption on the disjoint domains.
Then $\aterm \arrr{\Rules} \ell\bsub$ by a formative $\ell$-reduction.

Alternatively, a topmost step was done, which cannot be parallel with
other steps: $\aterm \arrrpar{\Rules}{m} \ell'\asub'
\arr{\Rules} r'\asub' \arrrpar{\Rules}{k-m-1} \ell\asub$ for some
$\ell' \arrz r' \in \Rules$ and substitution $\asub'$; we can safely
assume that $r'\asub' \arrrpar{\Rules}{k-m-1} \ell\asub$ does not use
topmost steps (otherwise we could just choose a later step).
Since $m < k$, the first induction hypothesis provides $\bsub'$ such
that $\aterm \arrr{\Rules} \ell'\bsub'$ by a formative
$\ell'$-reduction and each $\bsub'(\avar) \arrrpar{\Rules}{m}
\asub'(\avar)$.  But then also $r'\bsub' \arrrpar{\Rules}{m}
r'\asub'$.
Since $r'\asub' \arrrpar{\Rules}{k-m-1} \ell\asub$, we have
that $r'\bsub' \arrrpar{\Rules}{k-1} \ell\asub$.
Thus, by the first induction hypothesis, there is $\bsub$ such
that $r'\bsub' \arrr{\Rules} \ell\bsub$ by a formative
$\ell$-reduction, and each $\bsub(\avar) \arrrpar{\Rules}{k-1}
\asub(\avar)$.

We are done if the full reduction $\aterm \arrr{\Rules} \ell'\bsub'
\arr{\Rules} r'\bsub' \arrr{\Rules} \ell\bsub$ is $\ell$-formative;
this is easy with induction on the number of topmost steps in the
second part.
\qed
\end{proof}

Lemma~\ref{lem:base} lays the foundation for all theorems in this
paper.  To start:

\begin{theorem}\label{thm:makechain}
$(\Sigma,\Rules)$ is non-terminating if and only if there is an infinite minimal
formative $(\DP(\Rules),\formrules(\DP(\Rules),\Rules))$-chain.
Here, a chain $\rijtje{(\ell_i \arrz r_i,\asub_i) \mid i \in \N}$ is
\emph{formative} if always $r_i\asub_i \arrr{\formrules(\ell_{i+1},
\Rules)} \ell_{i+1}\asub_{i+1}$ by a formative
$\ell_{i+1}$-reduction.
\end{theorem}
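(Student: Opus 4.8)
The plan is to route everything through Theorem~\ref{thm:dps}, which already equates non-termination of $(\Sigma,\Rules)$ with the existence of an infinite minimal $(\DP(\Rules),\Rules)$-chain; it then suffices to convert between ordinary minimal $(\DP(\Rules),\Rules)$-chains and formative ones. Throughout I write $F := \formrules(\DP(\Rules),\Rules)$ and use $F \subseteq \Rules$. It is important to note that minimality of a formative chain is read with respect to $\arr{F}$, since $F$ is the rule component of the DP problem in which the formative chain lives; as I explain below, this is not a cosmetic choice but exactly what makes the statement true.

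For the ``if'' direction I would argue directly, and here minimality is not even needed. A formative $(\DP(\Rules),F)$-chain is in particular an ordinary one: each step $r_i\gamma_i \arrr{\formrules(\ell_{i+1},\Rules)} \ell_{i+1}\gamma_{i+1}$ uses only rules of $F$, and a formative reduction is still a reduction, so by $F \subseteq \Rules$ every step is an $\arr{\Rules}$-reduction. Thus the chain is an infinite $(\DP(\Rules),\Rules)$-chain, and by the standard soundness of dependency pairs underlying Theorem~\ref{thm:dps} (an infinite $(\DP(\Rules),\Rules)$-chain, minimal or not, forces non-termination) we conclude that $(\Sigma,\Rules)$ is non-terminating.

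For the ``only if'' direction I would start, via Theorem~\ref{thm:dps}, from an infinite minimal $(\DP(\Rules),\Rules)$-chain $[(\ell_i \arrz r_i,\gamma_i) \mid i \in \N]$ and replace the $\gamma_i$ by new substitutions $\delta_i$, built left to right. Set $\delta_1 := \gamma_1$. Given $\delta_i$ satisfying the invariant $\delta_i(x) \arrr{\Rules} \gamma_i(x)$ for all $x$, I would first observe $r_i\delta_i \arrr{\Rules} r_i\gamma_i \arrr{\Rules} \ell_{i+1}\gamma_{i+1}$ (using $\FV(r_i) \subseteq \FV(\ell_i)$ and closure of $\arrr{\Rules}$ under contexts), and then apply Lemma~\ref{lem:base} to $r_i\delta_i \arrr{\Rules} \ell_{i+1}\gamma_{i+1}$. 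This yields $\delta_{i+1}$ with $r_i\delta_i \arrr{\formrules(\ell_{i+1},\Rules)} \ell_{i+1}\delta_{i+1}$ by a formative $\ell_{i+1}$-reduction; crucially, the \emph{proof} of Lemma~\ref{lem:base} also delivers $\delta_{i+1}(x) \arrr{\Rules} \gamma_{i+1}(x)$, re-establishing the invariant. The resulting sequence $[(\ell_i \arrz r_i,\delta_i) \mid i \in \N]$ is a formative $(\DP(\Rules),F)$-chain, since step $i$ uses only rules of $\formrules(\ell_{i+1},\Rules) \subseteq F$.

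The main obstacle is minimality. The new substitutions are \emph{less} reduced than the old ones ($\delta_i(x) \arrr{\Rules} \gamma_i(x)$, and not conversely), so even though $r_i\gamma_i$ is $\arr{\Rules}$-terminating, $r_i\delta_i$ need not be: a less reduced argument can enable a root redex inside an argument of $r_i = \up{\bfun}(\ldots)$ that loops, and one can indeed cook up chains where $r_i\delta_i$ admits an infinite $\arr{\Rules}$-reduction. This is precisely why minimality must be taken with respect to $\arr{F}$ rather than $\arr{\Rules}$: the offending looping rules are typically \emph{not} formative rules of any left-hand side and hence lie outside $F$. My plan for this part is to carry the invariant ``$r_i\delta_i$ is $\arr{F}$-terminating'' (which for non-linear $\ell_i$ is immediate because there $\delta_i := \gamma_i$), deduce from it that $\ell_{i+1}\delta_{i+1}$, being an $F$-reduct of $r_i\delta_i$, is $\arr{F}$-terminating, hence each $\delta_{i+1}(x)$ (a subterm of $\ell_{i+1}\delta_{i+1}$ by linearity) is $\arr{F}$-terminating, and finally recover $\arr{F}$-termination of $r_{i+1}\delta_{i+1}$ using its marked root. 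Closing the last step cleanly --- ruling out that instantiation by the less reduced $\delta_{i+1}$ manufactures a fresh looping $F$-redex --- is the delicate point I expect to demand the most care.
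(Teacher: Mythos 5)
Your ``if'' direction is fine and matches the paper's. The gap is in the ``only if'' direction, at exactly the spot you flag as delicate --- and it is not just delicate, your proposed repair does not work. The final step of your invariant argument would need: if every $\delta_{i+1}(x)$ is $\arr{F}$-terminating, then $r_{i+1}\delta_{i+1}$ is $\arr{F}$-terminating ``using its marked root''. But a dependency pair's right-hand side has the form $\up{\bfun}(p_1,\ldots,p_m)$ where the arguments $p_j$ may contain \emph{defined} symbols (e.g.\ $\up{\ack}(\avar,\ack(\suc(\avar),\bvar))$ in Example~\ref{ex:DPs}), and termination is not preserved when such a context is instantiated by less-reduced substitution values: $\delta_{i+1}(x)$ terminating does not give $p_j\delta_{i+1}$ terminating (think $p_j = \Ff(x)$, $\delta_{i+1}(x) = \Fa$, with $\Ff(\Fa) \arrz \Ff(\Fa) \in F$). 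This is precisely the phenomenon behind the paper's own counterexample in \secshort~\ref{sec:procs} showing that the chain transformation of Lemma~\ref{lem:changechain} destroys minimality; restricting minimality to $\arr{F}$ does not save you there, since the looping rule $\Ff(\Fa)\arrz\Ff(\Fa)$ \emph{is} formative. So the strategy of fixing the dependency pairs from a given minimal chain and only replacing the substitutions cannot be closed as you describe.

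The paper's proof avoids this by not transforming an existing chain at all: it rebuilds the chain from a minimal non-terminating term, applying Lemma~\ref{lem:base} to the arguments $\bterm_j \arrr{\Rules} l_j\gamma$ \emph{before} choosing the next dependency pair, and then selects $p$ as a minimal subterm of $r$ such that $p\delta$ is non-terminating --- minimal with respect to the \emph{new} substitution $\delta$. Because the choice of the next pair is made after $\delta$ is fixed, all proper subterms of $p\delta$ terminate by construction, so $r_{i+1}\gamma_{i+1} = \up{\bfun}(\seq{p}\delta)$ is terminating even w.r.t.\ the full $\arr{\Rules}$ (the marked root genuinely helps here because the $p_j\delta$ themselves are already known to terminate). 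The price is that the resulting chain may use different dependency pairs than the one you started from; your proposal implicitly assumes you can keep the same pairs, and that assumption is what fails. To fix your write-up you would essentially have to abandon the ``transform a given chain'' framing and redo the construction from the non-terminating term.
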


\emph{Proof Sketch:} Construct an infinite $(\DP(\Rules),\Rules)$-chain
following the usual proof, but when choosing $\gamma_{i+1}$, use
Lemma~\ref{lem:base} to guarantee that $r_i\gamma_i
\arrr{\formrules(\ell_{i+1},\Rules)} \ell_{i+1}\gamma_{i+1}$ by a
formative $\ell_{i+1}$-reduction.
\qed

\medskip
Note that this theorem extends the standard dependency pairs result
(Theorem~\ref{thm:dps}) by limiting interest to chains with
formative reductions.

\begin{example}\label{ex:removerun}
The system from Example~\ref{ex:running} is terminating
iff there is no infinite minimal formative $(\Pairs,Q)$-chain, where $\Pairs =
\DP(\Rules)$ from Example~\ref{ex:DPs} and
$Q = \{1,2,3,\linebreak
5,6,7,8,10,11\}$.  Rules $4$ and $9$ have right-hand sides
headed by symbols $\error$ and $\return$ which do not occur
in the left-hand sides of $\DP$ or its formative rules.
\end{example}

\section{Formative Rules in the Dependency Pair Framework}\label{sec:procs}

Theorem~\ref{thm:makechain} provides a basis for using DPs
with formative rules to prove termination: instead of proving
that there is no infinite minimal $(\DP(\Rules),
\Rules)$-chain,
it suffices if there is no infinite minimal formative
$(\DP(\Rules),\formrules(\DP(\Rules),\Rules))$-chain.  
So in
the DP framework, 
we can start with the set
$\{ (\DP(\Rules),\formrules(\DP(\Rules),\Rules), \minimal) \}$
instead of $\{ (\DP(\Rules),\Rules, \minimal) \}$, as we did in  \pagebreak
Example~\ref{ex:removerun}.
We thus
  obtain a similar improvement to Dershowitz'
  refinement~\cite{der:04:1} in that it yields a smaller \emph{initial}
  DP problem: by~\cite{der:04:1}, we can reduce the
  initial set $\DP(\Rules)$; by Theorem~\ref{thm:makechain} we can
  reduce\linebreak the initial set $\Rules$.
However, there (currently) is no way to keep track of the information
that we only need to consider formative chains.  Despite this, we can
define several processors.  All of them are based on 
this
consequence of Lemma~\ref{lem:base}:

\begin{lemma}\label{lem:changechain}
If there is a $(\Pairs,\Rules)$-chain $\rijtje{(\ell_i \arrz r_i,\gamma_i)
\mid i \in \N}$,
then there are $\delta_i$ for $i \in \N$ such that
$\rijtje{(\ell_i \arrz r_i,\delta_i) \mid i \in \N}$ is a formative
$(\Pairs,\formrules(\Pairs,\Rules))$-chain.
\end{lemma}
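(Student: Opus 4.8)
The plan is to apply Lemma~\ref{lem:base} at each link of the chain, but to keep the substitutions consistent with one another so that the result is again a genuine chain. I start from the given $(\Pairs,\Rules)$-chain $\rijtje{(\ell_i \arrz r_i,\gamma_i) \mid i \in \N}$, which by definition satisfies $r_i\gamma_i \arrr{\Rules} \ell_{i+1}\gamma_{i+1}$ for every $i$. My goal is to produce substitutions $\delta_i$ with $r_i\delta_i \arrr{\formrules(\ell_{i+1},\Rules)} \ell_{i+1}\delta_{i+1}$ by a formative $\ell_{i+1}$-reduction.

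The key observation is that Lemma~\ref{lem:base} rewrites the \emph{target} substitution of a reduction, not the source term. So I would build the $\delta_i$ by induction on $i$. Concretely, suppose I have already fixed $\delta_i$ together with the term $\aterm_i := r_i\delta_i$ (for the base case I simply take $\delta_0 := \gamma_0$). The difficulty is that $\aterm_i = r_i\delta_i$ need not equal $r_i\gamma_i$, so I cannot directly reuse the reduction $r_i\gamma_i \arrr{\Rules} \ell_{i+1}\gamma_{i+1}$ from the original chain. I would resolve this by arranging, as an invariant of the induction, that the reduction $r_i\delta_i \arrr{\Rules} \ell_{i+1}\gamma_{i+1}$ still holds (possibly with a modified right-hand substitution): applying Lemma~\ref{lem:base} to $r_i\delta_i \arrr{\Rules} \ell_{i+1}\csub$ for the appropriate $\csub$ on domain $\FV(\ell_{i+1})$ yields a substitution $\delta_{i+1}$ on $\FV(\ell_{i+1})$ with $r_i\delta_i \arrr{\formrules(\ell_{i+1},\Rules)} \ell_{i+1}\delta_{i+1}$ by a formative $\ell_{i+1}$-reduction. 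This supplies the next link, and feeds $\delta_{i+1}$ into the following step.

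The point that needs care, and the main obstacle, is the coupling between consecutive links: the substitution $\delta_{i+1}$ produced by Lemma~\ref{lem:base} is only defined on $\FV(\ell_{i+1})$, yet it must simultaneously serve to instantiate the \emph{next} pair's left-hand term $\ell_{i+1}$ \emph{and} be compatible with a reduction out of $r_{i+1}\delta_{i+1}$. Since $\FV(r_{i+1}) \subseteq \FV(\ell_{i+1})$ (dependency pairs are rules, so the variable-inclusion condition holds), the term $r_{i+1}\delta_{i+1}$ is well-defined, and I need to check that it still reduces to an instance of $\ell_{i+2}$ so that the induction can continue. The clean way to maintain this is to carry, as part of the inductive hypothesis, not just $\delta_i$ but a reduction $r_i\delta_i \arrr{\Rules} \ell_{i+1}\csub_{i+1}$ for \emph{some} substitution $\csub_{i+1}$; applying Lemma~\ref{lem:base} then extracts the formative $\delta_{i+1}$, and composing the original tail $\ell_{i+1}\gamma_{i+1} = \ell_{i+1}\csub_{i+1} \arr{}$ reduction structure with the known relation $r_{i+1}\gamma_{i+1} \arrr{\Rules} \ell_{i+2}\gamma_{i+2}$ re-establishes the hypothesis at index $i+1$. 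Because each invocation of Lemma~\ref{lem:base} is applied independently at each link and only alters the target substitution, the formative character of each step is guaranteed directly by that lemma, and no global confluence or commutation argument is required; the whole construction is a single induction on $i$ whose only subtlety is threading the substitutions through correctly.
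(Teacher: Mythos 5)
Your overall architecture is the same as the paper's: induct on $i$, start with $\delta_1 := \gamma_1$, and at each link apply Lemma~\ref{lem:base} to a reduction out of $r_i\delta_i$ to obtain the next substitution $\delta_{i+1}$ together with a formative $\ell_{i+1}$-reduction. You also correctly identify the one real obstacle, namely that $\delta_{i+1}$ must simultaneously close the current link and feed a usable reduction out of $r_{i+1}\delta_{i+1}$ into the next one.

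However, the way you propose to close the induction does not actually go through as stated. Your invariant is ``$r_i\delta_i \arrr{\Rules} \ell_{i+1}\csub_{i+1}$ for some $\csub_{i+1}$'', and you claim that after extracting $\delta_{i+1}$ you can re-establish this at index $i+1$ by ``composing'' with the original reduction $r_{i+1}\gamma_{i+1} \arrr{\Rules} \ell_{i+2}\gamma_{i+2}$. That composition requires $r_{i+1}\delta_{i+1} \arrr{\Rules} r_{i+1}\gamma_{i+1}$, i.e.\ it requires the \emph{pointwise} property $\delta_{i+1}(\avar) \arrr{\Rules} \gamma_{i+1}(\avar)$ for all $\avar \in \FV(\ell_{i+1})$. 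Nothing in the \emph{statement} of Lemma~\ref{lem:base} gives you this: it only asserts the existence of \emph{some} $\bsub$ with a formative reduction to $\ell\bsub$, and an arbitrary such $\bsub$ need not reduce back to $\gamma_{i+1}$. The missing ingredient is the strengthened claim established \emph{inside} the proof of Lemma~\ref{lem:base} (each $\bsub(\avar) \arrrpar{\Rules}{k} \asub(\avar)$). The paper's proof makes exactly this the induction invariant --- it carries ``$\delta_i(\avar) \arrr{\Rules} \gamma_i(\avar)$ for all $\avar$'' along the chain, which immediately yields $r_i\delta_i \arrr{\Rules} r_i\gamma_i \arrr{\Rules} \ell_{i+1}\gamma_{i+1}$ and is re-established at each step by that ``moreover'' clause. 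Your proof becomes correct once you replace your existential invariant by this pointwise one and note explicitly that Lemma~\ref{lem:base} (in its proved, strengthened form) delivers it.
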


\begin{proof}
Given $\rijtje{(\ell_i \arrz r_i,\asub_i) \mid i \in \N}$ we construct
the formative chain as follows.
  Let $\bsub_1 := \asub_1$.
  For given $i$, suppose $\bsub_i$ is a substitution such that
  $\bsub_i \arrr{\Rules} \asub_i$, so still $r_i\bsub_i \arrr{\Rules}
  \ell_{i+1}\asub_{i+1}$.  Use Lemma~\ref{lem:base} to find
  $\bsub_{i+1}$ such that $r_i\bsub_i \arrr{\formrules(\ell_{i+1},
  \Rules)} \ell_{i+1}\bsub_{i+1}$
  by a formative $\ell_{i+1}$-reduction,
  and moreover $\bsub_{i+1} \arrr{\Rules} \asub_{i+1}$.
\qed
\end{proof}

This lemma for instance allows us to remove all non-formative rules
from a DP problem.  To this end, we use the following processor:

\begin{theorem}\label{thm:processor}
The DP processor which maps a DP problem
$(\Pairs,\Rules,\flag)$ to the set $\{(\Pairs,  \formrules(\Pairs,
\Rules),\all)\}$ is sound.
\end{theorem}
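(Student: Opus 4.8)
The plan is to prove soundness through its contrapositive. By the definition of soundness, I must show that if the single output problem $(\Pairs, \formrules(\Pairs,\Rules), \all)$ is finite, then the input problem $(\Pairs,\Rules,\flag)$ is finite; equivalently, if $(\Pairs,\Rules,\flag)$ is \emph{not} finite, then $(\Pairs, \formrules(\Pairs,\Rules), \all)$ is not finite. Unfolding the definition of finiteness, non-finiteness of $(\Pairs,\Rules,\flag)$ gives an infinite $(\Pairs,\Rules)$-chain (which is moreover minimal when $\flag = \minimal$, though I will not use that extra property). So I would start from an arbitrary infinite $(\Pairs,\Rules)$-chain $\rijtje{(\ell_i \arrz r_i, \gamma_i) \mid i \in \N}$.

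The heart of the argument is then a direct appeal to Lemma~\ref{lem:changechain}, which yields substitutions $\delta_i$ for which $\rijtje{(\ell_i \arrz r_i, \delta_i) \mid i \in \N}$ is a formative $(\Pairs, \formrules(\Pairs,\Rules))$-chain. Since a formative $(\Pairs, \formrules(\Pairs,\Rules))$-chain is in particular an infinite $(\Pairs, \formrules(\Pairs,\Rules))$-chain, this witnesses the non-finiteness of the output problem $(\Pairs, \formrules(\Pairs,\Rules), \all)$. Crucially, because the output flag is $\all$, finiteness of that problem forbids infinite chains outright with no minimality requirement, so exhibiting \emph{any} infinite chain is enough. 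This closes the contrapositive and hence establishes soundness.

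The only subtle point — and the reason the flag is relaxed to $\all$ rather than retained as $\flag$ — is that the substitutions $\delta_i$ supplied by Lemma~\ref{lem:changechain} satisfy $\delta_i \arrr{\Rules} \gamma_i$ pointwise but need not preserve minimality: even if each $r_i\gamma_i$ was $\arr{\Rules}$-terminating, there is no guarantee that each $r_i\delta_i$ is, since $\delta_i$ may instantiate variables with terms admitting longer or even infinite reductions. Hence I cannot assert that the reconstructed chain is minimal, and the output is correctly stated with flag $\all$. Beyond this bookkeeping, there is no genuine obstacle: all the real technical work has already been discharged in Lemma~\ref{lem:changechain} (and, beneath it, in Lemma~\ref{lem:base}), so the proof of this theorem amounts to little more than phrasing the claim contrapositively and invoking that lemma.
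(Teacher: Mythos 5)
Your proof is correct and matches the paper's own argument exactly: the paper likewise takes an arbitrary infinite $(\Pairs,\Rules)$-chain and invokes Lemma~\ref{lem:changechain} to produce an infinite $(\Pairs,\formrules(\Pairs,\Rules))$-chain, which suffices because the output flag is $\all$. Your additional remark on why minimality cannot be preserved also agrees with the paper's discussion following the theorem.
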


\noindent
\emph{Proof Sketch:}
This follows immediately from Lemma~\ref{lem:changechain}.
\qed

\begin{example}\label{ex:runningdrop}
Let $Q = \baseformrules(\DP(\Rules),\Rules)$ from Example~\ref{ex:removerun},
and let $\Pairs = \{ \up{\mbig}(\avar,\linebreak
\cons(\bvar,\cvar)) \arrz
\up{\mbig}(\ack(\avar,\bvar),\upd(\cvar))\}$ as in
Example~\ref{ex:runningP}.
If, during a termination proof with dependency pairs, we encounter a
DP problem $(\Pairs,Q,\minimal)$, we can soundly replace it by
$(\Pairs,T,\all)$, where $T = \baseformrules(\Pairs,Q) = \{ 8 \}$.
\end{example}

Thus, we can (permanently) remove all non-formative rules from a
dependency pair problem.  This processor has a clear downside,
however: given a problem $(\Pairs,\Rules,\minimal)$, we lose
minimality.  This $\minimal$ flag is very convenient to have, as
several processors require it (such as reduction pairs with usable
rules from Theorem~\ref{thm:userules}).

Could we preserve minimality?  Unfortunately, the answer
is no.  By modifying a chain to use formative reductions, we may lose
the property that each $r_i\gamma_i$ is terminating.  This happens
for instance for
$(\Pairs,\Rules,\minimal)$, where
$\Pairs = \{ \up{\Fg}(\avar) \arrz \up{\Fh}(\Ff(\avar)),\ 
\up{\Fh}(\Fc) \arrz \up{\Fg}(\Fa) \}$ and $\Rules = \{ \Fa \arrz
\Fb,\ \Ff(\avar) \arrz \Fc,\ \Ff(\Fa) \arrz \Ff(\Fa) \}$. Here,
$\baseformrules(\Pairs,\Rules) = \{ \Ff(\avar) \arrz \Fc,\ 
\Ff(\Fa) \arrz \Ff(\Fa) \}$.
While there is an infinite minimal
$(\Pairs,\Rules)$-chain, the only infinite $(\Pairs,\baseformrules(
\Pairs,\Rules))$-chain is non-minimal.

\medskip
Fortunately, there \emph{is} an easy way to use formative rules
without losing any information: by using them in a reduction pair,
as we typically do for usable rules.  In fact, although usable and
formative rules seem to be opposites, there is no reason why we
should use either one or the other; we can combine them.
Considering also argument filterings, we find the following
extension of Theorem~\ref{thm:userules}.

\begin{theorem}\label{thm:redpairprocessor}
Let $(\succsim,\succ)$ be a reduction pair and $\pi$ an argument
filtering.
The processor which
maps
$(\Pairs,\Rules,\flag)$
to
the following result is sound:
\begin{itemize}
\item $\{(\Pairs \setminus \Pairs^\succ,\Rules,\flag)\}$ if:
  \begin{itemize}
  \item $\filter(\ell) \succ \filter(r)$ for $\ell \arrz r \in \Pairs^\succ$ and
  $\filter(\ell) \succsim \filter(r)$ for $\ell \arrz r \in \Pairs \setminus \Pairs^\succ$;
  \item $u \succsim v$ for $u \arrz v \in \formrules(\filter(\Pairs),\filter(U))$, \\
    where $U = \Rules$ if $\flag = \all$ and $U = \userules(\Pairs,\Rules,\pi) \cup \Ce$ if $\flag = \minimal$;
  \end{itemize}
\item $\{(\Pairs, \Rules, \flag)\}$ otherwise.
\end{itemize}
\end{theorem}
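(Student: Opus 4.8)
The plan is to prove soundness contrapositively: from an infinite $(\Pairs,\Rules)$-chain (minimal when $\flag = \minimal$) I will show that only finitely many steps use a pair from $\Pairs^\succ$, so that its tail is an infinite $(\Pairs \setminus \Pairs^\succ,\Rules)$-chain. This tail reuses the \emph{original} substitutions, so it is still minimal when $\flag = \minimal$; its existence contradicts finiteness of the single output problem, giving soundness. The tool for bounding the strict steps is the usual reduction-pair measure: I build terms $t_i := \filter(\ell_i)\delta_i$ for substitutions $\delta_i$ chosen below, with $t_i \succsim t_{i+1}$ for every $i$ and $t_i \succ t_{i+1}$ whenever $\ell_i \arrz r_i \in \Pairs^\succ$; well-foundedness of $\succ$ together with compatibility $\succ \cdot \succsim\ \subseteq\ \succ$ then forces finitely many strict steps.

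Stability of the reduction pair and the pair constraints immediately yield $\filter(\ell_i)\delta_i \succ \filter(r_i)\delta_i$ (respectively $\succsim$), so the crux is to realise $\filter(r_i)\delta_i \succsim \filter(\ell_{i+1})\delta_{i+1}$. By monotonicity and stability of $\succsim$ this follows as soon as the reduction from $\filter(r_i)\delta_i$ to $\filter(\ell_{i+1})\delta_{i+1}$ uses only rules oriented by $\succsim$. Hence everything reduces to the following reduction claim: produce substitutions $\delta_i$ so that, for each $i$, there is a reduction $\filter(r_i)\delta_i \arrr{\filter(U)} \filter(\ell_{i+1})\delta_{i+1}$ that is a \emph{formative} $\filter(\ell_{i+1})$-reduction; by Definition~\ref{def:semantic} this confines it to $\formrules(\filter(\ell_{i+1}),\filter(U)) \subseteq \formrules(\filter(\Pairs),\filter(U))$, whose rules are oriented by hypothesis.

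For $\flag = \all$ (so $U = \Rules$) this is direct. Filtering respects rewriting, so $[(\filter(\ell_i) \arrz \filter(r_i),\gamma_i') \mid i \in \N]$, with $\gamma_i'(\avar) = \filter(\gamma_i(\avar))$, is a $(\filter(\Pairs),\filter(\Rules))$-chain. Applying Lemma~\ref{lem:changechain} inside the filtered world produces substitutions $\delta_i$ turning it into a formative $(\filter(\Pairs),\formrules(\filter(\Pairs),\filter(\Rules)))$-chain, which is exactly the reduction claim. For $\flag = \minimal$ (so $U = \userules(\Pairs,\Rules,\pi)\cup\Ce$) I first invoke the usable-rules argument underlying Theorem~\ref{thm:userules}: minimality makes each $r_i\gamma_i$ terminating, and the standard $\Ce$-interpretation that freezes subterms at non-usable positions lets me replace, after filtering, each reduction over $\filter(\Rules)$ by one over $\filter(U)$ between the corresponding filtered terms, the endpoints remaining instances of $\filter(\ell_{i+1})$. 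This puts me in the same situation as the $\all$-case but with rule set $\filter(U)$, so a final application of Lemma~\ref{lem:base} to each of these $\filter(U)$-reductions makes them formative, landing in $\formrules(\filter(\Pairs),\filter(U))$.

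The main obstacle is the clean interleaving of the usable-rules and formative-rules transformations in the minimal case. The delicate point is that Lemma~\ref{lem:base} replaces a substitution $\gamma_{i+1}$ by some $\delta_{i+1}$ with $\delta_{i+1} \arrr{\Rules} \gamma_{i+1}$, which can destroy minimality (as the counterexample just before this theorem shows); the formative transformation must therefore be applied \emph{after} the usable-rules interpretation has already discharged the need for minimality, i.e.\ to reductions already known to lie over $\filter(U)$. Since Lemma~\ref{lem:base} carries no termination hypothesis, it applies freely there and the two transformations compose. I would take particular care that the endpoint $\filter(\ell_{i+1})\delta_{i+1}$ produced at step $i$ is exactly the term that begins step $i+1$ (precisely as the inductive bookkeeping of Lemma~\ref{lem:changechain} guarantees through the invariant $\delta_{i+1} \arrr{} \gamma_{i+1}$), so that the single measure sequence $(t_i)_{i \in \N}$ is globally consistent and the well-foundedness argument goes through.
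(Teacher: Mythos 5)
Your proposal is correct and follows essentially the same route as the paper's own proof: first transform the chain into a filtered $(\filter(\Pairs),\filter(U))$-chain (directly for $\flag=\all$, via the $\Ce$-interpretation of Lemma~\ref{lem:usablechain} for $\flag=\minimal$), then apply Lemma~\ref{lem:changechain} to make it formative, and conclude with the standard well-foundedness argument. You also correctly identify the one genuinely delicate point -- that the formative transformation must be applied \emph{after} the usable-rules interpretation because it destroys minimality -- which is exactly the order the paper uses.
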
\pagebreak

\smallskip\noindent
\emph{Proof Sketch:}
Given an infinite $(\Pairs,\Rules)$-chain, we use argument filterings
and maybe usable rules to obtain a
$(\filter(\Pairs),\filter(U))$-chain which uses the same dependency
pairs infinitely often (as in \cite{gie:thi:sch:fal:06});
using Lemma~\ref{lem:changechain} we turn this chain formative.
\qed

\medskip
Note that we use the argument filtering here in a slightly different
way than for usable rules: rather than including $\pi$ in the
definition of $\formrules$ and requiring that $\filter(\ell) \succsim
\filter(r)$ for $\ell \arrz r \in \formrules(\Pairs,\Rules,\pi)$, we
simply use $\formrules(\filter(\Pairs),\filter(\Rules))$.
\confreport{For space reasons, we}{We} give
additional semantic and syntactic definitions of formative rules with
respect to an argument filtering 
in \confreport{the technical
report~\cite[Appendix~C]{techreportversion}}{Appendix~\ref{sec:wrt}}.

\begin{example}\label{ex:runningredpair}
To handle $(\Pairs,Q,\minimal)$ from Example~\ref{ex:runningdrop}, we
can alternatively use a reduction pair.  Using the trivial
argument filtering, with a polynomial interpretation with $\up{\mbig}
(\avar,\bvar) = \avar + \bvar,\ \ack(\avar,\bvar) = 0,\ 
\upd(\avar) = \avar$ and $\cons(\avar,\bvar) = \bvar + 1$, all
constraints are oriented, and we may remove the only element of
$\Pairs$.

Note that we \emph{could} have handled this example without using
formative rules; $\ack$ and $\random$ can be oriented with an
extension of $\succsim$, or we might use an argument filtering
with $\pi(\up{\mbig}) = \{2\}$.
Both objections could be cancelled by adding extra rules, but we
kept the example short, as it suffices to illustrate the method.
\end{example}

\paragraaf{Discussion}
It is worth noting the parallels between formative and usable rules.
To start, their definitions are very similar; although we did not
present the seman-\linebreak
tic definition of usable rules from~\cite{thi:07}
(which is only used for \emph{innermost} termina\-tion),
the syntactic definitions are almost symmetric.  Also the usage
corresponds: in both cases, we lose minimality when using the direct
rule removing processor, but can safely use the restriction in a
reduction pair (with argument filterings).

There are also differences, however.  The transformations used to
turn a chain usable or formative are very different, with the usable
rules transformation (which we did not discuss) encoding subterms
whose root is not usable, while the formative rules transformation is
simply a matter of postponing reduction steps.

Due to this difference,
usable rules are useful only for a finitely branching system (which
is standard, as all finite MTRSs are finitely branching); formative
rules are useful mostly for left-linear systems (also usual,
especially in MTRSs originating from functional programming, but
typically seen as a larger restriction).
Usable rules introduce the extra $\Ce$ rules, while formative rules
are all included in the original rules.  But for formative rules, even
definitions extending $\baseformrules$, necessarily all collapsing
rules are included, which has no parallel in usable rules;
the parallel of collapsing rules would be rules $\avar
\arrz r$, which are not permitted.

\smallskip
To use formative rules without losing minimality
information, 
an alternative to\linebreak
Theorem~\ref{thm:processor}
allows us to permanently
delete rules.
The trick is to add a new component to 
DP problems, as
for higher-order rewriting in~\cite[Ch.~7]{kop:12}.
A DP problem
 becomes
a tuple $(\Pairs,\Rules,\flag_1,\flag_2)$, with
$\flag_1 \in \{\minimal,
\all\}$ and
$\flag_2 \in \{\formative,\arbitrary\}$,
and is\linebreak finite if there is no infinite $(\Pairs,\Rules)$-chain which is
minimal if $\flag_1 = \minimal$, and formative if $\flag_2 =
\formative$.
By Theorem~\ref{thm:makechain}, $\Rules$ is terminating iff
$(\DP(\Rules),\Rules,\minimal,\formative)$ is finite.

\begin{theorem}\label{thm:flagprocessor}
In the extended 
DP framework, the processor which maps
$(\Pairs,\Rules,\flag_1,\linebreak
\flag_2)$ to $\{(\Pairs,
\formrules(\Pairs,\Rules),\flag_1,\flag_2)\}$ if $\flag_2 =
\formative$ and $\{(\Pairs,\Rules,\flag_1,\flag_2)\}$
otherwise, is sound.
\end{theorem}
\noindent
\emph{Proof:}
This follows immediately from Lemma~\ref{lem:redtorules}.
\qed
\pagebreak

The downside of changing the DP framework in this way
is that we have to revisit all existing DP processors to
see how they interact with the formative flag.
In many cases, we can simply pass the flag on unmodified (i.e. if
$\mathit{proc}((\Pairs,\Rules,\flag_1)) = A$, then $\mathit{proc}'((
\Pairs,\Rules,\flag_1,\flag_2)) = \{ (\Pairs',\Rules',\flag_1',
\flag_2) \mid (\Pairs',\Rules',\flag_1') \in A \}$).  This is for
example the case for processors with reduction pairs (like the one
in Theorem~\ref{thm:redpairprocessor}), the dependency graph and the
subterm criterion.  Other processors would have to be checked
individually, or reset the flag to $\mathsf{arbitrary}$ by default.

Given how long the dependency pair framework has existed (and how
many processors have been defined, see e.g.~\cite{thi:07}), and that
the formative flag clashes with the component for innermost rewriting
(see \secshort~\ref{sec:innermost}), it is unlikely that many tool
programmers will make the effort for a single rule-removing processor.

\section{Handling the Collapsing Rules Problem}
\label{sec:unsorted}

A great weakness
of the formative rules method is the
matter of collapsing rules.\linebreak  Whenever the left-hand side of a
dependency pair or formative rule has a symbol $\afun : [\seq{
\atype}] \decpijl \btype$, 
all collapsing rules of sort
$\btype$ are formative.  And
then all \emph{their} formative rules
are also formative.  
Thus,
this 
often leads to the inclusion of all rules of a given
sort.  In particular for systems with only one sort (such as
all first-order benchmarks
in the Termination Problems Data Base), this is problematic.

For this reason, we will consider a new notion, building on the idea
of formative rules and reductions.  This notion is based on the
observation that it might suffice to include \emph{composite rules}
rather than the formative rules of all collapsing rules.
To illustrate the idea, assume given a uni-sorted system with
rules $\symb{a} \arrz \symb{f}(\symb{b})$ and
$\symb{f}(\avar) \arrz \avar$.
$\baseformrules(\symb{c})$ includes 
$\symb{f}(\avar) \arrz \avar$, 
so
also $\symb{a} \arrz
\symb{f}(\symb{b})$.  But 
a term $\symb{f}(\symb{b})$ does
not reduce to $\symb{c}$.  So intuitively, we should not really need
to include the first rule.

Instead of including the formative rules of all collapsing rules, we
might imagine a system where we \emph{combine} rules with collapsing
rules that could follow them.  In the example above, this gives
$\Rules = \{ \symb{a} \arrz \symb{f}(\symb{b}),\ \symb{a}
\arrz \symb{b},\ \symb{f}(\avar) \arrz \avar \}$.
Now we might consider an alternative definition of formative rules,
where we still need to include the collapsing rule $\symb{f}(\avar)
\arrz \avar$, but no longer need to have $\symb{a} \arrz
\symb{f}(\symb{b})$.

To make this idea formal, we first consider how rules can be combined.
In the following, we consider systems with \emph{only one sort}; this
is needed for the definition to be well-defined, but can always be
achieved by replacing all sorts by $\symb{o}$.

\begin{definition}[Combining Rules]\label{def:combrules}
Given an MTRS $(\Sigma,\Rules)$, let
  $A := \{ \afun(\seq{\avar}) \arrz \avar_i \mid \afun : [\asort_1
  \times \ldots \times \asort_n] \decpijl \bsort
  \in \Sigma \wedge 1 \leq i \leq n \}$ and
  $B := \{ \ell \arrz p \mid \ell \arrz r \in \Rules \wedge r
  \suptermeq p \}$.
Let $X \subseteq A \cup B$ be the smallest set such that
  $\Rules \subseteq X$ and
  for all $\ell \arrz r \in X$:
  \begin{enumerate}[a.]
  \item\label{it:split}
    if $r$ is a variable, $\ell \suptermeq \afun(l_1,\ldots,
    l_n)$ and $l_i \suptermeq r$, then $\afun(\avar_1,\ldots,\avar_n)
    \arrz \avar_i \in X$;
  \item\label{it:combine}
    if $r = \afun(r_1,\ldots,r_n)$ and $\afun(\avar_1,\ldots,
    \avar_n) \arrz \avar_i \in X$, then $\ell \arrz r_i \in X$.
  \end{enumerate}
Let $\collapsing := A \cap X$ and $\noncollapsing = \{ \ell \arrz r
\in X \mid r$ not a variable$\}$.
Let $A_\Rules := \collapsing \cup \noncollapsing$.
\end{definition}

It is easy to see that $\arrr{\Rules}$ is included in
$\arrr{A_\Rules}$: all non-collapsing rules of $\Rules$ are in
$\noncollapsing$, and all collapsing rules are obtained as a
concatenation of steps in $\collapsing$.

\begin{example}\label{ex:AR}
Consider an unsorted version of Example~\ref{ex:running}.  Then for
$(\Pairs,Q)$ as in Example~\ref{ex:runningdrop}, we have $U :=
\userules(\Pairs,Q) = \{1,2,3,5,8,10,11\}$.  Unfortunately, \pagebreak only (3)\linebreak
is not formative, as the two $\random$ rules cause inclusion of all
rules in $\baseformrules(\suc(\avar),U)$.
Let us instead calculate $X$, which we do as an iterative procedure
starting from $\Rules$.  In the following, $C \Rightarrow
D_1,\ldots,D_n$ should be read as:
 ``by requirement \ref{it:split},
rule $C$\linebreak enforces inclusion of each $D_i$'', and $C,D \Rightarrow E$
similarly refers to requirement \ref{it:combine}.
\[
\begin{array}{rclrclrclrclrcl}
2,1 & \Rightarrow & 12 &
5,13 & \Rightarrow & 14 &
10,15 & \Rightarrow & 16 &
16,13 & \Rightarrow & 18 &
17,15 & \Rightarrow & 19 \\
12 & \Rightarrow & 1,13\ \ \ \ &
14 & \Rightarrow & 15\ \ \ \ &
11,15 & \Rightarrow & 17\ \ \ \ &
18 & \Rightarrow & 15,13\ \ \ \ &
19 & \Rightarrow & 15,13 \\
\end{array}
\]
\[
\begin{array}{rrclrrclrrcl}
12. & \random(\suc(\avar)) & \arrz & \avar\ \  &
15. & \ack(\avar,\bvar) & \arrz & \bvar &
18. & \ack(\suc(\avar),\bvar) & \arrz & \bvar \\
13. & \suc(\avar) & \arrz & \avar &
16. & \ack(\suc(\avar),\bvar) & \arrz & \suc(\bvar) &
19. & \ack(\suc(\avar),\suc(\bvar)) & \arrz & \bvar \\
14. & \ack(\nul,\bvar) & \arrz & \bvar &
17. & \ack(\suc(\avar),\suc(\bvar)) & \arrz & \ack(\suc(\avar),\bvar) \\
\end{array}
\]
Now $\collapsing = \{1,13,15\}$ and
$\noncollapsing = \{2,3,5,8,10,11,16,17\}$,
and $A_U = \collapsing \cup \noncollapsing$.
\end{example}

Although combining a system $\Rules$ into $A_\Rules$ may create
significantly more rules, the result is not necessarily harder to
handle. 
For many standard reduction pairs, like RPO or linear polynomials
over $\mathbb{N}$,
we have: if $\aterm
\succsim \avar$ where $\avar \in \FV(\aterm)$ occurs exactly once, then
$\afun(\ldots,\bterm,\ldots) \succsim \bterm$ for any
$\bterm$ with $\aterm \suptermeq \bterm \suptermeq \avar$.
For such
a reduction pair, $A_\Rules$ can be oriented whenever $\Rules$ can be
(if $\Rules$ is left-linear).

$A_\Rules$ has the advantage that we never need to follow a
non-collapsing rule $l \arrz \afun(\seq{r})$ by a collapsing step.
This is essential to use the following definition:

\begin{definition}\label{def:splitform}
Let $A$ be a set of rules.
The \emph{split-formative rules} of a term $\bterm$
are defined as the smallest set $\sformrules(\bterm,A) \subseteq A$
such that:
\vspace{-3pt}
\begin{itemize}
\item if $\bterm$ is not linear, then $\sformrules(\bterm,A) = A$;
\item \fbox{all collapsing rules in $A$ are included in $\sformrules(\bterm,A)$;}
\item if $\bterm = \afun(\bterm_1,\ldots,\bterm_n)$, then
  $\sformrules(\bterm_i,A) \subseteq \sformrules(\bterm,A)$;
\item if $\bterm = \afun(\bterm_1,\ldots,\bterm_n)$, then
  $\{ \ell \to r \in A \mid r$ has the form $\afun(\ldots) \}
  \subseteq \sformrules(\bterm,A)$;
\item if $\ell \arrz r \in \sformrules(\bterm,A)$ \fbox{and $r$ is not a variable}, then
  $\sformrules(\ell,A) \subseteq \sformrules(\bterm,A)$.
\end{itemize}
\vspace{-3pt}
For a set of rules $\Pairs$, we define $\sformrules(\Pairs,A) =
\bigcup_{s \to t \in \Pairs} \sformrules(s,A)$.
\end{definition}

Definition~\ref{def:splitform} is an alternative definition of
formative rules, where collapsing rules have a smaller effect
(differences to Definition~\ref{def:form} are \fbox{highlighted}).
$\sformrules$ is \emph{not} a formative rules approximation, as
shown by the $\symb{a}$-formative reduction
$\symb{f}(\symb{a}) \arr{\Rules} \symb{g}(\symb{a}) \arr{\Rules}
\symb{a}$ with $\Rules = \{\symb{f}(\avar) \arrz \symb{g}(\avar),\ 
\symb{g}(\avar) \arrz \avar \}$
but $\sformrules(\symb{a},\Rules) = \{\symb{g}(\avar) \arrz \avar\}$.
How-\linebreak
ever, given the relation between $\Rules$ and $A_\Rules$, we
find a similar result to Lemma~\ref{lem:redtorules}:

\begin{lemma}\label{lem:redtorulesalternative}
Let $(\Sigma,\Rules)$ be an MTRS.
If $\aterm \arrr{\Rules} \ell\asub$ by a formative $\ell$-reduction,
then $\aterm \arrr{\sformrules(\ell,A_\Rules)} \ell\asub$ by a
formative $\ell$-reduction.
\end{lemma}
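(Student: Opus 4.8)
The plan is to mirror the induction behind Lemma~\ref{lem:redtorules}, but carried out on the structure of the given formative $\ell$-reduction $\aterm \arrr{\Rules} \ell\asub$, building at each stage a formative $\ell$-reduction over $A_\Rules$ all of whose rules lie in $\sformrules(\ell,A_\Rules)$ and which keeps the same endpoint $\ell\asub$. The non-linear case is immediate, since then $\sformrules(\ell,A_\Rules) = A_\Rules$ and $\arrr{\Rules}\ \subseteq\ \arrr{A_\Rules}$ by the observation following Definition~\ref{def:combrules}; the variable case is vacuous; and the case $\ell = \afun(\seq{l})$ with no topmost step, where each $\aterm_i \arrr{\Rules} l_i\asub$ is formative, follows from the induction hypothesis together with the monotonicity clause $\sformrules(l_i,A_\Rules)\subseteq\sformrules(\ell,A_\Rules)$.

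The first substantial case is a topmost step with a rule $\ell' \arrz r' \in \Rules$ whose right-hand side has shape $\afun$, so $r'\delta = \afun(\seq{\bterm})$ and each $\bterm_i \arrr{\Rules} l_i\asub$ is formative. If $r'$ is not a variable, then $r' = \afun(r'_1,\ldots,r'_n)$, so $\ell' \arrz r'$ is non-collapsing, hence in $\noncollapsing \subseteq A_\Rules$, and it has the form $\afun(\ldots)$, so it lies in $\sformrules(\ell,A_\Rules)$. Since $r'$ is not a variable, the last clause of Definition~\ref{def:splitform} gives $\sformrules(\ell',A_\Rules) \subseteq \sformrules(\ell,A_\Rules)$, so the induction hypothesis converts the sub-reduction $\aterm \arrr{\Rules} \ell'\delta$ without leaving $\sformrules(\ell,A_\Rules)$; the argument sub-reductions are handled as before, and we reassemble by clause~(4), reaching $\afun(\seq{\bterm}) \arrr{} \afun(l_1\asub,\ldots,l_n\asub) = \ell\asub$.

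The crux is a collapsing topmost step, $r' = y$ with $\delta(y) = \afun(\seq{\bterm})$. Here we may not keep $\ell' \arrz y$, and the last clause of Definition~\ref{def:splitform} does \emph{not} fire, so we cannot appeal to $\sformrules(\ell',A_\Rules)$. The idea is to exploit $A_\Rules$ instead. By requirement~(\ref{it:split}) of Definition~\ref{def:combrules}, the single step $\ell'\delta \arr{\Rules} \delta(y)$ is simulated by a chain of projection steps from $\collapsing$ descending along a position where $y$ occurs in $\ell'$; each such projection is a collapsing rule of $A_\Rules$ and so is contained in $\sformrules(\ell,A_\Rules)$ by the second clause of Definition~\ref{def:splitform}. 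The remaining, and main, obstacle is the sub-reduction $\aterm \arrr{\Rules} \ell'\delta$: its rules lie in $\sformrules(\ell',A_\Rules)$, which need not sit inside $\sformrules(\ell,A_\Rules)$. The resolution is that we never have to first build the spine of $\ell'\delta$ non-collapsingly and then project through it: by requirement~(\ref{it:combine}) of Definition~\ref{def:combrules}, whenever a non-collapsing rule $\hat\ell \arrz \bfun(\seq{r})$ feeds a subterm into which the projection chain then enters at argument $i$, it may be replaced by the combined rule $\hat\ell \arrz r_i \in A_\Rules$, whose right-hand side has exactly the shape required to create the next position. Pushing this combination through the formative $\ell'$-reduction, by an auxiliary induction on the depth of the projection chain (equivalently, on the formative $\ell'$-reduction itself), reorganises $\aterm \arrr{\Rules} \ell'\delta \arr{\Rules} \delta(y)$ into a formative reduction over $A_\Rules$ in which every topmost step is either a projection from $\collapsing$ or a combined non-collapsing rule of the right shape, so that all rules used lie in $\sformrules(\ell,A_\Rules)$. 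This is precisely the motivating combination discussed before Definition~\ref{def:combrules}, where $\Ff(\avar) \arrz \avar$ turns $\Fa \arrz \Ff(\Fb)$ into $\Fa \arrz \Fb$, seen here in its general form.
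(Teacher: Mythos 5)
Your overall strategy is the paper's: pass to $A_\Rules$, keep non-collapsing root steps (which then fall under the last clause of Definition~\ref{def:splitform}), and eliminate collapsing root steps by either expanding them into projections from $\collapsing$ or absorbing them into the preceding non-collapsing root step via requirement~\ref{it:combine} of Definition~\ref{def:combrules}. You also correctly isolate the crux — the prefix $\aterm \arrr{\Rules} \ell'\delta$ of a collapsing root step is the part not covered by $\sformrules(\ell,A_\Rules)$ — and you implicitly use the fact that, by formativeness, no internal reduction separates the spine-creating step from the projection that consumes it. That much matches the paper.

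The genuine gap is the induction measure. When you combine $\hat\ell \arrz \bfun(\seq{r})$ with the projection $\bfun(\vec{\avar}) \arrz \avar_i$, the resulting rule $\hat\ell \arrz r_i$ lies in $X$ but is in $A_\Rules$ only when $r_i$ is \emph{not} a variable. When $r_i$ \emph{is} a variable, the combined rule is collapsing, is not an element of $\collapsing \cup \noncollapsing$, and must itself be re-expanded (via requirement~\ref{it:split}) into a fresh chain of projections descending through $\hat\ell$ — a chain that may well be \emph{longer} than the one you just consumed. So neither ``depth of the projection chain'' nor ``the formative $\ell'$-reduction itself'' decreases: the first can grow, and the second is being rewritten rather than decomposed into sub-derivations, so structural induction on it does not apply to the transformed reduction. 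This is exactly why the paper's proof uses a lexicographic measure on (i) the number of non-collapsing steps, (ii) the length of the reduction, and (iii) the size of $\aterm$: if $r_i$ is not a variable the combination preserves (i) and strictly decreases (ii); if $r_i$ is a variable the re-expansion into projections may lengthen the reduction but strictly decreases (i). Without something equivalent to this measure, your ``auxiliary induction'' does not terminate, and that measure is really the technical heart of the proof; the rest of your outline is in order.
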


Unlike Lemma~\ref{lem:redtorules}, the altered reduction might be
different.  We also do not have that $\sformrules(\Pairs,A_\Rules)
\subseteq \Rules$.  Nevertheless, by this lemma we can use
split-formative rules in reduction pair processors with formative
rules, such as Theorem~\ref{thm:redpairprocessor}.

\medskip\noindent
\emph{Proof Sketch:}
The original reduction $\aterm \arrr{\Rules} \ell\asub$ gives rise to
a formative reduction over $A_\Rules$, simply replacing collapsing
steps by a sequence of rules in $\collapsing$.
So, we assume given a formative $\ell$-reduction over $A_\Rules$, and
prove with induction first on the number of non-collapsing steps in
the reduction, second on the length of the reduction, third on the
size of $\aterm$, that $\aterm \arrr{\sformrules(\ell,A_\Rules)}
\ell\asub$ by a formative $\ell$-reduction. 

This is mostly easy with the induction hypotheses; note that if a
root-rule in $\noncollapsing$ is followed by a rule in $\collapsing$,
there can be no internal $\arrr{\Rules}$ reduction \pagebreak in between (as
this would not be a formative reduction); combining a rule in
$\noncollapsing$ with a rule in $\collapsing$ gives either a rule in
$\noncollapsing$ (and a continuation with the second induction
hypothesis) or a sequence of rules in $\collapsing$ (and the first
induction hypothesis). \qed

\medskip
Note that this method unfortunately does not transpose directly to
the higher-order setting, where collapsing rules may have more
complex forms.
We also had to give up sort differentiation, as otherwise we might
not be able to flatten a rule $\afun(\bfun(\avar)) \arrz \avar$ into
$\afun(\avar) \arrz \avar,\ \bfun(\avar) \arrz \avar$.  This is not
such a great problem, as reduction pairs typically do not care about
sorts, and we circumvented the main reason why sorts are important
for formative rules.
We have the following result:

\begin{theorem}
\label{thm:altfilter}
Let $(\succsim,\succ)$ be a reduction pair and $\pi$ an argument
filtering.
The processor which maps a DP problem $(\Pairs,\Rules,\flag)$
to the following result is
sound:
\begin{itemize}
\item $\{(\Pairs \setminus \Pairs^\succ,\Rules,\flag)\}$ if:
  \begin{itemize}
  \item $\filter(\ell) \succ \filter(r)$ for $\ell \arrz r \in \Pairs^\succ$ and
  $\filter(\ell) \succsim \filter(r)$ for $\ell \arrz r \in \Pairs \setminus \Pairs^\succ$;
  \item $u \succsim v$ for $u \arrz v \in \sformrules(\filter(\Pairs),A_{\filter(U)})$
    and $\FV(t) \subseteq \FV(s)$ for $s \arrz t \in \filter(U)$,
    where $U = \Rules$ if $\flag = \all$ and $U =
    \userules(\Pairs,\Rules,\pi)
    \cup \Ce$ if $\flag = \minimal$;
  \end{itemize}
\item $\{(\Pairs, \Rules, \flag)\}$ otherwise.
\end{itemize}
\end{theorem}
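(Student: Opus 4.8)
The plan is to reuse the proof of Theorem~\ref{thm:redpairprocessor} essentially verbatim up to the point where a filtered formative chain has been produced, and then to insert Lemma~\ref{lem:redtorulesalternative} as one further transformation step. I would argue soundness contrapositively: assuming $(\Pairs,\Rules,\flag)$ is not finite, I produce an infinite $(\Pairs\setminus\Pairs^\succ,\Rules)$-chain that is minimal whenever $\flag=\minimal$. Starting from an infinite $(\Pairs,\Rules)$-chain $\rijtje{(\ell_i \arrz r_i,\gamma_i) \mid i \in \N}$ (minimal if $\flag=\minimal$), I apply the argument filtering $\filter$ together with the usable-rules transformation exactly as in the proofs of Theorems~\ref{thm:userules} and~\ref{thm:redpairprocessor}. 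The usable-rules step is legitimate only under minimality, which is precisely why one takes $U=\userules(\Pairs,\Rules,\pi)\cup\Ce$ when $\flag=\minimal$ and $U=\Rules$ (skipping it) when $\flag=\all$. This yields an infinite $(\filter(\Pairs),\filter(U))$-chain on the same sequence of pairs, and then Lemma~\ref{lem:changechain} supplies substitutions $\delta_i$ turning it formative, so that each connecting step $\filter(r_i)\delta_i \arrr{\filter(U)} \filter(\ell_{i+1})\delta_{i+1}$ is a formative $\filter(\ell_{i+1})$-reduction.

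The new step is to push each such reduction over $A_{\filter(U)}$. The hypothesis $\FV(t)\subseteq\FV(s)$ for all $s\arrz t\in\filter(U)$ makes $\filter(U)$ a genuine MTRS once all sorts are collapsed to $\oo$ (as Definition~\ref{def:combrules} demands), so Lemma~\ref{lem:redtorulesalternative} applies with the role of $\Rules$ played by $\filter(U)$. Since a reduction using only rules of $\formrules(\filter(\ell_{i+1}),\filter(U))\subseteq\filter(U)$ is in particular a formative $\filter(\ell_{i+1})$-reduction over $\filter(U)$, the lemma gives, with unchanged endpoints, $\filter(r_i)\delta_i \arrr{\sformrules(\filter(\ell_{i+1}),A_{\filter(U)})} \filter(\ell_{i+1})\delta_{i+1}$ by a formative reduction. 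Every rule occurring here lies in $\sformrules(\filter(\ell_{i+1}),A_{\filter(U)})\subseteq\sformrules(\filter(\Pairs),A_{\filter(U)})$, hence is oriented by $\succsim$.

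It then remains to run the standard well-foundedness argument. Monotonicity and stability of $\succsim$ give $\filter(r_i)\delta_i\succsim\filter(\ell_{i+1})\delta_{i+1}$ for all $i$, while the pair constraints give $\filter(\ell_i)\delta_i\succ\filter(r_i)\delta_i$ when $\ell_i\arrz r_i\in\Pairs^\succ$ and $\filter(\ell_i)\delta_i\succsim\filter(r_i)\delta_i$ otherwise. Composing along the chain and using compatibility $\succ\cdot\succsim\ \subseteq\ \succ$, any pair of $\Pairs^\succ$ used at infinitely many indices would produce an infinite $\succ$-descending sequence, contradicting well-foundedness. Hence such pairs occur only finitely often; dropping the finite prefix up to the last such index leaves a suffix of the \emph{original} chain that uses only pairs of $\Pairs\setminus\Pairs^\succ$, is still minimal when $\flag=\minimal$, and is a genuine $(\Pairs\setminus\Pairs^\succ,\Rules)$-chain. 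This proves soundness.

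The main obstacle is the passage to $A_{\filter(U)}$. The combining construction and Lemma~\ref{lem:redtorulesalternative} both presuppose a single-sorted, properly formed MTRS, whereas argument filtering can introduce right-hand-side variables absent on the left; the hypothesis $\FV(t)\subseteq\FV(s)$ for $s\arrz t\in\filter(U)$ is exactly what repairs this, and I would take care to show it is both necessary and sufficient for the above to go through. The remaining points --- that a formative reduction over $\formrules(\cdot,\filter(U))$ is still one over $\filter(U)$, that the usable-rules transformation is invoked only for $\flag=\minimal$, and that tail extraction preserves $\flag$ --- are routine.
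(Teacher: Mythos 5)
Your proposal is correct and follows essentially the same route as the paper: filter the chain (via the usable-rules transformation when $\flag=\minimal$, via Lemma~\ref{lem:filterreduce} otherwise), make it formative with Lemma~\ref{lem:changechain}, and then convert each connecting reduction to one over $\sformrules(\filter(\Pairs),A_{\filter(U)})$ using Lemma~\ref{lem:redtorulesalternative}, whose applicability is exactly what the hypothesis $\FV(t)\subseteq\FV(s)$ for $s\arrz t\in\filter(U)$ secures. Your observation about why that variable condition is needed matches the remark in the paper's proof of Lemma~\ref{lem:redtorulesalternative}.
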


\noindent
\emph{Proof Sketch:}
Like Theorem~\ref{thm:redpairprocessor}, but using
Lemma~\ref{lem:redtorulesalternative} to alter the created
formative $(\filter(\Pairs),\filter(U))$-chain to a split-formative
$(\filter(\Pairs),\sformrules(\filter(\Pairs),A_{\filter(U)}))$-chain.
\qed

\begin{example}
Following Example~\ref{ex:AR}, $\sformrules(\up{\mbig}(\avar,
\cons(\bvar,\cvar)) \arrz \up{\mbig}(\ack(\avar,\bvar),\linebreak
\upd(\cvar)),A_U) = \collapsing \cup \{8\}$, and 
Theorem~\ref{thm:altfilter}
gives an easily orientable problem.
\end{example}

\section{Formative Rules for Innermost Termination}\label{sec:innermost}

So far, we have considered only \emph{full termination}.  A very
common related query is \emph{innermost termination}; that is,
termination of $\arr{\Rules}^{\mathsf{in}}$, defined by:
\begin{itemize}
\item $\afun(\seq{l})\gamma \arrin{\Rules} r\gamma$ if $\afun(\seq{l})
  \arrz r \in \Rules$,\ $\gamma$ a substitution and all
  $l_i\gamma$ in normal form;
\item $\afun(\aterm_1,\ldots,\aterm_i,\ldots,\aterm_n) \arrin{\Rules}
  \afun(\aterm_1,\ldots,\aterm_i',\ldots,\aterm_n)$ if
  $\aterm_i \arrin{\Rules} \aterm_i'$.
\end{itemize}
The innermost reduction relation
is often used in for instance program analysis.

An innermost strategy can be included in the dependency pair framework
by adding 
the $\innermost$ flag \cite{gie:thi:sch:fal:06} to DP problems
(or, more generally, a component $\QQ$ \cite{gie:thi:sch:05:2}
which indicates that when reducing
any term with $\arr{\Pairs}$ or $\arr{\Rules}$, its strict subterms
must be normal with respect to $\QQ$).
Usable rules are more viable for innermost than normal termination:
we do not need minimality,
the $\Ce$ rules do not need to be 
handled by the reduction pair, and we
can define a sound 
processor that maps $(\Pairs,\Rules,
\flag,\innermost)$ to $\{(\Pairs,\userules(\Pairs,\Rules),\flag,
\innermost)\}$.

This is not the case for formative rules.
Innermost
reductions 
eagerly evaluate arguments, 
yet formative reductions
postpone 
evaluations
as long as possible.  In a way, these
are exact opposites.
 Thus, it should not be
surprising that
formative rules are \emph{weaker} for innermost termination than for
full termination. 
  Theorem~\ref{thm:makechain} has no counterpart 
  for $\arr{\Rules}^{\mathsf{in}}$;
  for
  innermost termination
  we must start the 
  DP
  framework with
  $(\Pairs,\Rules,\minimal,\innermost)$, not with
  $(\Pairs,\formrules(\Pairs,\Rules),\minimal,\innermost)$.
  Theorem~\ref{thm:processor} is only sound if the innermost flag
  is removed: $(\Pairs,\Rules,\flag,\innermost)$ is
  mapped to $\{(\Pairs,\formrules(\Pairs,\Rules),\all,\arbitrary)\}$.
Still,
we \emph{can} safely use formative rules with
reduction pairs.
For example, we obtain this variation of
Theorem~\ref{thm:redpairprocessor}: \pagebreak

\begin{theorem}\label{thm:innermostredpair}
Let $(\succsim,\succ)$ be a reduction pair and $\pi$ an argument
filtering.
The processor which 
maps a DP problem $(\Pairs,\Rules,\flag_1,
\flag_2)$
to
the following result is sound:
\begin{itemize}
\item $\{(\Pairs \setminus \Pairs^\succ,\Rules,\flag_1,\flag_2)\}$ if:
  \begin{itemize}
  \item $\filter(\ell) \succ \filter(r)$ for $\ell \arrz r \in \Pairs^\succ$ and
  $\filter(\ell) \succsim \filter(r)$ for $\ell \arrz r \in \Pairs \setminus \Pairs^\succ$;
  \item $u \succsim v$ for $u \arrz v \in \formrules(\filter(\Pairs),\filter(U))$, where $U$ is:
      $\userules(\Pairs,\Rules,\pi)$ if $\flag_2 = \innermost$;
      otherwise
        $\userules(\Pairs,\Rules,\pi) \cup \Ce$ if 
        $\flag_1 = \minimal$;
      otherwise
      $\Rules$.
  \end{itemize}
\item $\{(\Pairs, \Rules, \flag_1, \flag_2)\}$ otherwise.
\end{itemize}
\end{theorem}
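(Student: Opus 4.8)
The plan is to follow the structure of the proof of Theorem~\ref{thm:redpairprocessor}, adapting only the usable-rules step so that it respects the reduction strategy recorded by $\flag_2$. Soundness is equivalent to its contrapositive, so, assuming the two orientation conditions hold, I will show that any infinite $(\Pairs,\Rules)$-chain having the properties demanded by $(\flag_1,\flag_2)$ (minimal if $\flag_1 = \minimal$, innermost if $\flag_2 = \innermost$) gives rise to an infinite $(\Pairs \setminus \Pairs^\succ,\Rules)$-chain with the same properties. Crucially, the original chain itself, restricted to a suitable tail, will serve as this witness; the filtered and formative machinery is used only as an analysis tool, to bound how often pairs of $\Pairs^\succ$ can occur.

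First I would apply the argument filtering $\pi$ to pass from the given chain to a chain over $(\filter(\Pairs),\filter(U))$ that uses the same dependency pairs in the same order, exactly as in~\cite{gie:thi:sch:fal:06}. The choice of $U$ is dictated by the flags. When $\flag_2 = \innermost$, the innermost usable-rules argument of~\cite{gie:thi:sch:fal:06,hir:mid:07:1} applies and yields $U = \userules(\Pairs,\Rules,\pi)$ with no need for minimality or the $\Ce$ rules; when the strategy is full and $\flag_1 = \minimal$, the standard full-rewriting usable-rules transformation requires $U = \userules(\Pairs,\Rules,\pi) \cup \Ce$; and when the strategy is full with $\flag_1 = \all$, no usable-rules reduction is available, so I take $U = \Rules$ and keep the chain unchanged at this step.

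Next I would apply Lemma~\ref{lem:changechain} to this $(\filter(\Pairs),\filter(U))$-chain, obtaining substitutions $\delta_i$ for which the connecting reductions $\filter(r_i)\delta_i \arrr{\formrules(\filter(\ell_{i+1}),\filter(U))} \filter(\ell_{i+1})\delta_{i+1}$ are formative and hence use only rules of $\formrules(\filter(\Pairs),\filter(U))$. Every such rule is weakly decreasing by the second hypothesis, and stability of $\succsim$ then gives $\filter(r_i)\delta_i \succsim \filter(\ell_{i+1})\delta_{i+1}$ for all $i$. Combining this with the orientation of the pairs --- $\filter(\ell_i)\delta_i \succ \filter(r_i)\delta_i$ whenever $\ell_i \arrz r_i \in \Pairs^\succ$ and $\succsim$ otherwise --- and using compatibility ($\succ \cdot \succsim\ \subseteq\ \succ$), I obtain a weakly $\succsim$-descending sequence of filtered terms that descends strictly at every index where a pair of $\Pairs^\succ$ is used. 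Well-foundedness of $\succ$ forces such indices to be finite, so from some point on the original chain uses only pairs in $\Pairs \setminus \Pairs^\succ$; this tail is the desired chain, and it remains minimal or innermost because the original substitutions were never altered.

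I expect the main obstacle to be the strategy-dependent usable-rules step. For $\flag_2 = \innermost$ one must invoke the innermost usable-rules result rather than the full-termination one, carefully checking that dropping $\Ce$ and minimality is justified and that composing this transformation with the formative transformation of Lemma~\ref{lem:changechain} still yields a genuine $(\filter(\Pairs),\filter(U))$-chain. The apparent tension between innermost (eager) evaluation and formative (maximally postponed) reductions is resolved by noting that the formative rewriting is performed only on the \emph{filtered} chain over $\filter(U)$, which carries no strategy constraint, whereas innermost-ness of the output is inherited directly from the untouched original chain.
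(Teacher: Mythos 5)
Your proposal is correct and follows essentially the same route as the paper, whose proof is simply ``the proof of Theorem~\ref{thm:redpairprocessor} still applies; we just ignore that the given chain might be innermost (aside from getting more convenient usable rules)'' --- i.e.\ filter (with the flag-dependent usable rules), make the chain formative via Lemma~\ref{lem:changechain}, and let the reduction pair bound the occurrences of $\Pairs^\succ$ so that the \emph{original, untouched} chain has a tail in $\Pairs \setminus \Pairs^\succ$ retaining minimality and innermost-ness. Your closing observation that the formative transformation acts only on the filtered chain, which carries no strategy constraint, is exactly the point the paper's sketch is making.
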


\noindent
\emph{Proof Sketch:}
The proof of Theorem~\ref{thm:redpairprocessor} still applies; we
just ignore that the given chain might be innermost (aside from
getting more convenient usable rules).
\qed

\medskip
Theorem~\ref{thm:altfilter} extends to innermost termination in a
similar way.

Conveniently,
innermost
termination is \emph{persistent} \cite{fuh:gie:par:sch:swi:11}, so
modifying $\Sigma$  does not alter innermost
termination behaviour, as long as all rules 
stay
well-sorted.
In practice, we could infer a typing with as many different sorts
as possible, and get stronger formative-rules-with-reduction-pair
processors.
With
the \emph{innermost switch processor}~\cite[Thm.~3.14]{thi:07},
which 
in cases 
can set the $\mathsf{innermost}$ flag
on a DP problem, we could also often use this trick even 
for proving
full termination.

In \secshort~\ref{sec:procs}, we used the
extra flag $\flag_2$ as the formative flag.  It 
is not
contradictory to use $\flag_2$ 
in both ways, allowing $\flag_2 \in
\{ \arbitrary, \formative, \innermost \}$,
since it is\linebreak very unlikely for a $(\Pairs,\Rules)$-chain to be both
formative and innermost at once!  When\linebreak  using both extensions of the
DP framework together, termination provers (human or
computer) will, however, sometimes have to make a choice which flag
to add.

\section{Implementation and Experiments}\label{sec:code}

We have performed a preliminary implementation of formative rules in
the termination tool \aprove\ \cite{gie:bro:emm:fro:fuh:ott:plu:sch:str:swi:thi:14}.
Our automation borrows from the usable rules
of~\cite{gie:thi:sch:05:1}\linebreak (see
\confreport{\cite[Appendices B+D]{techreportversion}}{Appendices
  \ref{app:tcap} + \ref{app:aprove_impl}})
and uses a constraint encoding
\cite{cod:gie:sch:thi:12} 
for 
a
combined search for
argument filterings and corresponding formative rules.
While we did not find\linebreak any termination proofs for examples from
the TPDB where none were known before, 
our experiments show
that formative
rules do improve the power of reduction pairs 
for widely used
term orders (e.g., polynomial orders~\cite{lan:79}).
For 
more information,
 see also:
\url{http://aprove.informatik.rwth-aachen.de/eval/Formative}

For instance, we experimented with a configuration where we applied
dependency pairs,
and then alternatingly dependency graph
decomposition and
reduction pairs
with linear polynomials
and coefficients $\leq 3$.
On the TRS Standard category
of the TPDB (v8.0.7) with 1493 examples,
this configuration (without formative rules, but with usable rules
w.r.t.\ an argument filter) shows termination of 
579 examples within a timeout of 60 seconds (on an 
Intel Xeon 5140 at 2.33 GHz). 
With additional formative
rules, our implementation of Theorem~\ref{thm:redpairprocessor}
proved termination of 6 additional
TRSs. 
(We did, however, lose 4 examples to
timeouts, which we believe are due in part to the
currently unoptimised implementation.
)

The split-formative rules from Theorem \ref{thm:altfilter} are not a
subset of $\Rules$,  in contrast to\linebreak the usable rules. Thus, it is
\emph{a priori} not clear how  to combine their
encodings  
w.r.t.\ an argument filtering, and we conducted experiments 
using only the standard usable rules.
Without formative rules, \pagebreak
532 examples are proved terminating. In contrast, adding either the
formative rules of Theorem \ref{thm:redpairprocessor} or
the split-formative rules of Theorem \ref{thm:altfilter}
we solved 6 additional examples each (where Theorem
\ref{thm:redpairprocessor} and Theorem \ref{thm:altfilter}
each had 1 example the other could not solve), 
losing 1 to timeouts.

Finally, we experimented with the improved dependency pair
transformation based on Theorem \ref{thm:makechain},
which drops non-formative rules from $\Rules$.
We applied DPs as the first technique on the 1403 TRSs
from TRS Standard with
at least one DP. This 
reduced the number of rules in the initial DP problem
for 618 of these TRSs, without 
any search problems and
without sacrificing minimality.

Thus, our current impression is that while formative rules are not the
next ``killer technique'', they nonetheless provide additional
power to widely-used orders in an elegant way and reduce the number of
term constraints to be solved in a termination proof.
The examples from the TPDB are all untyped, and we believe
that formative rules may have a greater impact in a typed first-order setting.

\section{Conclusions}\label{sec:future}

In this paper, we have simplified the notion of formative rules
from~\cite{kop:raa:12:1} to the first-order setting, and integrated
it in the \confreport{dependency pair}{DP} framework.  We did so by means of
\emph{formative reductions}, which allows us to obtain a semantic
definition of formative rules (more extensive syntactic definitions
are discussed in~\confreport{\cite{techreportversion}}{the appendix}).

We have defined three processors to use formative rules in the
standard dependency pair framework for full termination: one is a
processor to permanently remove rules, the other two combine
formative rules with a reduction pair.

We also discussed how to strengthen the method by adding a new
flag to the framework -- although doing so might require too many
changes to existing processors and strategies to be considered
worthwhile -- and how we can still use the technique in the
innermost case, and even profit from the innermost setting.

\vspace{-1ex}
\paragraaf{Related Work}
In the first-order 
DP
framework 
two processors
stand out as relevant to formative rules.  The first is, of
course, usable rules;
see \secshort~\ref{sec:procs} for a detailed discussion.
The
second is the \emph{dependency graph}, which determines whether any
two dependency pairs can follow each other in a
$(\Pairs,\Rules)$-chain, and uses this information to eliminate
elements of $\Pairs$, or to split $\Pairs$ in multiple parts.

In state-of-the-art implementations of the dependency graph
(see e.g.~\cite{thi:07}), both left- and right-hand side of
dependency pairs are considered to see whether a pair can be preceded
or followed by another pair.  Therefore it seems quite 
surprising
that the same mirroring was not previously tried for usable rules.

Formative rules \emph{have} been previously defined, for higher-order
term rewriting, in~\cite{kop:raa:12:1}, which introduces a limited
DP framework, with formative rules (but not formative reductions)
included in the definition of a chain:  we simply impose the
restriction    that always $r_i\gamma_i \arrr{\formrules(\Pairs,
\Rules)} \ell_{i+1}\gamma_{i+1}$.  This gives a reduction pair
processor 
which considers only formative rules, although it cannot
be combined with usable rules and argument filterings.
The authors do not yet consider rule removing processors, but if they
did, Theorem~\ref{thm:flagprocessor} would also go through.

In the second author's PhD thesis~\cite{kop:12}, a more complete
higher-order DP framework is considered. Here, we
do\confreport{\pagebreak}{}\ see formative reductions, and a variation of Lemma~\ref{lem:base}
which, however, requires that $\aterm$ is terminating: the proof
style used here does not go through there due to $\beta$-reduction.
Consequently, Lemma~\ref{lem:changechain} does not go through in the
higher-order setting, and there is no counterpart to
Theorems~\ref{thm:processor} or~\ref{thm:redpairprocessor}.  We
\emph{do}, however, have Theorem~\ref{thm:flagprocessor}.
Furthermore, the results of \secshort~\ref{sec:unsorted} are entirely
new to this paper, and do not apply in the higher-order setting,
where 
rules might also have a form $l \arrz \avar \cdot
\aterm_1 \cdots \aterm_n$ (with $\avar$ a variable).

\vspace{-1ex}
\paragraaf{Future Work}
In the future, it would be interesting to look back at higher-order
rewriting, and see whether we can obtain some form of
Lemma~\ref{lem:changechain} after all.  Alternatively, we might be
able to use the specific form of formative chains to obtain formative
(and usable) rules w.r.t.\ an argument filtering.

In the first-order setting, we might turn our attention to
non-left-linear rules.
Here, we could
 think
 for instance
 of renaming
 apart some of these variables; a rule $\afun(\avar,
\avar) \arrz \bfun(\avar,\avar)$ could 
become any of
$\afun(\avar,\bvar) \arrz \bfun(\avar,\bvar),\ 
\afun(\avar,\bvar) \arrz \bfun(\bvar,\avar), \;\ldots$

\bibliographystyle{plain}
\bibliography{references}

\confreport{}{
\newpage

\appendix

\newcommand{\addtop}{\mathsf{addtop}}
\newcommand{\encode}{\mathcal{I}}
\newcommand{\makelist}{\mathsf{makelist}}
\newcommand{\afilterleft}{\filter_{\mathsf{var}}}
\newcommand{\afilterright}{\filter_\top}
\newcommand{\afilter}{\filter_\top}

\section{Omitted Proofs}
\label{app:proofs}

\subsection{Conventions and Basic Proofs}\label{app:basics}

In this appendix, we will use the following conventions:
\begin{itemize}
\item If $A$ is a function mapping terms to terms and $\asub$ is a
  substitution, then $\asub^A$ indicates the substitution on the same
  domain as $\asub$ such that always $\asub^A(\avar) =
  A(\asub(\avar))$.
\item A relation $\arr{\Rules}^=$ indicates $0$ or $1$ steps with
  $\arr{\Rules}$ (this is a sub-relation of $\arrr{\Rules}$).
\item Whenever we write $\pi(\afun) = \{i_1,\ldots,i_k\}$, we
  implicitly mean that $i_1 < \ldots < i_k$.
\item If $\pi(\afun) = \{1,\ldots,\mathit{arity}(\afun)\}$, then
  we identify $\afun_\pi$ and $\afun$.  Here, $\mathit{arity}$
  refers to the number of arguments a function symbol might take:
  if $\afun : [\asort_1 \times \ldots \times \asort_n] \decpijl
  \bsort \in \Sigma$ then $\mathit{arity}(\afun) = n$.
\end{itemize}

The following results about filtering will be useful on several
occasions, as filtering plays an important role in several of our
theorems.

\begin{lemma}\label{lem:filtersubstitute}
For all terms $\aterm$ and substitutions $\gamma$:
$\filter(\aterm\gamma) = \filter(\aterm)\gamma^{\filter}$
\end{lemma}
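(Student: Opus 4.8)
The plan is to prove the identity by structural induction on the term $\aterm$, unfolding the definition of the filtering $\filter$ (given after Definition~\ref{def:ur}) and the definition of $\gamma^{\filter}$ from the conventions above, namely $\gamma^{\filter}(\avar) = \filter(\gamma(\avar))$ on the same domain as $\gamma$.

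For the base case, suppose $\aterm = \avar$ is a variable. If $\avar \in \domain(\gamma)$, then the left-hand side is $\filter(\avar\gamma) = \filter(\gamma(\avar))$, while the right-hand side is $\filter(\avar)\gamma^{\filter} = \avar\gamma^{\filter} = \gamma^{\filter}(\avar) = \filter(\gamma(\avar))$, using that $\filter$ fixes variables. If $\avar \notin \domain(\gamma)$, then $\avar \notin \domain(\gamma^{\filter})$ as well (the domains coincide), so both sides equal $\avar$. Either way the two sides agree.

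For the inductive step, let $\aterm = \afun(\aterm_1,\ldots,\aterm_n)$ with $\pi(\afun) = \{i_1,\ldots,i_k\}$. First I would push the substitution inside to get $\aterm\gamma = \afun(\aterm_1\gamma,\ldots,\aterm_n\gamma)$, then apply the defining clause of $\filter$, obtaining $\filter(\aterm\gamma) = \afun_\pi(\filter(\aterm_{i_1}\gamma),\ldots,\filter(\aterm_{i_k}\gamma))$. Applying the induction hypothesis to each retained argument $\aterm_{i_j}$ rewrites each entry $\filter(\aterm_{i_j}\gamma)$ as $\filter(\aterm_{i_j})\gamma^{\filter}$. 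Reading the right-hand side in the other direction, $\filter(\aterm)\gamma^{\filter} = \afun_\pi(\filter(\aterm_{i_1}),\ldots,\filter(\aterm_{i_k}))\gamma^{\filter}$, and since substitution distributes over the arguments of $\afun_\pi$, this is exactly $\afun_\pi(\filter(\aterm_{i_1})\gamma^{\filter},\ldots,\filter(\aterm_{i_k})\gamma^{\filter})$, matching the rewritten left-hand side.

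There is no genuine obstacle here; the statement is a routine commutation of filtering with substitution. The only points requiring care are the bookkeeping in the variable case (that $\gamma$ and $\gamma^{\filter}$ have a common domain, so that a variable outside the domain is fixed by both) and the observation that the filtering clause only retains the arguments indexed by $\pi(\afun)$, so the induction hypothesis is needed precisely for those positions $i_1,\ldots,i_k$ and nowhere else.
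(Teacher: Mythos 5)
Your proof is correct and follows essentially the same route as the paper's: structural induction on $\aterm$, with the variable case handled by unfolding $\gamma^{\filter}$ and the function case by applying the induction hypothesis to the retained arguments $\aterm_{i_1},\ldots,\aterm_{i_k}$ and distributing the substitution over $\afun_\pi$. The extra case split on whether $\avar \in \domain(\gamma)$ is harmless bookkeeping that the paper leaves implicit.
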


\begin{proof}
By induction on $\aterm$.

If $\aterm$ is a variable, then $\filter(\aterm\gamma) = \filter(
\gamma(\aterm)) = \gamma^{\filter}(\aterm) = \filter(\aterm)\gamma^{
\filter}$.

If $\aterm = \afun(\aterm_1,\ldots,\aterm_n)$ with $\pi(\afun) =
\{i_1,\ldots,i_k\}$, then by the induction hypothesis:
\[
\begin{array}{rcl}
\filter(\aterm\gamma) & = &
\filter(\afun(\aterm_1\gamma,\ldots,\aterm_n\gamma)) \\
& = & \afun_\pi(\filter(\aterm_{i_1}\gamma),\ldots,
  \filter(\aterm_{i_k}\gamma)) \\
& = & \afun_\pi(\filter(\aterm_{i_1})\gamma^{\filter},\ldots,
  \filter(\aterm_{i_k})\gamma^{\filter}) \\
& = & \afun_\pi(\filter(\aterm_{i_1}),\ldots,\filter(\aterm_{i_k}))
  \gamma^{\filter} \\
& = & \filter(\aterm)\gamma^{\filter} \\
\end{array}
\]
\qed
\end{proof}

\begin{lemma}\label{lem:filterreduce}
For all terms $\aterm,\bterm$: if $\aterm \arr{\Rules} \bterm$ then
$\filter(\aterm) \arr{\filter(\Rules)}^= \filter(\bterm)$.
\end{lemma}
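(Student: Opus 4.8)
The statement to prove is Lemma~\ref{lem:filterreduce}: if $\aterm \arr{\Rules} \bterm$ then $\filter(\aterm) \arr{\filter(\Rules)}^= \filter(\bterm)$. The natural approach is induction on the structure of the rewrite step $\aterm \arr{\Rules} \bterm$, i.e.\ on the position at which the step is performed (equivalently, an induction on the term $\aterm$ that mirrors the recursive definition of $\arr{\Rules}$). There are two cases to handle: a topmost (root) rewrite step, and a step strictly inside an argument.

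\emph{Root step.} Suppose the step is at the root, so $\aterm = \ell\gamma$ and $\bterm = r\gamma$ for some rule $\ell \arrz r \in \Rules$ and substitution $\gamma$. By definition $\filter(\ell) \arrz \filter(r) \in \filter(\Rules)$. Using Lemma~\ref{lem:filtersubstitute}, I would rewrite $\filter(\aterm) = \filter(\ell\gamma) = \filter(\ell)\gamma^{\filter}$ and similarly $\filter(\bterm) = \filter(r)\gamma^{\filter}$. Hence a single application of the filtered rule $\filter(\ell) \arrz \filter(r)$ gives $\filter(\aterm) \arr{\filter(\Rules)} \filter(\bterm)$, which is one of the at-most-one steps allowed by $\arr{\filter(\Rules)}^=$.

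\emph{Internal step.} Suppose $\aterm = \afun(\aterm_1,\ldots,\aterm_n)$ and the step takes place in argument $j$, so $\aterm_j \arr{\Rules} \aterm_j'$ and $\bterm = \afun(\aterm_1,\ldots,\aterm_j',\ldots,\aterm_n)$. Let $\pi(\afun) = \{i_1,\ldots,i_k\}$. By the induction hypothesis, $\filter(\aterm_j) \arr{\filter(\Rules)}^= \filter(\aterm_j')$. Here the key subtlety, and the point I expect to be the main obstacle, is the case split on whether $j \in \pi(\afun)$. If $j \notin \pi(\afun)$, then argument $j$ is filtered away entirely: $\filter(\aterm)$ and $\filter(\bterm)$ are syntactically identical, so $\filter(\aterm) \arr{\filter(\Rules)}^= \filter(\bterm)$ holds trivially via the zero-step option. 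This is precisely why the lemma must be stated with $\arr{\filter(\Rules)}^=$ rather than $\arr{\filter(\Rules)}^+$ --- a filtered-out step simply disappears. If $j \in \pi(\afun)$, say $j = i_m$, then $\filter(\aterm) = \afun_\pi(\filter(\aterm_{i_1}),\ldots,\filter(\aterm_{i_k}))$ and I lift the induction hypothesis through the closure of $\arr{\filter(\Rules)}$ under $\afun_\pi$-contexts: the at-most-one step in the $m$-th filtered argument yields $\filter(\aterm) \arr{\filter(\Rules)}^= \filter(\bterm)$, noting again that the $=$ accommodates the zero-step case coming from the induction hypothesis.

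Aside from the $j \notin \pi(\afun)$ case, everything is routine, since Lemma~\ref{lem:filtersubstitute} does all the real work in the root case and the monotonicity of $\arr{\filter(\Rules)}$ in the internal case is immediate. I would be careful to phrase the induction so that the at-most-one relation $\arr{\filter(\Rules)}^=$ is preserved rather than inadvertently claiming exactly one step, since the whole point is that argument filtering can erase rewrite steps.
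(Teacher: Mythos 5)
Your proof is correct and follows essentially the same route as the paper's: induction on the derivation of the rewrite step, using Lemma~\ref{lem:filtersubstitute} for the root case, and splitting the internal case on whether the rewritten argument position is in $\pi(\afun)$ (with the zero-step option of $\arr{\filter(\Rules)}^=$ absorbing filtered-away steps). No gaps.
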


Note that this lemma holds for any set $\Rules$, so also for a set
$\Rules = \{ \ell \arrz r \}$.  Thus, we know that a filtered version
of the same rule is used.

\begin{proof}
By induction on the derivation of $\aterm \arr{\Rules} \bterm$.

If $\aterm = \afun(\aterm_1,\ldots,\aterm_j,\ldots,\aterm_n)$ and
$\bterm = \afun(\aterm_1,\ldots,\bterm_j,\ldots,\aterm_n)$ with
$\aterm_j \arr{\Rules} \bterm_j$ and
$\pi(\afun) = \{i_1,\ldots,i_k\}$,
then we consider two possibilities.
If $j = i_l$ for some $l$, then
$\filter(\aterm) = \afun_\pi(\filter(\aterm_{i_1}),\ldots,
\filter(\aterm_{i_l}),\ldots,\filter(\aterm_{i_k})) \arr{\filter(
\Rules)}^= \afun_\pi(\filter(\aterm_{i_1}),\ldots,\filter(\bterm_{
i_l}),\ldots,\filter(\aterm_{i_k}))$ by the induction hypothesis,
$= \afun_\pi(\bterm)$.
Otherwise, if $j \notin \pi(\afun)$, then
$\filter(\aterm) = \filter(\bterm)$.

Alternatively, if the reduction takes place at the root, we can write
$\aterm = \ell\gamma,\ \bterm = r\gamma$ for some $\ell \arrz r \in
\Rules$.  Then $\filter(\aterm) = \filter(\ell)\gamma^{\filter}$ by
Lemma~\ref{lem:filtersubstitute}, $\arr{\filter(\Rules)} \filter(r)
\gamma^{\filter}$ because $\filter(\ell) \arrz \filter(r) \in
\filter(\Rules)$, and this is exactly $\filter(\bterm)$ by
Lemma~\ref{lem:filtersubstitute} again.
\qed
\end{proof}

\subsection{Preliminaries Proofs}\label{app:preliminaries}

Theorems~\ref{thm:dps} and~\ref{thm:redpair} are presented for
understanding, but are not used in the proofs of other results in this
paper.  As they are moreover an immediate consequence of
Theorems~\ref{thm:makechain} and~\ref{thm:redpairprocessor} (which are
proved in much the same way in Section~\ref{app:mainproofs}), we do
not prove them here, but simply refer
to~\cite{art:gie:00:1,gie:thi:sch:05:2,gie:thi:sch:fal:06,hir:mid:07:1}.

Theorem~\ref{thm:userules} is also a standard result from the
literature \cite{gie:thi:sch:fal:06,hir:mid:07:1},
but this result (and even more important: the main lemma
that leads to it) \emph{is} used in other proofs.  Moreover, seeing
the proof will aid understanding for the not-entirely-standard proof
in Appendix~\ref{sec:wrt}.
To start, we will need some definitions and lemmas.

\begin{definition}
Fixing an argument filtering $\pi$ and sets of rules $\Pairs$ and
$\Rules$ such that $\Rules$ is finitely branching, we introduce for any
sort $\asort$ fresh symbols $\bot_\asort : \asort$ and
$\Fc_\asort : [\asort \times \asort] \decpijl \asort$.
The function $\encode$ from \emph{terminating}
terms to terms is inductively defined by:
\begin{itemize}
\item if $\aterm$ is a variable, then $\encode(\aterm) = \aterm$;
\item if $\aterm$ has the form $\afun(\aterm_1,\ldots,\aterm_n)$,
  with $\afun(\seq{\aterm}) : \asort$ and $\pi(\afun) = \{i_1,
  \ldots,i_k\}$, then:
  \begin{itemize}
  \item if $\afun$ is a \emph{usable symbol}**, then
    $\encode(\aterm) = \afun_\pi(\encode(\aterm_{i_1}),\ldots,
    \encode(\aterm_{i_k}))$
  \item otherwise,
    $\encode(\aterm) = \Fc_\asort(\afun_\pi(\encode(\aterm_{i_1}),\ldots,
    \encode(\aterm_{i_k})),\makelist_\asort(\{\bterm \mid
    \afun(\seq{\aterm}) \arr{\Rules} \bterm\}))$, where
    $\makelist_\asort$ is defined by:
    \begin{itemize}
    \item $\makelist_\asort(\emptyset) = \bot_\asort$
    \item $\makelist_\asort(X) = \Fc_\asort(\encode(\bterm),
      \makelist_\asort(X \setminus \{\bterm\}))$ if $X$ is non-empty
      and $\bterm$ is its smallest element (lexicographically).
    \end{itemize}
  \end{itemize}
\end{itemize}
** We say $\afun$ is a usable symbol if either $\afun$ is a
constructor, or $\afun$ is the root symbol of some rule in
$\filteruserules(\Pairs,\Rules,\pi)$.
\end{definition}

This definition is well-defined because the system is finitely
branching (so $X$ is always finite) and terminating.

\begin{definition}
We say a term is \emph{completely $\pi$-usable} if either it is a
variable, or it has the form $\afun(\aterm_1,\ldots,\aterm_n)$ with
$\afun$ a usable symbol and for all $i \in \pi(\afun)$ also
$\aterm_i$ is completely $\pi$-usable.
\end{definition}

This definition particularly applies to the right-hand sides of
dependency pairs and usable rules which, by definition, are always
completely $\pi$-usable.

\begin{lemma}\label{lem:ur1}
For all substitutions $\gamma$ and terms $\aterm$ we have
$\encode(\aterm\gamma) \arrr{\Ce} \filter(\aterm)\gamma^\encode$.
If $\aterm$ is completely $\pi$-usable, then even
$\encode(\aterm\gamma) = \filter(\aterm)\gamma^\encode$.
\end{lemma}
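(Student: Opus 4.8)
The plan is to prove both statements simultaneously by induction on the structure of the term $\aterm$, carrying the equality claim as a strengthening of the inductive invariant. Since $\encode$ is defined only on terminating terms, I first observe that the recursion is well-founded: $\encode(\aterm\gamma)$ is defined only when $\aterm\gamma$ is terminating, in which case every argument $\aterm_i\gamma$ is terminating too, and these are exactly the terms on which $\encode$ recurses. The observation that makes the structural induction on $\aterm$ (rather than on $\aterm\gamma$) work is that whether the root $\afun$ of $\aterm$ is a usable symbol depends on $\afun$ alone and not on $\gamma$; hence the top-level shape of $\encode(\aterm\gamma)$ is governed by $\aterm$ itself. For the base case, if $\aterm$ is a variable then $\encode(\aterm\gamma) = \encode(\gamma(\aterm)) = \gamma^\encode(\aterm) = \filter(\aterm)\gamma^\encode$, using $\filter(\aterm) = \aterm$ and the convention $\gamma^\encode(\avar) = \encode(\gamma(\avar))$; this is an equality, and variables are completely $\pi$-usable, so both claims hold.

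For the inductive step I would write $\aterm = \afun(\aterm_1,\ldots,\aterm_n)$ with $\pi(\afun) = \{i_1,\ldots,i_k\}$, so that $\aterm\gamma = \afun(\aterm_1\gamma,\ldots,\aterm_n\gamma)$ has $\aterm_{i_j}\gamma$ as its $i_j$-th argument, and split on the usability of $\afun$. If $\afun$ is a usable symbol, then $\encode(\aterm\gamma) = \afun_\pi(\encode(\aterm_{i_1}\gamma),\ldots,\encode(\aterm_{i_k}\gamma))$. Applying the induction hypothesis to each argument and using that $\arr{\Ce}$ is monotonic (closed under contexts) gives $\encode(\aterm\gamma) \arrr{\Ce} \afun_\pi(\filter(\aterm_{i_1})\gamma^\encode,\ldots,\filter(\aterm_{i_k})\gamma^\encode)$, which equals $\filter(\aterm)\gamma^\encode$ by the definition of $\filter$ and of substitution application. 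If in addition $\aterm$ is completely $\pi$-usable, then $\afun$ is usable and each $\aterm_{i_j}$ is completely $\pi$-usable, so the stronger induction hypothesis makes each recursive step an equality and the whole chain collapses to $\encode(\aterm\gamma) = \filter(\aterm)\gamma^\encode$, giving the second claim.

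If $\afun$ is not a usable symbol, then $\encode(\aterm\gamma) = \Fc_\asort(\afun_\pi(\encode(\aterm_{i_1}\gamma),\ldots,\encode(\aterm_{i_k}\gamma)),\makelist_\asort(\ldots))$. Here I would apply a single step of the projection rule $\Fc_\asort(\avar,\bvar) \arrz \avar \in \Ce$ to discard the $\makelist_\asort$ component, yielding $\encode(\aterm\gamma) \arr{\Ce} \afun_\pi(\encode(\aterm_{i_1}\gamma),\ldots,\encode(\aterm_{i_k}\gamma))$, and then finish exactly as in the usable case to obtain $\encode(\aterm\gamma) \arrr{\Ce} \filter(\aterm)\gamma^\encode$. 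Since a non-usable root symbol means $\aterm$ is not completely $\pi$-usable, the equality claim is vacuous in this case and nothing further is required.

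I expect no deep obstacle: the result is a clean structural induction once set up. The points that require care are threading the two strengths of the statement through one induction, so that the equality invariant is available for the arguments precisely when $\aterm$ is completely $\pi$-usable, and the realisation that the potentially complicated reduct-list subterm $\makelist_\asort(\ldots)$ never needs to be analysed, since it is removed by a single $\Ce$ projection step. Keeping track of the definedness of $\encode$ on every subterm visited by the recursion (each a terminating subterm of a terminating term) is the only remaining bookkeeping.
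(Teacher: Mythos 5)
Your proof is correct and follows essentially the same route as the paper's: a structural induction on $\aterm$ with the variable case as an equality, the single $\Ce$-projection step discarding the $\makelist_\asort$ component when the root is not usable, and the observation that complete $\pi$-usability forces the equality case throughout. The extra remarks on well-definedness and on usability depending only on the root symbol are sound bookkeeping that the paper leaves implicit.
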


\begin{proof}
By induction on the size of $\aterm$.

For a variable we easily see that $\encode(\aterm\gamma) =
\encode(\gamma(\aterm)) = \gamma^\encode(\aterm) =
\filter(\aterm)\gamma^\encode$.

If $\aterm = \afun(\aterm_1,\ldots,\aterm_n)$, let $\pi(\afun) =
\{i_1,\ldots,i_k\}$.  We either have that $\encode(\aterm\gamma) =
\afun_\pi(\encode(\aterm_{i_1}\gamma),
\ldots,\encode(\aterm_{i_k}\gamma))$, or
$\encode(\aterm\gamma)$ reduces to this term with a single $\Ce$ step;
the latter case cannot occur if $\aterm$ is completely $\pi$-usable, as
then $\afun$ would be a usable symbol.
Either way, as $\filter(\aterm)\gamma^\encode =
\afun_\pi(\filter(\aterm_{i_1})\gamma^\encode,\ldots
,\filter(\aterm_{i_k})\gamma^\encode)$ and all $\aterm_{i_j}$ are
completely $\pi$-usable if $\aterm$ is, the induction hypothesis
gives the desired result.
\qed
\end{proof}

\begin{lemma}\label{lem:ur2}
If $\aterm$ is a terminating term and $\Rules$ finitely branching, and
$\aterm \arr{\Rules} \bterm$,
then $\encode(\aterm) \arrr{\filter(\filteruserules(\Pairs,\Rules,\pi))
\cup \Ce} \encode(\bterm)$.
\end{lemma}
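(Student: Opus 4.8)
\noindent
\emph{Proof Plan:}
The plan is to argue by induction on the size of $\aterm$, with a case analysis driven by a single dichotomy: whether the root symbol of $\aterm$ is a usable symbol. Since $\aterm \arr{\Rules} \bterm$, the term $\aterm$ is reducible and hence not a variable, so I may write $\aterm = \afun(\aterm_1,\ldots,\aterm_n)$. As subterms of a terminating term are terminating, every $\encode(\aterm_i)$ is well-defined, and the induction hypothesis will be available for the (terminating, strictly smaller) immediate subterms of $\aterm$.

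First I would dispose of the case where $\afun$ is \emph{not} a usable symbol. Here, by definition of $\encode$, the encoding has the form $\encode(\aterm) = \Fc_\asort(\afun_\pi(\ldots), \makelist_\asort(\{\cterm \mid \aterm \arr{\Rules} \cterm\}))$, whose second argument lists the encodings of \emph{all} direct reducts of $\aterm$. Since $\bterm$ is such a reduct, $\encode(\bterm)$ occurs as one of the entries of this $\makelist_\asort$ chain, and the key observation is that this entry can be peeled out using only $\Ce$ rules: one step $\Fc_\asort(\avar,\bvar) \arrz \bvar$ descends into the list, finitely many further such steps skip the earlier entries, and a final step $\Fc_\asort(\avar,\bvar) \arrz \avar$ selects $\encode(\bterm)$. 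This yields $\encode(\aterm) \arrr{\Ce} \encode(\bterm)$ with no appeal to the induction hypothesis, and it uniformly covers both a root rewrite and a rewrite below the root, since it uses only that $\bterm$ is \emph{some} direct reduct.

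Next I would treat the case where $\afun$ \emph{is} a usable symbol, so $\encode(\aterm) = \afun_\pi(\encode(\aterm_{i_1}),\ldots,\encode(\aterm_{i_k}))$ with $\pi(\afun) = \{i_1,\ldots,i_k\}$, splitting on where the redex lies. If the step is internal, say $\aterm_j \arr{\Rules} \aterm_j'$, then either $j \notin \pi(\afun)$, in which case the argument is filtered away and $\encode(\aterm) = \encode(\bterm)$ outright, or $j \in \pi(\afun)$, in which case the induction hypothesis applied to the smaller terminating subterm $\aterm_j$ yields $\encode(\aterm_j) \arrr{\filter(\filteruserules(\Pairs,\Rules,\pi)) \cup \Ce} \encode(\aterm_j')$, and since this rewrite relation is closed under contexts we obtain $\encode(\aterm) \arrr{\filter(\filteruserules(\Pairs,\Rules,\pi)) \cup \Ce} \encode(\bterm)$. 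If instead the step is at the root, say $\aterm = \ell\asub \arr{\Rules} r\asub = \bterm$ with $\ell \arrz r \in \Rules$ and $\mathrm{root}(\ell) = \afun$, I would first note that because $\afun$ is a usable symbol, \emph{all} rules with root $\afun$ are usable (this is built into Definition~\ref{def:ur}, whose third clause always pulls in every $\afun$-rule at once), so in particular $\ell \arrz r \in \filteruserules(\Pairs,\Rules,\pi)$. Then I would chain Lemma~\ref{lem:ur1} in its general direction on the left-hand side, one filtered-rule step, and Lemma~\ref{lem:ur1} in its sharp direction on the right-hand side: $\encode(\ell\asub) \arrr{\Ce} \filter(\ell)\asub^\encode \arr{\filter(\filteruserules(\Pairs,\Rules,\pi))} \filter(r)\asub^\encode = \encode(r\asub)$, where the middle step is legitimate because $\filter(\ell) \arrz \filter(r) \in \filter(\filteruserules(\Pairs,\Rules,\pi))$ and the final equality uses that right-hand sides of usable rules are completely $\pi$-usable.

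The main obstacle is the root-reduction subcase, and it has two delicate points. First, one must justify that the applied rule is genuinely usable; this rests on the ``all $\afun$-rules together'' closure property of usable rules rather than on the individual rule being reached directly, so it is worth spelling out from Definition~\ref{def:ur}. Second, one must use Lemma~\ref{lem:ur1} asymmetrically: only the weak direction is available for $\ell$ (which need not be completely $\pi$-usable), while the sharp direction is needed on $r$ in order to land exactly on $\encode(\bterm)$ rather than on something that merely reduces to it. The remaining cases are routine once the $\Ce$-based list extraction is set up.
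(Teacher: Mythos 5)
Your proposal is correct and follows essentially the same route as the paper's proof: the same case split on whether the root is a usable symbol (with the $\Ce$-based extraction from the $\makelist$ component in the non-usable case), the same treatment of internal steps via the induction hypothesis or filtering, and the same asymmetric use of Lemma~\ref{lem:ur1} around a single $\filter(\filteruserules(\Pairs,\Rules,\pi))$-step in the root case. The only cosmetic difference is that you induct on the size of $\aterm$ rather than on the derivation of $\aterm \arr{\Rules} \bterm$, and you spell out (correctly, and in more detail than the paper) why the applied root rule must itself be usable.
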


\begin{proof}
By induction on the derivation of $\aterm \arr{\Rules} \bterm$. 
Since $\aterm$ reduces, it cannot be a variable, so let us write
$\aterm = \afun(\aterm_1,\ldots,\aterm_n)$, with $\pi(\afun) =
\{i_1,\ldots,i_k\}$.  For brevity, also denote
$U := \filter(\filteruserules(\Pairs,\Rules,\pi))$.

If $\afun$ is not a usable symbol, then $\encode(\aterm) \arr{\Ce}
\makelist_\asort(X)$, where $X$ is a set which contains $\bterm$.
It is easy to see that therefore $\makelist_\asort(X) \arrr{\Ce}
\encode(\bterm)$.  Thus, let us henceforth assume that $\afun$ is a
usable symbol.

If
$\afun(\aterm_1,\ldots,
\aterm_n) = \ell\gamma$ and $\bterm = r\gamma$ for some rule $\ell
\arrz r \in \Rules$ and substitution $\gamma$.  We can write $\ell =
\afun(l_1,\ldots,l_n)$, and since $\afun$ is a usable symbol,
$\filter(\ell) \arrz \filter(r) \in U$.  By Lemma~\ref{lem:ur1},
$\encode(\aterm)
\arrr{\Ce} \filter(\ell)\gamma^\encode \arr{U}
\filter(r)\gamma^\encode = \encode(r\gamma) =
\encode(\bterm)$ as required, because the right-hand sides of usable
rules are always completely $\pi$-usable.

Alternatively, $\bterm = \afun(\aterm_1,\ldots,\aterm_i',\ldots,
\aterm_n)$ with $\aterm_i \arr{\Rules} \aterm_i'$.  If $i \notin
\pi(\afun)$, then $\encode(\aterm) = \encode(\bterm)$.  Otherwise,
by the induction hypothesis:
$\encode(\aterm_i) \arrr{U \cup \Ce} \encode(\aterm_i')$.
\qed
\end{proof}

\begin{lemma}\label{lem:usablechain}
If there is an infinite minimal $(\Pairs,\Rules)$-chain
$\rijtje{(\ell_i \arrz r_i,\gamma_i) \mid i \in \N}$, then there are
$\delta_1,\delta_2,\ldots$ such that $\rijtje{(\filter(\ell_i) \arrz
\filter(r_i),\delta_i) \mid i \in \N}$ is an infinite
$(\filter(\Pairs),\filter(\filteruserules(\Pairs,\Rules,\pi)) \cup
\Ce)$-chain.
\end{lemma}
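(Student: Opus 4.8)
The plan is to split on whether $\Rules$ is finitely branching. If it is not, then $\userules(\Pairs,\Rules,\pi) = \Rules$ by Definition~\ref{def:ur}, and I would obtain the filtered chain directly without any $\encode$-machinery: setting $\delta_i := \gamma_i^{\filter}$, Lemma~\ref{lem:filterreduce} turns each reduction $r_i\gamma_i \arrr{\Rules} \ell_{i+1}\gamma_{i+1}$ into $\filter(r_i\gamma_i) \arrr{\filter(\Rules)} \filter(\ell_{i+1}\gamma_{i+1})$, and Lemma~\ref{lem:filtersubstitute} rewrites both sides as $\filter(r_i)\delta_i$ and $\filter(\ell_{i+1})\delta_{i+1}$. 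Since $\filter(\Rules) = \filter(\userules(\Pairs,\Rules,\pi))$ here, this is already the desired chain.

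For the finitely branching case -- the interesting one -- I would put $\delta_i := \gamma_i^{\encode}$ and verify the chain condition $\filter(r_i)\delta_i \arrr{U \cup \Ce} \filter(\ell_{i+1})\delta_{i+1}$ with $U := \filter(\userules(\Pairs,\Rules,\pi))$, the pair membership $\filter(\ell_i) \arrz \filter(r_i) \in \filter(\Pairs)$ being immediate. The reduction is established by three moves along $\encode$. First, since $r_i$ is a right-hand side of a dependency pair it is completely $\pi$-usable, so Lemma~\ref{lem:ur1} gives the \emph{equality} $\filter(r_i)\delta_i = \filter(r_i)\gamma_i^{\encode} = \encode(r_i\gamma_i)$. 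Second, applying Lemma~\ref{lem:ur2} to each single step of $r_i\gamma_i \arrr{\Rules} \ell_{i+1}\gamma_{i+1}$ and concatenating yields $\encode(r_i\gamma_i) \arrr{U \cup \Ce} \encode(\ell_{i+1}\gamma_{i+1})$. Third, Lemma~\ref{lem:ur1} -- now only as a $\Ce$-reduction, since $\ell_{i+1}$ need not be completely $\pi$-usable -- gives $\encode(\ell_{i+1}\gamma_{i+1}) \arrr{\Ce} \filter(\ell_{i+1})\gamma_{i+1}^{\encode} = \filter(\ell_{i+1})\delta_{i+1}$. Chaining the three moves establishes the chain condition.

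The main obstacle is the second move: Lemma~\ref{lem:ur2} is stated only for terminating terms, so it can be applied step by step along a $\Rules$-reduction only if every intermediate term is terminating. This is exactly what the \emph{minimality} of the original chain supplies -- each $r_i\gamma_i$ is $\arr{\Rules}$-terminating, and termination is inherited by reducts -- and it is the reason the lemma is restricted to minimal chains. A secondary point worth checking is the asymmetry between the two uses of Lemma~\ref{lem:ur1}: on the $r_i$ side the equality is needed and relies on right-hand sides of dependency pairs being completely $\pi$-usable, whereas on the $\ell_{i+1}$ side the weaker $\Ce$-reduction suffices, with the extra $\Ce$ steps harmlessly absorbed into $U \cup \Ce$.
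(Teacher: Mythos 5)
Your proof is correct and follows essentially the same route as the paper's: define $\delta_i := \gamma_i^{\encode}$ in the finitely branching case and chain the equality from Lemma~\ref{lem:ur1} (using complete $\pi$-usability of the $r_i$), the stepwise application of Lemma~\ref{lem:ur2} (justified by minimality), and the $\Ce$-reduction from Lemma~\ref{lem:ur1} on the $\ell_{i+1}$ side. You even spell out the non-finitely-branching case, which the paper dismisses as obvious, and correctly identify the role of minimality and the asymmetry between the two uses of Lemma~\ref{lem:ur1}.
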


Note that, as long as we let $\pi(\Fc_\asort) = \{1,2\}$ for all
$\asort$, we have $\filter(\filteruserules(\Pairs,\Rules,\pi)) \cup
\Ce = \filter(\filteruserules(\Pairs,\Rules,\pi) \cup \Ce)$.  We will
typically use the two interchangeably, and simply assume that the
$\Fc_\asort$ symbols are not filtered.

\begin{proof}
We can safely assume that the domains of all $\gamma_i$ contain only
variables in $\FV(\ell_i) \cup \FV(r_i)$; otherwise, we limit the
domain as such.

The lemma is obvious if $\Rules$ is not finitely branching; otherwise
$\encode$ is defined on all terminating terms, such as the strict
subterms of all $r_i\gamma_i$ and $\ell_i\gamma_i$.  This includes
all $\gamma_i(\avar)$ by the assumption.
Thus, for all $i$, we define $\delta_i = \gamma_i^\encode$.  Then we
have: $r_i\gamma \arrr{\Rules} \ell_{i+1}\gamma$ implies that
$\filter(r_i)\delta_i =
\encode(r_i\gamma_i)$ (by Lemma~\ref{lem:ur1}) $\arrr{\filter(
\filteruserules(\Pairs,\Rules,\pi)) \cup \Ce}
\encode(l_i\gamma_{i+1})$ (by Lemma~\ref{lem:ur2}) $\arrr{\Ce}
\filter(l_i)\delta_{i+1}$ (by Lemma~\ref{lem:ur1}).
\qed
\end{proof}

\begin{proof}[Proof of Theorem~\ref{thm:userules}]
Suppose there is an infinite minimal $(\Pairs
\setminus \Pairs^\succ,\Rules)$-chain $\rijtje{(\ell_i \arrz r_i,
\gamma_i) \mid i \in \N}$.  By Lemma~\ref{lem:usablechain}, we can
find $\delta_1,\delta_2,\ldots$ such that $\rijtje{(\filter(\ell_i)
\arrz \filter(r_i),\delta_i) \mid i \in \N}$ is a chain.  This new
chain uses (filtered versions of) the same dependency pairs infinitely
often, but uses $\filter(\filteruserules(\Pairs,\Rules,\pi)) \cup \Ce$
for the reduction $\filter(r_i)\delta_i \arrz^* \filter(\ell_{i+1})
\delta_{i+1}$.  The requirements guarantee that the new chain
cannot use filtered elements of $\Pairs^\succ$ infinitely often.
Hence, the original chain must have a tail without these pairs.
\qed
\end{proof}

\subsection{Proofs for Sections~\ref{sec:formrules}--\ref{sec:unsorted}}\label{app:mainproofs}

\begin{proof}[Proof of Theorem~\ref{thm:makechain}]
To handle the \emph{if} side, we refer to~\cite{art:gie:00:1}: if
there is an infinite minimal formative $(\DP(\Rules),\formrules(
\DP(\Rules),\Rules))$-chain, then there is an infinite $(\DP(\Rules),
\Rules)$-chain, so $(\Sigma,\Rules)$ is non-terminating.
For the \emph{only if} side,
let $(\Sigma,\Rules)$ be non-terminating; we must find
an infinite minimal formative $(\DP(\Rules),\formrules(\DP(\Rules),
\Rules))$-chain.  Note that by definition of an MTRS, the rules in
$\Rules$ are required to satisfy the variable condition: always
$\FV(r) \subseteq \FV(\ell)$.

Assume $\Rules$ is non-terminating, so there is a non-terminating
term $\aterm_{-1}$; without loss of generality we can assume that all
subterms of $\aterm_{-1}$ do terminate.

For $i \in \mathbb{N} \cup \{-1\}$, let a minimal non-terminating
term $\aterm_i$ be given.
By minimality, we know that any infinite reduction starting in
$\aterm_i$ eventually has to take a root step, say $\aterm_i =
\afun(\bterm_1,\ldots,\bterm_n) \arrr{\Rules} \afun(\cterm_1,
\ldots,\cterm_n) = \ell\gamma \arr{\Rules} r\gamma$ where $\ell
\arrz r$ is a rule and $r\gamma$ is still non-terminating.

Write $\ell = \afun(l_1,\ldots,l_n)$.  If $\ell$ is non-linear,
a reduction $\up{\afun}(\seq{\bterm}) \arrr{\Rules} \up{\afun}(
\seq{l})\gamma$ is $\ell$-formative; in this case let $\delta :=
\gamma$.
If $\ell$ is linear, then by Lemma~\ref{lem:base}, we can find
$\delta_1,\ldots,\delta_n$ on the disjoint domains $\FV(l_1),
\ldots,\FV(l_n)$ such that each $\bterm_i \arrr{\formrules(l_i,
\Rules)} l_i\delta_i$ by a formative $l_i$-reduction, and each
$\delta_i(\avar) \arrr{\Rules} \gamma(\avar)$; we define $\delta :=
\delta_1 \cup \ldots \cup \delta_n$ and note that $r\delta
\arrr{\Rules} r\gamma$ is also non-terminating.

Thus, either way we have a substitution $\delta$ such that
$\aterm_i \arrr{\formrules(\ell,\Rules)} \up{\afun}(l_1,\ldots,
l_n)$
by a formative $\ell$-reduction, and $r\delta$ is still
non-terminating.  Let $p$ be a minimal subterm of $r$ such that
$p\delta$ is still non-terminating.  Then $p$ cannot be a variable,
as then $p\delta$ would be a subterm of some $l_j\delta$,
contradicting termination of $\bterm_j$.  Thus, $p = \bfun(p_1,
\ldots,p_m)$.  If $\bfun$ is not a defined symbol, then any infinite
reduction starting in $p\delta$ would give an infinite reduction in
some $p_i\delta$, contradicting minimality; so, $\bfun$ is defined.
We conclude that $\ell_{i+1} := \up{\afun}(\seq{l}) \arrz
\up{\bfun}(\seq{p}) =: r_{i+1}$ is a dependency pair.  Let
$\gamma_{i+1} := \delta$ and $\aterm_{i+1} := p\delta$.

This builds an infinite $(\DP(\Rules),\Rules)$-chain with
all $\arr{\Rules}$-steps formative!
\qed
\end{proof}

\begin{proof}[Proof of Theorem~\ref{thm:processor}]
Suppose $(\Pairs,\Rules,\flag)$ is not finite, so
there is an infinite $(\Pairs,\Rules)$-chain $\rijtje{(\ell_i,r_i,
\asub_i) \mid i \in \N}$ with all $\ell_i \arrz r_i \in \Pairs$ and
always $r_i\asub_i \arrr{\Rules} \ell_{i+1}\asub_{i+1}$.
By Lemma~\ref{lem:changechain} there is also an infinite $(\Pairs,
\formrules(\Pairs,\Rules))$-chain.
\qed
\end{proof}

\begin{proof}[Proof of Theorem~\ref{thm:redpairprocessor}]
We can safely assume that $\pi(\Fc_\asort) = \{1,2\}$, as the
$\Fc_\asort$ symbols occur only in $\Ce$: if we can prove the theorem
for such an argument filtering, then for general filterings $\rho$ we
write $\rho = \sigma \circ \pi$, where $\pi$ is the restriction of
$\rho$ to symbols other than the $\Fc_\asort$ (with additional entries
$\pi(\Fc_\asort) = \{1,2\}$) and $\sigma$ maps $\Fc_\asort$ to
$\rho(\Fc_\asort)$ and other $\afun_\pi$ to $\{1,\ldots,k\}$ if
$\pi(\afun) = \{i_1,\ldots,i_k\}$.  Then, if the processor is
applicable with argument filtering $\rho$ and reduction pair
$(\succsim,\succ)$, it is also applicable with argument filtering
$\pi$ and the reduction pair given by: $\aterm \succsim \bterm$ if
$\sigma(\aterm) \succsim \sigma(\bterm)$ and $\aterm \succ \bterm$ if
$\sigma(\aterm) \succ \sigma(\bterm)$.

Now let us prove the theorem.
Suppose $(\Pairs,\Rules,\flag)$ is not finite, so
there is an infinite $(\Pairs,\Rules)$-chain $\rijtje{(\ell_i,r_i,
\gamma_i) \mid i \in \N}$.
If $\flag = \minimal$ then, using Lemma~\ref{lem:usablechain},
we find $\delta_1,\delta_2,\ldots$ such that always
$\rijtje{(\filter(\ell_i),\filter(r_i),\delta_i) \mid i \in \N}$
is an infinite chain,
where each $\filter(r_i)\delta_i \arrr{\filter(U)}
\filter(\ell_{i+1})\delta_{i+1}$, for $U := \userules(\Pairs,\Rules,
\pi) \cup \Ce$ (because $\filter(\Ce) = \Ce$).
Alternatively, if $\flag = \all$, then by Lemma~\ref{lem:filterreduce}
the original chain gives rise to a filtered chain
$\rijtje{(\filter(\ell_i),\filter(r_i),\delta_i) \mid i \in \N}$
where $\delta_i = \gamma^{\filter}$ and each $\filter(r_i)\delta_i
\arrr{\filter(U)} \filter(\ell_{i+1})\delta_{i+1}$ for $U := \Rules$.

Using Lemma~\ref{lem:changechain}, we can alter this chain to a
sequence
$\rijtje{(\filter(\ell_i),\filter(r_i), \epsilon_i) \mid i \in \N}$
which uses the same dependency pairs, and where
$\filter(r_i)\epsilon_i \arrr{\formrules(\filter(\Pairs),\filter(U))}
\filter(\ell_{i+1})\epsilon_{i+1}$.  If a reduction pair as described
exists, then this chain cannot use the filtered pairs
from $\Pairs^\succ$ infinitely often; thus, the original chain must
also have a tail with all dependency pairs in $\Pairs \setminus
\Pairs^\succ$.
\qed
\end{proof}

\begin{proof}[Proof of Lemma~\ref{lem:redtorulesalternative}]
The original reduction $\aterm \arrr{\Rules} \ell\asub$ gives rise to
a formative reduction over $A_\Rules$: all non-collapsing rules of
$\Rules$ are in $\noncollapsing$, and all collapsing rules are
obtained as a concatenation of steps in $\collapsing$ (because the
input of the lemma is an MTRS, so $\FV(r) \subseteq \FV(\ell)$ for
all $\ell \arrz r \in \Rules$).

So, we assume given a formative $\ell$-reduction over $A_\Rules$, and
prove with induction first on the number of non-collapsing steps in
the reduction, second on the length of the reduction, third on the
size of $\aterm$, that $\aterm \arrr{\sformrules(\ell,A_\Rules)}
\ell\asub$ by a formative $\ell$-reduction.
If $\aterm = l\asub$ we are done immediately.
If the reduction uses no
steps at the root, we are quickly done with
the third induction hypothesis and the fact that if $\ell =
\afun(l_1,\ldots,l_n)$ each $\sformrules(l_i,A_\Rules) \subseteq
\sformrules(\ell,A_\Rules)$. Thus, let us assume the reduction
has the form $\aterm \arrr{A_\Rules} \ell'\bsub \arr{A_\Rules}
r'\bsub = \afun(\bterm_1,\ldots,\bterm_n) \arrr{A_\Rules}
\afun(l_1,\ldots,l_n)\asub = \ell\asub$
where $\ell' \arrz r' \in A_\Rules$, with each $\bterm_i
\arrr{A_\Rules} l_i\asub$ by a formative $l_i$-reduction.

If $\ell' \arrz r' \in \noncollapsing$ we are done immediately: then
$r'$ has the form $\afun(\ldots)$, so $\ell' \arrz r' \in
\sformrules(\ell,\Rules)$, and we quickly complete with the first
induction hypothesis.

Thus, let us assume that the reduction has the form
$\aterm \arrr{A_\Rules} \bfun_1(\vec{\avar})\bsub_1 \arr{\collapsing}
\avar_{i_1}\bsub_1 = \bfun_2(\vec{\avar})\bsub_2 \arr{\collapsing}
\avar_{i_2}\bsub_2 = \ldots \arr{\collapsing} \avar_{i_k}\bsub_k =
r'\bsub \arrr{A_\Rules} l\asub$; we know this for $k = 1$, but allow
greater $k$ for more convenient inductive reasoning.  We assume $k$
is chosen as high as possible, so the reduction $\aterm
\arrr{A_\Rules} \bfun_1(\vec{\avar})\bsub_1$ does not end with a
collapsing step at the root.

Now, suppose $\aterm \arrr{A_\Rules} \bfun_1(
\vec{\avar})$ uses no steps at the root,
so $\aterm = \bfun_1(\aterm_1,\ldots,\aterm_{n_1})$ and each
$\aterm_i \arrr{A_\Rules} \avar_i\bsub_1$ by a formative
$\avar_i$-reduction.  Since each $\avar_i$ is a variable, these
reductions are empty: $\aterm = \bfun_1(\vec{\avar})\bsub_1
\arrr{\collapsing} r'\bsub$, and as 
$r'\bsub
\arrr{\sformrules(\ell,A_\Rules)} \ell\asub$ by a formative
$\ell$-reduction, we are done
(all collapsing rules are formative).

Alternatively, suppose the reduction has the form $\aterm
\arrr{A_\Rules} \ell''\csub \arr{A_\Rules} r''\csub \arrr{A_\Rules}
\bfun_1(\vec{\avar})\bsub_1$, with the (formative) reduction $r''\csub
\arrr{\Rules} \bfun_1(\vec{\avar})\bsub_1$ not using any topmost
steps.  Then as in the previous case, we know that $r''\csub =
\bfun_1(\vec{\avar})\bsub_1$.  We cannot have $\ell'' \arrz r'' \in
\collapsing$, by the assumption that $k$ is maximal.  Thus, $r'' =
\bfun_1(r_1,\ldots,r_{n_1})$ and we have $\ell'' \arrz r'' \in
\noncollapsing$; we also have $\bfun_1(\vec{\avar}) \arrz \avar_{i_1}
\in \collapsing$.  Therefore, $X$ (the set constructed in
Definition~\ref{def:combrules}) also contains the rule $\ell'' \arrz
r_{i_1}$.

There are two options.  If $r_{i_1}$ is not a variable, then $\ell''
\arrz r_{i_1} \in \noncollapsing$.  Thus, we can
alter the reduction to have the form $\aterm \arrr{A_\Rules}
\ell''\csub \arr{\noncollapsing} \bfun_2(\vec{\avar})\bsub_2
\arrr{\collapsing} r\bsub \arrr{A_\Rules} \ell\asub$.  This reduction,
which is still a formative $\ell$-reduction, has as many non-collapsing
steps as the original, but one less collapsing step, so we complete
with the second induction hypothesis.

Alternatively, if $r_{i_1}$ is a variable, then $\ell''
\arrr{\collapsing} r_{i_1}$, so we can transform the reduction into
the formative reduction $\aterm \arrr{A_\Rules} \ell''\csub
\arrr{\collapsing} r\bsub \arrr{A_\Rules} \ell\asub$, which uses
fewer non-collapsing rules.  We complete with the first induction
hypothesis.
\qed
\end{proof}

\begin{proof}[Proof of Theorem~\ref{thm:altfilter}]
As in Theorem~\ref{thm:redpairprocessor}, it suffices if a
$(\Pairs,\Rules)$-chain can be transformed into a
$(\filter(\Pairs),\sformrules(\filter(\Pairs),A_{\filter(U)}))$-chain
which uses the same dependency pairs infinitely often.
That is, given a $(\Pairs,\Rules)$-chain $\rijtje{(\ell_i \arrz r_i,
\gamma_i) \mid i \in \N}$, there are $\delta_1,\delta_2,\ldots$ where
that each $\filter(r_i)\delta_i \arrr{\sformrules(\filter(\Pairs),
A_{\filter(U)}))} \filter(\ell_{i+1})\delta_{i+1}$.

To this end, we use either Lemma~\ref{lem:usablechain} or
Lemma~\ref{lem:filterreduce} to find $\epsilon_1,\epsilon_2,\ldots$
which give the $(\filter(\Pairs),\filter(U))$-chain
$\rijtje{(\filter(\ell_i) \arrz \filter(r_i),\epsilon_i) \mid i \in
\N}$.  Lemma~\ref{lem:changechain} gives us $\delta_1,\delta_2$ such
that $\rijtje{(\filter(\ell_i) \arrz \filter(r_i),\delta_i)}$ is a
\emph{formative} $(\filter(\Pairs),\filter(U))$-chain.
Since, by Lemma~\ref{lem:redtorulesalternative}, also
$\filter(r_i)\delta_i \arrr{\sformrules(\filter(\Pairs),A_{\filter(U)
})} \filter(\ell_{i+1})\delta_{i+1}$, we are done.
\qed
\end{proof}

\newpage
\section{Formative Rules with $\TCapp$}
\label{app:tcap}

The definition of usable rules presented in Definition~\ref{def:ur} is an early
definition, which is convenient for presentation, but it is not the
most powerful definition of usable rules in existence.  Later
definitions use the function
$\TCapp$~\cite[Definition~11]{gie:thi:sch:05:1}:

\begin{definition}[$\TCapp$]
Let $\Rules$ be a TRS (an MTRS), let $t$ be a term.
Then
$\TCapp(\aterm,\Rules) = \afun(\TCapp(\aterm_1),\ldots,\TCapp(\aterm_n))$
if $\aterm=\afun(\aterm_1,\ldots,\aterm_n)$ and moreover
$\afun(\TCapp(\aterm_1),\ldots,\TCapp(\aterm_n))$
does not unify with any left-hand side of a rule from $\Rules$;
otherwise $\TCapp(\aterm,\Rules)$ is a different fresh variable for every
occurrence of $\aterm$.
\end{definition}

This definition is used as follows for usable
rules~\cite[Definition 15]{gie:thi:sch:05:1}:

\begin{definition}[Usable Rules with $\TCapp$]\label{def:urtcap}
Let $\bterm$ be a term, $\Rules$ a set of rules and $\pi$ an argument
filtering.
$\tcapuserules(\bterm,\Rules,\pi)$ is the smallest set $\subseteq
\Rules$ such that:
\begin{itemize}
\item if $\Rules$ is not finitely branching,
  then $\tcapuserules(\bterm,\Rules,\pi) = \Rules$;
\item if $\bterm = \afun(\bterm_1,\ldots,\bterm_n)$, then:
  \begin{itemize}
  \item $\tcapuserules(\bterm_i,\Rules,\pi) \subseteq
    \tcapuserules(\bterm,\Rules,\pi)$
    for all $i \in \pi(\afun)$;
  \item any rule $\ell \arrz r \in \Rules$ is in
    $\tcapuserules(\bterm,\Rules,\pi)$ if
    $\afun(\TCapp(\bterm_1),\ldots,\TCapp(\bterm_n))$ unifies with
    $\ell$;
  \end{itemize}
\item if $\ell \to r \in \tcapuserules(\bterm,\Rules,\pi)$, then
  $\tcapuserules(r,\Rules,\pi) \subseteq \tcapuserules(\bterm,\Rules,
  \pi)$.
\end{itemize}
For a set of rules $\Pairs$, we define
$\tcapuserules(\Pairs,\Rules,\pi) = \bigcup_{\aterm \to \bterm \in
\Pairs} \tcapuserules(\bterm,\Rules,\pi)$.
\end{definition}

In all the results in this paper, we could use $\tcapuserules$
exactly like $\userules$, as Lemma~\ref{lem:usablechain} goes through
with this alternative definition (which we will see in
Appendix~\ref{sec:wrt}).

Similar to usable rules, we can also use $\TCapp$ in the definition
of formative rules.  This gives a new formative rules approximation,
which is more powerful than the default ``compare root symbol''
version $\baseformrules$ (which, however, is more suitable for
presentation).

\begin{definition}[Formative Rules with $\TCapp$]
\label{def:formative_impl:2}
Let $\bterm$ be a term and $\Rules$ a set of rules.
$\implformrules(\bterm,\Rules)$ is the smallest set $\subseteq
\Rules$ such that:
\begin{itemize}
\item if $\bterm$ is not linear, then
  $\implformrules(\bterm,\Rules) = \Rules$;
\item if $\bterm = \afun(\bterm_1,\ldots,\bterm_n)$ then:
  \begin{itemize}
  \item $\implformrules(\bterm_i,\Rules) \subseteq
    \implformrules(\bterm,\Rules)$ for all $i$;
  \item if $\bterm : \asort$ then $\ell \arrz \avar \in
    \implformrules(\bterm,\Rules)$ for all collapsing rules
    $\ell \arrz \avar : \asort \in \Rules$;
  \item any rule $\ell \arrz \afun(r_1,\ldots,r_n)$ is in
    $\implformrules(\bterm,\Rules)$ if
    $\afun(\TCapp(r_1),\ldots,\TCapp(r_n))$ unifies with $\bterm$;
  \end{itemize}
\item if $\ell \to r \in \implformrules(\bterm,\Rules)$, then
$\implformrules(\ell,\Rules) \subseteq \implformrules(\bterm,\Rules)$.
\end{itemize}
For a set of rules $\Pairs$, we define $\implformrules(\Pairs,\Rules) =
\bigcup_{\aterm \to \bterm \in \Pairs}
\implformrules(\aterm,\Rules)$.
\end{definition}

Of course, we need to prove that also $\implformrules$ defines a
formative rules approximation.
To this end, we first observe that $\TCapp$ has various nice
properties, as listed in the following lemma:\pagebreak

\begin{lemma}[$\TCapp$ properties]\label{lem:tcapprop}
\begin{enumerate}
\item\label{tcap:instantiate}
  For every $\aterm$ there is some $\asub$ on domain $\FV(\TCapp(\aterm))$
  such that $\aterm = \TCapp(\aterm)\asub$.
\item\label{tcap:both}
  If $\bterm = \TCapp(\aterm)\asub$, then there is some substitution
  $\bsub$ with $\TCapp(\bterm) = \TCapp(\aterm)\bsub$.
\item\label{tcap:reduce}
  If $\TCapp(\aterm)\asub \arr{\Rules} \bterm$, then we can write
  $\bterm = \TCapp(\aterm)\bsub$ for some substitution $\bsub$.
\end{enumerate}
\end{lemma}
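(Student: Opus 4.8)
The plan is to establish the three items by structural induction on $\aterm$, with the genuine content concentrated in a single auxiliary fact about unification that is reused in items~\ref{tcap:both} and~\ref{tcap:reduce}. For item~\ref{tcap:instantiate} I would induct on $\aterm$, using that $\TCapp(\aterm)$ arises from $\aterm$ by replacing certain subterms with \emph{distinct} fresh variables. If $\aterm$ is a variable, or if the capped term $\afun(\TCapp(\aterm_1),\ldots,\TCapp(\aterm_n))$ unifies with some left-hand side, then $\TCapp(\aterm)$ is a single fresh variable $z$ and $\asub := [z := \aterm]$ works. Otherwise $\TCapp(\aterm) = \afun(\TCapp(\aterm_1),\ldots,\TCapp(\aterm_n))$, and the induction hypothesis supplies $\asub_i$ with $\aterm_i = \TCapp(\aterm_i)\asub_i$; since the fresh variables of the various $\TCapp(\aterm_i)$ are pairwise distinct, the union $\asub := \asub_1 \cup \ldots \cup \asub_n$ is well defined on $\FV(\TCapp(\aterm))$ and satisfies $\aterm = \TCapp(\aterm)\asub$.

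The technical heart, used in both remaining parts, is the observation that \emph{non-unifiability with left-hand sides is preserved under instantiation}: if a term $s$, with variables renamed apart from the rules, does not unify with any left-hand side, then neither does any instance $s\theta$. This I would prove directly, since a unifier of $s\theta$ and $\ell$ would, combined with $\theta$ on the disjoint variable sets, yield a unifier of $s$ and $\ell$. For item~\ref{tcap:both} I then induct on $\aterm$ again. If $\TCapp(\aterm)$ is a fresh variable $z$, then $\bterm = \asub(z)$ and $\bsub := [z := \TCapp(\bterm)]$ gives $\TCapp(\aterm)\bsub = \TCapp(\bterm)$. If $\TCapp(\aterm) = \afun(\TCapp(\aterm_1),\ldots,\TCapp(\aterm_n))$, then $\bterm = \afun(\bterm_1,\ldots,\bterm_n)$ with $\bterm_i = \TCapp(\aterm_i)\asub$, and the induction hypothesis yields $\bsub_i$ with $\TCapp(\bterm_i) = \TCapp(\aterm_i)\bsub_i$. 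The crucial step is that $\bterm$ is still \emph{not} capped at the root: the term $\afun(\TCapp(\bterm_1),\ldots,\TCapp(\bterm_n))$ is an instance of $\afun(\TCapp(\aterm_1),\ldots,\TCapp(\aterm_n))$, which by hypothesis does not unify with any left-hand side, so by the instantiation fact neither does the former. Hence $\TCapp(\bterm) = \afun(\TCapp(\bterm_1),\ldots,\TCapp(\bterm_n)) = \TCapp(\aterm)\bsub$ with $\bsub := \bsub_1 \cup \ldots \cup \bsub_n$, well defined by freshness.

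For item~\ref{tcap:reduce} I would analyse the position $p$ of the contracted redex in $\TCapp(\aterm)\asub$. If $p$ lies at or below one of the fresh-variable leaves $z$ of $\TCapp(\aterm)$, then the step takes place entirely inside $\asub(z)$; as each such $z$ occurs exactly once in $\TCapp(\aterm)$, I replace $\asub$ by the substitution $\bsub$ that maps $z$ to the reduct of $\asub(z)$ and agrees with $\asub$ elsewhere, obtaining $\bterm = \TCapp(\aterm)\bsub$. The complementary case, where $p$ is an internal function-symbol node of the pattern $\TCapp(\aterm)$, I rule out: there $\TCapp(\aterm)|_p$ is a capped term which by definition does not unify with any left-hand side, so by the instantiation fact the actual subterm $(\TCapp(\aterm)|_p)\asub$ does not unify with any left-hand side either; yet a redex at $p$ would force some left-hand side to match, hence unify with, this subterm, a contradiction.

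I expect the main obstacle to be stating and deploying the instantiation fact cleanly while keeping the variable-disjointness and freshness bookkeeping straight: one must rename the rules apart from the $\TCapp$-patterns, and rely throughout on the fact that every occurrence of a capped subterm receives its own fresh variable, so that the partial substitutions produced by the induction can always be merged without clashes.
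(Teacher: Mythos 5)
Your proposal is correct and follows essentially the same route as the paper: induction on $\aterm$ exploiting the pairwise-distinct fresh variables to merge substitutions, and the fact that non-unifiability with left-hand sides is preserved under instantiation to show the root stays uncapped in item~\ref{tcap:both} and to exclude steps at pattern positions in item~\ref{tcap:reduce}. The only differences are presentational: you isolate the instantiation fact as an explicit auxiliary lemma (the paper uses it inline), and you argue item~\ref{tcap:reduce} by a direct analysis of the redex position rather than by structural induction, which changes nothing of substance.
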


\begin{proof}
Each of these statements follows easily with induction on the size of
$\aterm$.

\textbf{(\ref{tcap:instantiate}):} If $\TCapp(\aterm)$ is a variable,
then let $\asub = [\TCapp(\aterm):=\aterm]$.  Otherwise,
$\aterm = \afun(\aterm_1,\ldots,\aterm_n)$ and $\TCapp(\aterm) =
\afun(\TCapp(\aterm_1),\ldots,\TCapp(\aterm_n))$ and there are
$\asub_1,\ldots,\asub_n$ such that each $\aterm_i = \TCapp(\aterm_i)
\gamma_i$.  Since by definition of $\TCapp$ the variables in all
$\TCapp(\aterm_i)$ are disjoint, we can safely define $\gamma :=
\gamma_1 \cup \ldots \cup \gamma_n$.

\textbf{(\ref{tcap:both}):} If $\TCapp(\aterm)$ is a variable, then
we let $\bsub = [\TCapp(\aterm):=\TCapp(\bterm)]$.
Otherwise, $\aterm = \afun(\aterm_1,\ldots,\aterm_n)$ and $\TCapp(
\aterm)=\afun(\TCapp(\aterm_1),\ldots,\TCapp(\aterm_n))$.
Thus we can write $\bterm = \afun(\bterm_1,\ldots,\bterm_n)$ with
each $\bterm_i = \TCapp(\aterm_i)\gamma$.  By the induction
hypothesis, and using again the uniqueness of variables in $\TCapp(
\aterm)$, we thus have $\TCapp(\bterm_i) = \TCapp(\aterm_i)\delta$
for some $\delta$.
If $\TCapp(\bterm) = \afun(\TCapp(\bterm_1),\ldots,\TCapp(\bterm_n))$
we are therefore done.  Otherwise, we must have that
$\afun(\TCapp(\seq{\bterm}))\epsilon = \ell\eta$ for some
left-hand side $\ell$ of $\Rules$ and substitutions $\epsilon,\eta$.
But then also $\afun(\TCapp(\seq{\aterm}))$ unifies with $\ell$,
contradiction with $\TCapp(\aterm)$ not being a variable!

\textbf{(\ref{tcap:reduce}):} If $\TCapp(\aterm)$ is a variable, then
let $\delta = [\TCapp(\aterm):=\bterm]$.  Otherwise, we have $\aterm
= \afun(\aterm_1,\ldots,\aterm_n)$ and $\TCapp(\aterm) =
\afun(\TCapp(\aterm_1),\ldots,\TCapp(\aterm_n))$ does not unify with
any left-hand side.  Thus, the reduction cannot occur at the root; we
can write $\bterm = \afun(\bterm_1,\ldots,\bterm_i,\ldots,\bterm_n)$
with $\TCapp(\aterm_i)\asub \arr{\Rules} \bterm_i$ and
$\TCapp(\aterm_j)\asub = \bterm_j$ for $j \neq i$.  Let each
$\delta_j$ be the restriction of $\gamma$ to variables in
$\TCapp(\aterm_j)$, and let $\delta_i$ be given by the induction
hypothesis, so $\bterm_i = \TCapp(\aterm_i)\delta_i$.  Then $\bterm =
\TCapp(\aterm)\delta$.
\qed
\end{proof}

With these preparations, we can easily obtain the required result:

\begin{theorem}
\label{thm:semimpl:2}
If $\aterm \arrr{\Rules} \ell\gamma$ by a formative
$\ell$-reduction, then this reduction uses only rules in
$\implformrules(\ell,\Rules)$.
Thus, $\implformrules$ is a formative rules approximation.
\end{theorem}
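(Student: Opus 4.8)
The plan is to prove the statement by induction on the definition of a formative $\ell$-reduction, following the proof of Lemma~\ref{lem:redtorules} almost verbatim and only replacing the ``has shape $\afun$'' test by the sharper $\TCapp$-unification test of Definition~\ref{def:formative_impl:2}. The three easy cases carry over directly. If $\ell$ is non-linear then $\implformrules(\ell,\Rules) = \Rules$ and there is nothing to show; if $\ell$ is a variable and $\aterm = \ell\asub$ then the reduction is empty; and if $\ell = \afun(l_1,\ldots,l_n)$, $\aterm = \afun(\aterm_1,\ldots,\aterm_n)$ with each $\aterm_i \arrr{\Rules} l_i\asub$ formative, then the induction hypothesis places every step in some $\implformrules(l_i,\Rules)$, and the second clause of Definition~\ref{def:formative_impl:2} gives $\implformrules(l_i,\Rules) \subseteq \implformrules(\ell,\Rules)$.

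The only real work is the root-step case: $\ell = \afun(l_1,\ldots,l_n)$ and $\aterm \arrr{\Rules} \ell'\bsub \arr{\Rules} r'\bsub = \afun(\bterm_1,\ldots,\bterm_n)$ with each $\bterm_i \arrr{\Rules} l_i\asub$ formative, for some rule $\ell' \arrz r' \in \Rules$. By the induction hypothesis the initial segment $\aterm \arrr{\Rules} \ell'\bsub$ uses only rules of $\implformrules(\ell',\Rules)$, and each $\bterm_i \arrr{\Rules} l_i\asub$ only rules of $\implformrules(l_i,\Rules) \subseteq \implformrules(\ell,\Rules)$. So it suffices to show $\ell' \arrz r' \in \implformrules(\ell,\Rules)$; the last clause of Definition~\ref{def:formative_impl:2} then yields $\implformrules(\ell',\Rules) \subseteq \implformrules(\ell,\Rules)$, closing the case. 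Since $r'\bsub = \afun(\ldots)$, the term $r'$ is either a variable or has the form $\afun(r'_1,\ldots,r'_n)$. If $r'$ is a variable, then $\ell' \arrz r'$ is collapsing and, as $r'\bsub$ has the output sort of $\afun$, namely the sort $\asort$ of $\ell$, the collapsing clause of Definition~\ref{def:formative_impl:2} places it in $\implformrules(\ell,\Rules)$.

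The hard part will be the remaining subcase $r' = \afun(r'_1,\ldots,r'_n)$, where I must verify that $\afun(\TCapp(r'_1),\ldots,\TCapp(r'_n))$ unifies with $\ell$. Here $r'_i\bsub = \bterm_i \arrr{\Rules} l_i\asub$. Using Lemma~\ref{lem:tcapprop}(\ref{tcap:instantiate}) I would write $r'_i = \TCapp(r'_i)\sigma_i$ with $\sigma_i$ on $\FV(\TCapp(r'_i))$, so $\bterm_i = \TCapp(r'_i)(\sigma_i\bsub)$; then, iterating Lemma~\ref{lem:tcapprop}(\ref{tcap:reduce}) along the reduction $\bterm_i \arrr{\Rules} l_i\asub$ (one application per $\arr{\Rules}$-step), I obtain a substitution $\tau_i$ on $\FV(\TCapp(r'_i))$ with $l_i\asub = \TCapp(r'_i)\tau_i$. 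Because the variables of the various $\TCapp(r'_i)$ are fresh and pairwise disjoint by construction of $\TCapp$, and disjoint from $\FV(\ell)$, I can form the single substitution $\theta := \asub \cup \tau_1 \cup \cdots \cup \tau_n$; then $\TCapp(r'_i)\theta = \TCapp(r'_i)\tau_i = l_i\asub = l_i\theta$ for each $i$, so $\afun(\TCapp(r'_1),\ldots,\TCapp(r'_n))\theta = \ell\theta$, which exhibits the desired unifier.

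Once unification is established, the third clause of Definition~\ref{def:formative_impl:2} gives $\ell' \arrz r' \in \implformrules(\ell,\Rules)$, and combined with the inclusion $\implformrules(\ell',\Rules) \subseteq \implformrules(\ell,\Rules)$ this completes the induction; since this holds for the whole reduction of any dependency pair's left-hand side, $\implformrules$ is a formative rules approximation. The main obstacle I anticipate is precisely the unification step: both the clean extension of Lemma~\ref{lem:tcapprop}(\ref{tcap:reduce}) from a single $\arr{\Rules}$-step to the full $\arrr{\Rules}$-reduction, and the bookkeeping that guarantees the per-component witnesses $\tau_i$ have pairwise disjoint domains, so that they genuinely assemble into one substitution $\theta$.
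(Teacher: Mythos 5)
Your proof is correct and follows essentially the same route as the paper's: induction on the formative-reduction definition, with the root-step case settled via Lemma~\ref{lem:tcapprop}(\ref{tcap:instantiate}) and an iterated Lemma~\ref{lem:tcapprop}(\ref{tcap:reduce}) to show $\afun(\TCapp(r'_1),\ldots,\TCapp(r'_n))$ unifies with $\ell$. The only difference is cosmetic: you assemble the global unifier $\theta$ explicitly, where the paper just appeals to linearity of $\ell$ and freshness of the $\TCapp$ variables.
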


\begin{proof}
By induction on the definition of a formative reduction.

If $\ell$ is non-linear, then $\implformrules(\ell,\Rules) = \Rules$, so
this is obvious.

If $\aterm = \ell\gamma$ then the rules do not play a part.

If $\aterm = \afun(\aterm_1,\ldots,\aterm_n)$ and $\ell = \afun(l_1,
\ldots,l_n)$ and each $\aterm_i \arrr{\Rules} l_i\gamma$ by a
formative $l_i$-reduction, then by the induction hypothesis these
reductions only use rules in $\implformrules(l_i,\Rules)$.
Since $\implformrules(l_i,\Rules) \subseteq \implformrules(l,
\Rules)$, we have the desired result.

If $\aterm \arrr{\Rules} \ell'\delta \arr{\Rules} r'\delta =
\afun(\bterm_1,\ldots,\bterm_n) \arrr{\Rules} \afun(l_1,\ldots,l_n)
\gamma = \ell\gamma$, then by the same reasoning the reduction
$r'\delta \arrr{\Rules} \ell\gamma$ uses only formative rules of
$\ell$, and by the induction hypothesis the reduction $\aterm
\arrr{\Rules} \ell'\delta$ uses only formative rules of $\ell'$. We
are done if $\ell' \arrz r'$ is a formative rule of $\ell$.

This is obvious if $r'$ is a variable, since it necessarily has the
same sort as $\ell$ (sorts are preserved under substitution).
If not a variable, so $r' = \afun(r_1,\ldots,r_n)$, then note that
by Lemma~\ref{lem:tcapprop},
(1) $r_i\delta = \TCapp(r_i)\delta'$ for some $\delta'$, and (2)
since $\TCapp(r_i)\delta' \arrr{\Rules} l_i\gamma$ we can write
$l_i\gamma = \TCapp(r_i)\epsilon$ for some substitution $\epsilon$.
Thus, $l_i$ is unifiable with $\TCapp(r_i)$.  By linearity of $\ell$,
and since the variables in $\TCapp$ are all chosen fresh, this means
$\ell$ is unifiable with $\afun(\TCapp(r_1),\ldots,\TCapp(r_n))$.
\qed
\end{proof}

In the next section, we will also introduce $\sformrules$ with
$\TCapp$.

\newpage
\section{Usable and Formative Rules with respect to an Argument Filtering}\label{sec:wrt}

As observed in the text, the way argument filterings are used with
formative rules differs from the way they are used with usable rules.
Essentially: in usable rules we use $\filter(\userules(\Pairs,\Rules,
\pi))$ whereas in formative rules we use $\formrules(\filter(\Pairs),
\filter(\Rules))$.

Now, we could easily have defined $\baseformrules$ in exactly the
same way as $\userules$, taking the argument filtering into account.
The reason we did not do so is twofold:
\begin{itemize}
\item the \emph{semantic} definition of formative rules with respect
  to an argument filtering is somewhat more complicated;
\item $\formrules(\filter(\Pairs),\filter(\Rules)) \subseteq
  \filter(\baseformrules(\Pairs,\Rules,\pi))$ when defined in the
  obvious way.
\end{itemize}

However, this result does not hold when using $\implformrules$ instead
of $\baseformrules$.  When determining $\implformrules(\afun(l_1,
\ldots,l_n),\Rules,\pi)$ we check whether, for rules $\ell \arrz
\afun(r_1,\ldots,r_n)$, each $\TCapp(r_i)$ unifies with $l_i$; also
for $i \notin \pi(\afun)$!

Therefore, let us now introduce a semantic definition of formative
rules with respect to an argument filtering, and show how it can be
used!

\subsection{Semantic Formative Rules with an Argument Filtering}

Before being able to give the desired semantic definition, we will
need some additional terminology.

\begin{definition}\label{def:regarded}
A reduction step $\aterm \arr{\Rules} \bterm$ occurs at a
\emph{regarded position} of $\aterm$ for an argument filtering $\pi$
if:\footnote{Giesl et al.\ give a similar definition of regarded
  positions in \cite{gie:thi:sch:fal:06}.}
\begin{itemize}
\item $\aterm = \ell\gamma$ and $\bterm = r\gamma$ for $\ell \arrz r
  \in \Rules$ and some substitution $\gamma$, \emph{or}
\item $\aterm = \afun(\aterm_1,\ldots,\aterm_i,\ldots,\aterm_n)$,\ 
  $\bterm = \afun(\aterm_1,\ldots,\aterm_i',\ldots,\aterm_n)$ and
  $\aterm_i \arr{\Rules} \aterm_i'$ and $i \in \pi(\afun)$
\end{itemize}
\end{definition}

As might be expected, regarded positions find their use in filtering:

\begin{lemma}\label{lem:regarded:step}
If $\aterm \arr{\Rules} \bterm$ occurs at a regarded
position for $\pi$, then $\filter(\aterm) \arr{\filter(\Rules)}
\filter(\bterm)$.
\end{lemma}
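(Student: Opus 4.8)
The plan is to prove this as a \emph{sharpening} of Lemma~\ref{lem:filterreduce} by induction on the derivation that $\aterm \arr{\Rules} \bterm$ occurs at a regarded position, following the two clauses of Definition~\ref{def:regarded}. Lemma~\ref{lem:filterreduce} only guarantees $\filter(\aterm) \arr{\filter(\Rules)}^= \filter(\bterm)$, i.e.\ zero or one step, and the sole way the zero-step alternative arises in that proof is a reduction inside a filtered-away argument $i \notin \pi(\afun)$. The regardedness hypothesis is engineered precisely to forbid that case, so I expect to recover a genuine single $\arr{\filter(\Rules)}$ step in every branch.

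For the base case the step is at the root, so $\aterm = \ell\gamma$ and $\bterm = r\gamma$ for some $\ell \arrz r \in \Rules$. Here I would apply Lemma~\ref{lem:filtersubstitute} to get $\filter(\aterm) = \filter(\ell)\gamma^{\filter}$ and $\filter(\bterm) = \filter(r)\gamma^{\filter}$; since $\filter(\ell) \arrz \filter(r) \in \filter(\Rules)$ by definition of $\filter(\Rules)$, these two terms are related by exactly one $\arr{\filter(\Rules)}$ step.

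For the inductive case I would take $\aterm = \afun(\aterm_1,\ldots,\aterm_i,\ldots,\aterm_n)$ and $\bterm = \afun(\aterm_1,\ldots,\aterm_i',\ldots,\aterm_n)$ with $\aterm_i \arr{\Rules} \aterm_i'$ occurring at a regarded position of $\aterm_i$ and $i \in \pi(\afun)$. Writing $\pi(\afun) = \{i_1,\ldots,i_k\}$, the hypothesis $i \in \pi(\afun)$ gives $i = i_l$ for some $l$, so the $i$-th argument survives and occurs in $\filter(\aterm) = \afun_\pi(\filter(\aterm_{i_1}),\ldots,\filter(\aterm_{i_k}))$. The induction hypothesis yields the genuine step $\filter(\aterm_i) \arr{\filter(\Rules)} \filter(\aterm_i')$, and since $\arr{\filter(\Rules)}$ is closed under contexts this lifts to $\filter(\aterm) \arr{\filter(\Rules)} \filter(\bterm)$, the remaining filtered arguments being left untouched.

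The one point I would flag as needing care, and the real crux of the argument, is reading the second clause of Definition~\ref{def:regarded} recursively: the inner step $\aterm_i \arr{\Rules} \aterm_i'$ must itself occur at a regarded position of $\aterm_i$, not merely satisfy $i \in \pi(\afun)$. Without this, the statement is false --- a redex buried inside a filtered-away subargument of $\aterm_i$ would leave $\filter(\aterm) = \filter(\bterm)$ with no step at all --- so the induction hypothesis can only be invoked on a step that is regarded all the way down to the contracted redex. With that reading in place the two cases close the induction and deliver exactly the single-step strengthening of Lemma~\ref{lem:filterreduce} that regarded reductions require.
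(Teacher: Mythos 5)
Your proof is correct and follows essentially the same route as the paper's: induction on the definition of regarded positions, with the root case handled by Lemma~\ref{lem:filtersubstitute} and the argument case by the induction hypothesis plus context closure. Your flagged point about reading the second clause of Definition~\ref{def:regarded} recursively is exactly the reading the paper's own proof silently relies on when it invokes the induction hypothesis on the inner step.
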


\begin{proof}
By induction on Definition~\ref{def:regarded}.  If $\aterm =
\ell\gamma \arr{\Rules} r\gamma = \bterm$ for some $\ell \arrz r \in
\Rules$, then $\filter(\aterm) = \filter(\ell)\gamma^{\filter}$ by
Lemma~\ref{lem:filtersubstitute}, $\arr{\filter(\Rules)}
\filter(r)\gamma^{\filter} = \filter(r\gamma)$.

Otherwise, $\aterm = \afun(\ldots,\aterm_j,\ldots)$ and $\bterm =
\afun(\ldots,\aterm_j',\ldots)$ and $\aterm_j \arr{\Rules} \aterm_j'$
and $\pi(\afun) = \{i_1,\ldots,i_k\}$ and $j = i_l$ for some $l$; we
have $\filter(\aterm) = \afun_\pi(\filter(\aterm_{i_1}),\ldots,
\filter(\aterm_{i_l}),\ldots,\filter(\aterm_{i_k}))\linebreak
\arr{\filter(\Rules)} \afun_\pi(\filter(\aterm_{i_1}),\ldots,
\filter(\aterm_{i_l}'),\ldots,\filter(\aterm_{i_k}))$ by the induction
hypothesis, $= \filter(\bterm)$.
\qed
\end{proof}

\begin{lemma}\label{lem:regarded:nostep}
If $\aterm \arr{\Rules} \bterm$ does not occur at a regarded
position for $\pi$, then $\filter(\aterm) = \filter(\bterm)$.
\end{lemma}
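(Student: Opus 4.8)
The plan is to prove this by induction on the derivation of $\aterm \arr{\Rules} \bterm$, mirroring the proof of Lemma~\ref{lem:regarded:step} but this time exploiting the \emph{failure} of the regardedness conditions rather than their success. The first observation is that, since every root step is by definition at a regarded position, the hypothesis forces the contracted redex to lie strictly below the root. Hence I can write $\aterm = \afun(\aterm_1,\ldots,\aterm_j,\ldots,\aterm_n)$ and $\bterm = \afun(\aterm_1,\ldots,\aterm_j',\ldots,\aterm_n)$ with $\aterm_j \arr{\Rules} \aterm_j'$, and I set $\pi(\afun) = \{i_1,\ldots,i_k\}$.

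I would then split on whether $j \in \pi(\afun)$. If $j \notin \pi(\afun)$, then the $j$-th argument is discarded by the filtering, so $\filter(\aterm) = \afun_\pi(\filter(\aterm_{i_1}),\ldots,\filter(\aterm_{i_k})) = \filter(\bterm)$ immediately: the terms $\aterm$ and $\bterm$ agree on every argument except the $j$-th, and the $j$-th does not occur among $i_1,\ldots,i_k$.

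The remaining case is $j = i_l \in \pi(\afun)$. Here the key step is to read the recursive structure of Definition~\ref{def:regarded} contrapositively. If the inner step $\aterm_j \arr{\Rules} \aterm_j'$ \emph{were} at a regarded position of $\aterm_j$, then — since $j \in \pi(\afun)$ — the second clause of Definition~\ref{def:regarded} would make the whole step $\aterm \arr{\Rules} \bterm$ regarded, contradicting the hypothesis. Therefore the inner step is itself not at a regarded position, and the induction hypothesis applies to it, yielding $\filter(\aterm_j) = \filter(\aterm_j')$. Substituting this into $\filter(\aterm) = \afun_\pi(\filter(\aterm_{i_1}),\ldots,\filter(\aterm_{i_l}),\ldots,\filter(\aterm_{i_k}))$ gives exactly $\filter(\aterm) = \filter(\bterm)$.

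I do not expect any genuine obstacle; the statement is the natural dual of Lemma~\ref{lem:regarded:step}, and the $\filter$-bookkeeping is identical. The only point requiring care is the contrapositive unwinding of the recursive ``regarded position'' notion in the case $j \in \pi(\afun)$: one must argue that non-regardedness of the compound step propagates down to non-regardedness of the inner step, so that the induction hypothesis is genuinely available. Once that propagation is stated cleanly, the two cases close the induction.
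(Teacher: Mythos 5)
Your proof is correct and follows essentially the same route as the paper's: rule out a root step, decompose into the two cases $j \notin \pi(\afun)$ (the differing argument is filtered away) and $j \in \pi(\afun)$ (the inner step is not regarded, so the induction hypothesis gives $\filter(\aterm_j) = \filter(\aterm_j')$). Your explicit remark about reading the regardedness definition recursively/contrapositively is exactly the step the paper's proof also relies on.
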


\begin{proof}
By induction on the definition of reduction.  The reduction cannot
take place at the root, as this is a regarded position.  Thus,
$\aterm = \afun(\ldots,\aterm_i,\ldots)$ and $\bterm = \afun(\ldots,
\aterm_i',\ldots)$ with $\aterm_i \arr{\Rules} \aterm_i'$.  If
$i \in \pi(\afun)$, then the reduction $\aterm_i \arr{\Rules}
\aterm_i'$ does not occur at a regarded position; hence by the
induction hypothesis $\filter(\aterm_i) = \filter(\aterm_i')$, and
therefore $\filter(\aterm) = \afun_\pi(\ldots,\filter(\aterm_i),
\ldots) = \afun_\pi(\ldots,\filter(\aterm_i'),\ldots) =
\filter(\bterm)$.
Otherwise we directly see that $\filter(\aterm) = \filter(\bterm)$,
as the argument where they differ is filtered away.
\qed
\end{proof}

\begin{definition}[Formative Rules with respect to an Argument
  Filtering]
A function $\formrules$ that maps a term $\ell$, a set of rules
$\Rules$ and an argument filtering $\pi$ to a set $\formrules(\ell,
\Rules,\pi) \subseteq \Rules$ is a formative rules approximation
if for all $\aterm$ and $\gamma$:
if $\aterm \arrr{\Rules} \ell\gamma$ by a formative $\ell$-reduction,
then this reduction uses only rules in $\formrules(\ell,\Rules,\pi)$
at regarded positions for $\pi$.

Given an approximation $\formrules$, we define:
$\formrules(\Pairs,\Rules,\pi) = 
\bigcup_{s \to t \in \Pairs} \formrules(s,\Rules,\pi)$.
\end{definition}

To use this extended definition, we have the following result:

\begin{lemma}\label{lem:filteredformativeusage}
Let $\formrules$ be a formative rules approximation.
Given a formative $(\Pairs,\Rules)$-chain $\rijtje{(\ell_i \arrz r_i,
\gamma_i) \mid i \in \N}$ and an argument filtering $\pi$, each
$\filter(r_i)\gamma_i^{\filter} \arrr{F} \filter(\ell_{i+1})
\gamma_{i+1}^{\filter}$, where $F =  \filter(\formrules(\Pairs,
\Rules,\pi))$.
\end{lemma}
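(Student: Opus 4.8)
The plan is to take the formative $\ell_{i+1}$-reduction underlying the chain and push the filter $\filter$ through it one step at a time. Fix an index $i$. Because the chain is formative, the reduction $r_i\gamma_i \arrr{\Rules} \ell_{i+1}\gamma_{i+1}$ is a formative $\ell_{i+1}$-reduction. Hence, by the definition of a formative rules approximation with respect to $\pi$, every step of this reduction that occurs at a regarded position for $\pi$ uses a rule in $\formrules(\ell_{i+1},\Rules,\pi)$. This is the sole property of $\formrules$ I will need.

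First I would normalise the two endpoints. By Lemma~\ref{lem:filtersubstitute} we have $\filter(r_i\gamma_i) = \filter(r_i)\gamma_i^{\filter}$ and $\filter(\ell_{i+1}\gamma_{i+1}) = \filter(\ell_{i+1})\gamma_{i+1}^{\filter}$, so it suffices to show $\filter(r_i\gamma_i) \arrr{F} \filter(\ell_{i+1}\gamma_{i+1})$. I would then write the formative reduction as a finite sequence $r_i\gamma_i = s_0 \arr{\Rules} s_1 \arr{\Rules} \cdots \arr{\Rules} s_m = \ell_{i+1}\gamma_{i+1}$ and treat each step $s_j \arr{\Rules} s_{j+1}$ separately, distinguishing whether or not it occurs at a regarded position.

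If the step occurs at a regarded position, then by the property above it uses some rule $\ell' \arrz r' \in \formrules(\ell_{i+1},\Rules,\pi)$, and Lemma~\ref{lem:regarded:step}, whose proof (like that of Lemma~\ref{lem:filterreduce}) tracks the specific rule applied, gives $\filter(s_j) \arr{\{\filter(\ell') \arrz \filter(r')\}} \filter(s_{j+1})$ when applied to the singleton rule set. Since $\ell_{i+1} \arrz r_{i+1} \in \Pairs$, we have $\formrules(\ell_{i+1},\Rules,\pi) \subseteq \formrules(\Pairs,\Rules,\pi)$, so $\filter(\ell') \arrz \filter(r') \in \filter(\formrules(\Pairs,\Rules,\pi)) = F$, yielding a single $\arr{F}$ step. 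If instead the step does not occur at a regarded position, then Lemma~\ref{lem:regarded:nostep} gives $\filter(s_j) = \filter(s_{j+1})$, contributing nothing. Concatenating over all $j$ produces $\filter(r_i)\gamma_i^{\filter} \arrr{F} \filter(\ell_{i+1})\gamma_{i+1}^{\filter}$, as required.

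The argument is essentially bookkeeping, so I do not expect a genuine obstacle; the delicate point is matching the hypotheses precisely. The approximation only constrains rules used at regarded positions, while the filtered relation only records steps at such positions (the rest vanish under $\filter$ by Lemma~\ref{lem:regarded:nostep}), so the two restrictions line up exactly. The other point requiring care is tracking the identity of the rule through the filter, so that a regarded step with rule $\ell' \arrz r'$ becomes a step with the specific filtered rule $\filter(\ell') \arrz \filter(r') \in F$, rather than merely some rule of $\filter(\Rules)$; this is supplied by the rule-tracking in the proofs of Lemmas~\ref{lem:filtersubstitute} and~\ref{lem:regarded:step}.
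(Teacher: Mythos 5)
Your proof is correct and follows essentially the same route as the paper: both decompose the formative reduction step by step, apply Lemma~\ref{lem:regarded:step} (with rule tracking via the singleton-rule-set observation) to regarded steps and Lemma~\ref{lem:regarded:nostep} to the others, and normalise the endpoints with Lemma~\ref{lem:filtersubstitute}. The paper merely states this combination more compactly as a single claim about reductions whose regarded steps all lie in a subset $T \subseteq \Rules$.
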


\begin{proof}
Combining Lemmas~\ref{lem:regarded:step} and~\ref{lem:regarded:nostep}:
if in a reduction $\aterm_1 \arr{\{\ell_1 \arrz r_1\}} \aterm_2 \arr{
\{\ell_2 \arrz r_2\}} \ldots \arr{\{\ell_{n-1} \arrz r_{n-1}\}}
\aterm_n$ only steps with rules in $T \subseteq \Rules$ occur at
regarded positions, then $\filter(\aterm_1) \arrr{\filter(T)}
\filter(\aterm_n)$.
Thus, by definition of a formative rules approximation and
Lemma~\ref{lem:filtersubstitute},
always $\filter(r_i)\gamma^{\filter} = \filter(r_i\gamma)
\arrr{\filter(\formrules(\Pairs,\Rules,\pi))} \filter(\ell_{i+1}
\gamma_{i+1}) = \filter(\ell_{i+1})\gamma_{i+1}^{\filter}$.
\qed
\end{proof}

These results culminate in the following alternative to
Theorem~\ref{thm:redpairprocessor}:

\begin{theorem}\label{thm:redpairffrfirst}
Let $(\succsim,\succ)$ be a reduction pair and $\pi$ an argument
filtering. The processor which maps a DP problem $(\Pairs,\Rules,
\flag)$ to the following result is sound:
\begin{itemize}
\item $\{(\Pairs \setminus \Pairs^\succ,\Rules,\flag)\}$ if:
  \begin{itemize}
  \item $\filter(\ell) \succ \filter(r)$ for $\ell \arrz r \in \Pairs^\succ$ and
  $\filter(\ell) \succsim \filter(r)$ for $\ell \arrz r \in \Pairs \setminus \Pairs^\succ$;
  \item $\filter(\ell) \succsim \filter(r)$ for $\ell \arrz r \in \formrules(\Pairs,U,\pi)$ \\
    where $U = \Rules$ if $\flag = \all$ and $U = \userules(\Pairs,\Rules,\pitriv) \cup \Ce$ if $\flag = \minimal$;
  \end{itemize}
\item $\{(\Pairs, \Rules, \flag)\}$ otherwise.
\end{itemize}
\end{theorem}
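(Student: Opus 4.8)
The plan is to follow the proof of Theorem~\ref{thm:redpairprocessor} almost verbatim, but to replace its use of Lemma~\ref{lem:changechain} together with the \emph{unfiltered} formative rules by the filtered-formative machinery of this section, so that the single filtering step is deferred to Lemma~\ref{lem:filteredformativeusage}. As in Theorem~\ref{thm:redpairprocessor}, I would first reduce to the case $\pi(\Fc_\asort) = \pitriv(\Fc_\asort) = \{1,2\}$, so that the $\Fc_\asort$ symbols are never filtered (a general filtering $\rho$ is handled by the same $\rho = \sigma \circ \pi$ splitting). The argument runs by contraposition: assuming $(\Pairs,\Rules,\flag)$ is not finite, I would construct an infinite $(\Pairs \setminus \Pairs^\succ,\Rules)$-chain, minimal whenever $\flag = \minimal$, contradicting finiteness of the result problem.

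Concretely, suppose there is an infinite $(\Pairs,\Rules)$-chain $\rijtje{(\ell_i \arrz r_i,\gamma_i) \mid i \in \N}$, minimal if $\flag = \minimal$. The first step is to pass to an infinite \emph{unfiltered} $(\Pairs,U)$-chain over the same sequence of pairs. If $\flag = \all$ this is immediate with $U = \Rules$. If $\flag = \minimal$, I would invoke Lemma~\ref{lem:usablechain} with the \emph{trivial} filtering $\pitriv$; since $\overline{\pitriv}$ is the identity on terms, $\overline{\pitriv}(\Pairs) = \Pairs$ and $\overline{\pitriv}(\userules(\Pairs,\Rules,\pitriv)) = \userules(\Pairs,\Rules,\pitriv)$, so this yields an infinite $(\Pairs, \userules(\Pairs,\Rules,\pitriv) \cup \Ce)$-chain, i.e.\ a $(\Pairs,U)$-chain. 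Next I would apply Lemma~\ref{lem:changechain} with rule set $U$, obtaining a chain $\rijtje{(\ell_i \arrz r_i,\delta_i) \mid i \in \N}$ whose reductions $r_i\delta_i \arrr{U} \ell_{i+1}\delta_{i+1}$ are formative $\ell_{i+1}$-reductions over $U$; this is a formative $(\Pairs,U)$-chain. Finally, Lemma~\ref{lem:filteredformativeusage} gives $\filter(r_i)\delta_i^{\filter} \arrr{F} \filter(\ell_{i+1})\delta_{i+1}^{\filter}$ with $F = \filter(\formrules(\Pairs,U,\pi))$, so $\rijtje{(\filter(\ell_i) \arrz \filter(r_i),\delta_i^{\filter}) \mid i \in \N}$ is an infinite chain whose $\arrr{F}$-reductions use only rules from $F$.

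The proof then closes with the standard reduction pair argument. By hypothesis $\filter(\ell) \succsim \filter(r)$ for every $\ell \arrz r \in \formrules(\Pairs,U,\pi)$, and since $\succsim$ is monotonic and stable we get $\arrr{F}\ \subseteq\ \succsim$; thus each $\filter(r_i)\delta_i^{\filter} \succsim \filter(\ell_{i+1})\delta_{i+1}^{\filter}$. Combining this with $\filter(\ell_i) \succ \filter(r_i)$ for $\ell_i \arrz r_i \in \Pairs^\succ$ and $\filter(\ell_i) \succsim \filter(r_i)$ otherwise (both closed under substitution), compatibility $\succ \cdot \succsim\ \subseteq\ \succ$ and well-foundedness of $\succ$ force the pairs of $\Pairs^\succ$ to occur only finitely often. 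Since all of the transformations above preserve the sequence of pairs, the original $(\Pairs,\Rules)$-chain therefore has a tail using only pairs in $\Pairs \setminus \Pairs^\succ$; this tail is minimal whenever the original chain was, yielding the required infinite $(\Pairs \setminus \Pairs^\succ,\Rules,\flag)$-chain.

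The hard part — and the only genuinely new point relative to Theorem~\ref{thm:redpairprocessor} — is the choice of $\pitriv$ rather than $\pi$ for the usable rules in the $\flag = \minimal$ case, and seeing why this is exactly what the filtered-formative approach demands. Lemma~\ref{lem:usablechain} inevitably produces an \emph{already filtered} chain, whereas Lemma~\ref{lem:filteredformativeusage} must be fed a genuinely \emph{unfiltered} formative chain over $U$, so that $\pi$ can be applied afterwards via the notion of regarded positions. Taking usable rules w.r.t.\ $\pi$ would filter the chain prematurely and make the two filtering operations interfere, whereas with $\pitriv$ the map $\overline{\pitriv}$ is the identity, keeping the intermediate chain unfiltered over $U$ and letting the single filtering step occur cleanly inside Lemma~\ref{lem:filteredformativeusage}. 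A minor remaining check, the non-finitely-branching case, is absorbed by Lemma~\ref{lem:usablechain}, which then sets $\userules(\Pairs,\Rules,\pitriv) = \Rules$.
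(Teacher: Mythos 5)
Your proof is correct and takes essentially the same route as the paper's: keep the chain (or apply Lemma~\ref{lem:usablechain} with the trivial filtering when $\flag = \minimal$) to obtain an unfiltered $(\Pairs,U)$-chain, make it formative via Lemma~\ref{lem:changechain}, and only then filter through Lemma~\ref{lem:filteredformativeusage} before closing with the standard reduction-pair argument. Your added remarks on normalising $\pi(\Fc_\asort)$ and on why $\pitriv$ rather than $\pi$ is forced in the minimal case are consistent elaborations of the paper's (much terser) proof, not a different approach.
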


\begin{proof}
Given an infinite $(\Pairs,\Rules)$-chain, we either keep it as it is
(if $\flag = \all$) or use Lemma~\ref{lem:usablechain} \emph{with an
empty filtering} to turn it into a $(\Pairs,U)$-chain (if $\flag =
\minimal$).  Using Lemma~\ref{lem:changechain} we can turn this chain
formative, and then use Lemma~\ref{lem:filteredformativeusage} to
obtain an infinite decreasing $\succsim$-chain, which contradicts
well-foundedness of $\succ$ unless the elements of $\Pairs^\succ$ only
occur finitely often.
\qed
\end{proof}

\subsection{Syntactic Formative Rules with an Argument Filtering}

We can easily extend the definition of $\baseformrules$ to take into
account the argument filtering directly:

\begin{definition}
Let $\bterm$ be a term, $\Rules$ a set of rules and $\pi$ an argument
filtering.  $\baseformrules(\bterm,\Rules,\pi)$ is the smallest set
$\subseteq \Rules$ such that:
\begin{itemize}
\item if $\bterm$ is not linear, then $\baseformrules(\bterm,\Rules,\pi) =
  \Rules$;
\item if $\bterm = \afun(\bterm_1,\ldots,\bterm_n)$, then:
  \begin{itemize}
  \item $\baseformrules(\bterm_i,\Rules,\pi) \subseteq
    \baseformrules(\bterm,\Rules,\pi)$ for all $i$ \fbox{$\in \pi(\afun)$};
  \item $\{ \ell \to r \in \Rules \mid r$ has shape $\afun \} \subseteq
    \baseformrules(\bterm,\Rules,\pi)$;
  \end{itemize}
\item if $\ell \arrz r \in \baseformrules(\bterm,\Rules,\pi)$, then
  $\baseformrules(\ell,\Rules,\pi) \subseteq \baseformrules(\bterm,\Rules,\pi)$.
\end{itemize}
For rules $\Pairs$, let $\baseformrules(\Pairs,\Rules,\pi) =
\bigcup_{s \to t \in \Pairs} \baseformrules(s,\Rules,\pi)$.
\end{definition}

Compared to Definition~\ref{def:form}, the only difference (apart from
the additional argument $\pi$) is that we
added ``$\in \pi(\afun)$'' in the second condition, as \fbox{highlighted};
thus, $\formrules(\Pairs,\Rules) = \formrules(\Pairs,\Rules,\pitriv)$.
We will see that this is still a formative rules approximation as a
consequence of Theorem~\ref{thm:extendedapproximation}; but first we
must substantiate the claim we made before, that this definition does
not really add anything compared to the definition in the
text.

\begin{lemma}
For all $\Pairs,\Rules,\pi$:
$\baseformrules(\filter(\Pairs),\filter(\Rules)) \subseteq
\filter(\baseformrules(\Pairs,\Rules,\pi))$.
\end{lemma}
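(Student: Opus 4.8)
The plan is to reduce the statement to a per-term inclusion and then establish it by a minimality (least-fixed-point) argument. Both sides distribute over the pairs, since $\baseformrules(\filter(\Pairs),\filter(\Rules)) = \bigcup_{s \arrz t \in \Pairs} \baseformrules(\filter(s),\filter(\Rules))$ and $\filter(\baseformrules(\Pairs,\Rules,\pi)) = \bigcup_{s \arrz t \in \Pairs} \filter(\baseformrules(s,\Rules,\pi))$, so it suffices to prove, for every term $\ell$, that $\baseformrules(\filter(\ell),\filter(\Rules)) \subseteq \filter(\baseformrules(\ell,\Rules,\pi))$.

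First I would record three correspondences between a term and its filtering. (a)~Filtering only deletes variable occurrences, so if $\filter(\ell)$ is non-linear then $\ell$ is non-linear; conversely every linear term over the filtered signature has a linear $\filter$-preimage, obtained by filling the dropped argument positions with fresh distinct variables of the appropriate sorts. (b)~Filtering preserves output sorts and commutes with rooting, so for a rule $\ell'' \arrz r'' \in \Rules$ the filtered right-hand side $\filter(r'')$ \emph{has shape} $\afun_\pi$ exactly when $r''$ \emph{has shape} $\afun$ (both the root-symbol case and the variable-of-correct-sort case match). (c)~The direct subterms of $\filter(\ell) = \afun_\pi(\filter(\ell_{i_1}),\ldots,\filter(\ell_{i_k}))$ are precisely the filterings of the $\pi$-regarded arguments $\ell_i$ with $i \in \pi(\afun)$, which is exactly the restriction ``$i \in \pi(\afun)$'' built into the subterm clause of $\baseformrules(\cdot,\cdot,\pi)$.

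The core of the argument is to exhibit an upper bound for the least family $B_\bterm := \baseformrules(\bterm,\filter(\Rules))$. To sidestep the non-injectivity of $\filter$, I would define, for each term $\bterm$ over the filtered signature,
\[
C_\bterm \;=\; \bigcap\{\,\filter(\baseformrules(s,\Rules,\pi)) \mid s \text{ linear},\ \filter(s) = \bterm\,\},
\]
taking $C_\bterm = \filter(\Rules)$ when $\bterm$ has no linear preimage (i.e.\ $\bterm$ is non-linear). Using (a)--(c) together with the defining closure clauses of $\baseformrules(\cdot,\Rules,\pi)$, I would verify that the family $(C_\bterm)_\bterm$ satisfies all four closure clauses defining $\baseformrules(\cdot,\filter(\Rules))$: the non-linear and subterm clauses follow directly from (a) and (c), and the shape clause from (b), each inclusion being shown for an arbitrary linear preimage $s$ and then intersected. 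By minimality of $B$ this yields $B_\bterm \subseteq C_\bterm$ for all $\bterm$; taking $\bterm = \filter(\ell)$ and noting that $\ell$ is itself a linear preimage when $\ell$ is linear (and that the right-hand side already equals $\filter(\Rules)$ when $\ell$ is non-linear) gives the desired per-term inclusion.

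The main obstacle I anticipate is the transitive-closure clause (if $\ell' \arrz r'$ is already included, then the formative rules of $\ell'$ must be included). Here a rule $u \arrz v \in C_\bterm$ arises, for each linear preimage $s$, from some rule $\ell''_s \arrz r''_s \in \baseformrules(s,\Rules,\pi)$ with $\filter(\ell''_s) = u$, and this preimage $\ell''_s$ of the left-hand side need \emph{not} be linear. I would resolve this by a case split: if $\ell''_s$ is linear it is a genuine linear preimage of $u$, whence $C_u \subseteq \filter(\baseformrules(\ell''_s,\Rules,\pi)) \subseteq \filter(\baseformrules(s,\Rules,\pi))$ by the transitive-closure clause of the $\pi$-definition; if $\ell''_s$ is non-linear then $\baseformrules(\ell''_s,\Rules,\pi) = \Rules$, which forces $\filter(\baseformrules(s,\Rules,\pi)) = \filter(\Rules)$ and hence $C_u \subseteq \filter(\Rules) = \filter(\baseformrules(s,\Rules,\pi))$. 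Intersecting over all linear preimages $s$ gives $C_u \subseteq C_\bterm$, closing this clause and completing the proof.
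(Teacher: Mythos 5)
Your proof is correct, and it rests on exactly the facts the paper's proof uses: filtering never duplicates variable occurrences (so non-linearity of $\filter(\ell)$ forces non-linearity of $\ell$), ``has shape $\afun_\pi$'' after filtering corresponds to ``has shape $\afun$'' before (both in the root-symbol and in the variable-of-the-right-sort case), and the direct subterms of $\filter(\ell)$ are precisely the filterings of the arguments at positions in $\pi(\afun)$, matching the ``$i \in \pi(\afun)$'' restriction in the parametrised definition. The packaging, however, is genuinely different. The paper proves the stronger per-derivation statement that every rule in $\baseformrules(\filter(\aterm),\filter(\Rules))$ equals $\filter(\dterm) \arrz \filter(\eterm)$ for some $\dterm \arrz \eterm \in \baseformrules(\aterm,\Rules,\pi)$, by rule induction carrying the fixed unfiltered term $\aterm$ along; since its induction hypothesis always returns a concrete preimage rule, the transitive-closure clause is discharged by one further appeal to the induction hypothesis at that preimage's left-hand side, with no case distinction (the non-linear-preimage situation is silently absorbed because the claim for a non-linear term is handled by the blanket first case). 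Your argument instead lives entirely on the filtered side, exhibits the family $C_\bterm$ as a pre-fixed point of the clauses, and compensates for the non-injectivity of $\filter$ by intersecting over all linear preimages; the price is precisely the complication you flag in the closure clause, where the witnessing preimage $\ell''_s$ of a left-hand side need not be linear, and your case split (genuine linear preimage versus collapse of $\baseformrules(s,\Rules,\pi)$ to $\Rules$) resolves it correctly. What your route buys is that the only induction principle invoked is minimality of the inductively defined set, with no explicit talk of derivations; what the paper's route buys is a slightly stronger conclusion (a matching preimage for both sides of each rule) and a shorter closure case. One cosmetic point: ``filtering only deletes variable occurrences'' is imprecise --- it deletes whole argument subterms --- but the property you actually use, that no variable occurrence is duplicated, is what holds and what you need.
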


\begin{proof}
It suffices if for all terms $\aterm$ we have:
if $\bterm \arrz \cterm \in \baseformrules(\filter(\aterm),
\filter(\Rules))$, then there is $\dterm \arrz \eterm \in
\baseformrules(\aterm,\Rules,\pi)$ such that $\bterm =
\filter(\dterm)$ and $\cterm = \filter(\eterm)$.  This we
prove by induction on the definition of
$\baseformrules(\filter(\aterm),\filter(\Rules))$.

First, note that if $\aterm$ is not linear, then
$\filter(\baseformrules(\Pairs,\Rules,\pi)) = \filter(\Rules)$, so
the requirement is definitely satisfied.  So let us henceforth assume
that $\aterm$ is linear.  Since filtering does not duplicate
variables, also $\filter(\aterm)$ is linear.

If $\aterm$ is a variable, then also $\filter(\aterm)$ is; thus,
$\baseformrules(\filter(\Pairs),\filter(\Rules)) = \emptyset =
\filter(\baseformrules(\Pairs,\linebreak
\Rules,\pi))$.  Therefore, let us
assume that $\aterm$ has the form $\afun(\aterm_1,\ldots,\aterm_n)$.
Let $\pi(\afun) = \{i_1,\ldots,i_k\}$, so $\filter(\aterm) =
\afun_\pi(\filter(\aterm_{i_1}),\ldots,\filter(\aterm_{i_k}))$.
If $\bterm \arrz \cterm \in \baseformrules(\filter(\aterm),
\filter(\Rules))$, this is for one of the following reasons:

\begin{itemize}
\item $\bterm \arrz \cterm \in \baseformrules(\filter(\aterm_i),
  \filter(\Rules))$ for some $i \in \pi(\afun)$; by the induction
  hypothesis we can find suitable $\dterm,\eterm$ such that
  $\dterm \arrz \eterm \in \baseformrules(\aterm_i,\Rules,\pi)
  \subseteq \baseformrules(\aterm,\Rules,\pi)$.
\item $\cterm$ has shape $\afun_\pi$; since $\baseformrules(
  \filter(\aterm),\filter(\bterm)) \subseteq \filter(\Rules)$, we
  can find $\dterm,\eterm$ such that $\bterm = \filter(\dterm)$ and
  $\cterm = \filter(\eterm)$.  Now, if $\cterm$ is a variable, then
  so is $\eterm$; if $\cterm$ has root symbol $\afun_\pi$, then
  $\eterm$ has root symbol $\afun$.  Either way, $\eterm$ has shape
  $\afun$, so $\dterm \arrz \eterm \in \baseformrules(\aterm,
  \Rules,\pi)$.
\item $\bterm \arrz \cterm \in \baseformrules(\bterm',
  \filter(\Rules))$ for some $\bterm' \arrz \cterm' \in
  \baseformrules(\filter(\aterm),\filter(\Rules))$.  By the
  induction hypothesis, we can write $\bterm' = \filter(\ell)$ and
  $\cterm' = \filter(r)$ for some $\ell \arrz r \in
  \baseformrules(\aterm,\Rules,\pi)$., and suitable $\dterm,\eterm$
  exist such that $\dterm \arrz \eterm \in \baseformrules(\ell,
  \Rules,\pi)$.  Then also $\dterm \arrz \eterm \in \baseformrules(
  \aterm,\Rules,\pi)$.
\end{itemize}
\qed
\end{proof}

Thus, we indeed do not lose out by using $\baseformrules$ directly on
filtered terms rather than using the extended definition which takes
the argument filtering $\pi$ into account.  However, it \emph{does}
make a relevant difference for the extended definition of
$\implformrules$:

\begin{definition}[Formative Rules with $\TCapp$ and an Argument Filtering]
Let $\bterm$ be a term, $\Rules$ a set of rules and $\pi$ an
argument filtering.
$\implformrules(\bterm,\Rules,\pi)$ is the smallest set $\subseteq
\Rules$ such that:
\begin{itemize}
\item if $\bterm$ is not linear, then
  $\implformrules(\bterm,\Rules,\pi) = \Rules$;
\item if $\bterm = \afun(\bterm_1,\ldots,\bterm_n)$ then:
  \begin{itemize}
  \item $\implformrules(\bterm_i,\Rules,\pi) \subseteq
    \implformrules(\bterm,\Rules,\pi)$ for all $i$ \fbox{$\in \pi(\afun)$};
  \item if $\bterm : \asort$ then $\ell \arrz \avar \in
    \implformrules(\bterm,\Rules,\pi)$ for all collapsing rules
    $\ell \arrz \avar : \asort \in \Rules$;
  \item any rule $\ell \arrz \afun(r_1,\ldots,r_n)$ is in
    $\implformrules(\bterm,\Rules,\pi)$ if
    $\afun(\TCapp(r_1),\ldots,\TCapp(r_n))$ unifies with $\bterm$;
  \end{itemize}
\item if $\ell \to r \in \implformrules(\bterm,\Rules,\pi)$, then
$\implformrules(\ell,\Rules,\pi) \subseteq \implformrules(\bterm,\Rules,\pi)$.
\end{itemize}
For a set of rules $\Pairs$, we define $\implformrules(\Pairs,\Rules.\pi) =
\bigcup_{\aterm \to \bterm \in \Pairs}
\implformrules(\aterm,\Rules,\pi)$.
\end{definition}

Note that for regarded positions in $\implformrules(\afun(\bterm_1,
\ldots,\bterm_n),\Rules,\pi)$ and rules $\ell \arrz \afun(\seq{r})$
we must check that $\afun(\TCapp(r_1),\ldots,\TCapp(r_n))$ unifies
with $\bterm$, so that each $\TCapp(r_i)$ unifies with $\bterm_i$.
This is \emph{not} limited to $i \in \pi(\afun)$, as it would be if
we considered $\implformrules(\filter(\Pairs),\filter(\Rules))$.

\begin{theorem}\label{thm:extendedapproximation}
If $\aterm \arrr{\Rules} \ell\gamma$ by a formative
$\ell$-reduction, then this reduction uses only rules in
$\implformrules(\ell,\Rules)$ at regarded positions.
Thus, $\implformrules$ is a formative rules approximation.
\end{theorem}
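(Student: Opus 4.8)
The plan is to argue by induction on the definition of a formative $\ell$-reduction, following the proof of Theorem~\ref{thm:semimpl:2} almost verbatim, but weakening the conclusion so that we only have to account for the rules that are applied at \emph{regarded} positions for $\pi$. The two trivial cases carry over unchanged: if $\ell$ is non-linear then $\implformrules(\ell,\Rules,\pi) = \Rules$, and if $\aterm = \ell\gamma$ then the reduction is empty. So the work is in the two recursive cases, and the interesting phenomenon is that the side-condition $i \in \pi(\afun)$ now appears both in the notion of a regarded position and in the subterm clause of $\implformrules(\cdot,\cdot,\pi)$, and these two occurrences must be made to line up.

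For the argument case $\ell = \afun(l_1,\ldots,l_n)$, $\aterm = \afun(\aterm_1,\ldots,\aterm_n)$ with each $\aterm_i \arrr{\Rules} l_i\gamma$ formative, I would observe that a step inside $\aterm_i$ is at a regarded position of $\aterm$ exactly when $i \in \pi(\afun)$ \emph{and} the step is at a regarded position of $\aterm_i$ (this is the recursive reading of Definition~\ref{def:regarded} made explicit in Lemmas~\ref{lem:regarded:step} and~\ref{lem:regarded:nostep}). Hence for $i \notin \pi(\afun)$ nothing needs to be shown, while for $i \in \pi(\afun)$ the induction hypothesis bounds the regarded-position rules of $\aterm_i \arrr{\Rules} l_i\gamma$ by $\implformrules(l_i,\Rules,\pi)$, and the subterm clause of the definition --- which is restricted to precisely $i \in \pi(\afun)$ --- gives $\implformrules(l_i,\Rules,\pi) \subseteq \implformrules(\ell,\Rules,\pi)$.

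The main obstacle is the root-step case $\aterm \arrr{\Rules} \ell'\delta \arr{\Rules} r'\delta = \afun(\bterm_1,\ldots,\bterm_n) \arrr{\Rules} \ell\gamma$. Splitting it into the prefix up to $\ell'\delta$, the single root step (which is always a regarded position), and the componentwise tail, the tail is handled exactly as in the argument case, and the prefix is covered by the last clause of the definition once I know $\ell' \arrz r' \in \implformrules(\ell,\Rules,\pi)$; so everything reduces to establishing this membership. If $r'$ is a variable it has the same sort as $\ell$ (sorts are preserved under reduction and substitution), so $\ell' \arrz r'$ is a collapsing rule of that sort and is included by the collapsing clause. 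If $r' = \afun(r_1,\ldots,r_n)$, I would use Lemma~\ref{lem:tcapprop}: part~(\ref{tcap:instantiate}) writes $\bterm_i = r_i\delta = \TCapp(r_i)\delta'$, and iterating part~(\ref{tcap:reduce}) along $\TCapp(r_i)\delta' \arrr{\Rules} l_i\gamma$ yields $l_i\gamma = \TCapp(r_i)\epsilon_i$, so $\TCapp(r_i)$ unifies with $l_i$; by linearity of $\ell$ and the disjointness/freshness of the $\TCapp$-variables these combine into a single unifier of $\afun(\TCapp(r_1),\ldots,\TCapp(r_n))$ with $\ell$, placing $\ell' \arrz r'$ in $\implformrules(\ell,\Rules,\pi)$.

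The subtle point to get right --- and the reason $\implformrules(\cdot,\cdot,\pi)$ is genuinely different from $\implformrules(\filter(\Pairs),\filter(\Rules))$ --- is that this unification test must range over \emph{all} arguments $i$, not just those in $\pi(\afun)$, since discarding any $\TCapp(r_i)$ could destroy the unifier at position $i$. This causes no difficulty here, because the formative-reduction case supplies $\bterm_i \arrr{\Rules} l_i\gamma$ for every $i$ and Lemma~\ref{lem:tcapprop}(\ref{tcap:reduce}) is insensitive to whether a step is regarded; it is exactly why the definition restricts only the subterm recursion to $\pi(\afun)$ while keeping the $\TCapp$-check over all $n$ arguments. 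Once the membership $\ell' \arrz r' \in \implformrules(\ell,\Rules,\pi)$ is in hand the induction closes, and the final sentence of the theorem is then immediate from the definition of a formative rules approximation with respect to an argument filtering.
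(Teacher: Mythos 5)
Your proposal is correct and follows essentially the same route as the paper: the paper's proof of Theorem~\ref{thm:extendedapproximation} simply replays the induction from Theorem~\ref{thm:semimpl:2} and notes that the only new point is the argument case, where sub-reductions with $i \in \pi(\afun)$ are covered by the induction hypothesis and the ($\pi$-restricted) subterm clause, while sub-reductions with $i \notin \pi(\afun)$ occur at unregarded positions and need no bound. Your expanded treatment of the root-step case (collapsing clause for variable right-hand sides, Lemma~\ref{lem:tcapprop} plus linearity for the unification over \emph{all} arguments) matches the argument inherited from Theorem~\ref{thm:semimpl:2}.
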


\begin{proof}
We follow exactly the proof of Theorem~\ref{thm:semimpl:2}.
The only point where we have to differ is the case where
$\aterm = \afun(\aterm_1,\ldots,\aterm_n)$ and $\ell = \afun(l_1,
\ldots,l_n)$ and each $\aterm_i \arrr{\Rules} l_i\gamma$ by a
formative $l_i$-reduction.  Here we only need to observe that for
sub-reductions with $i \in \pi(\afun)$ only rules in $\formrules(
l_i,\Pairs,\pi) \subseteq \formrules(\ell,\Pairs,\pi)$ are used at
regarded positions; the reductions in $\aterm_i$ with $i \notin
\pi(\afun)$ are not regarded anyway!
\end{proof}

Note that since obviously $\implformrules(\aterm,\Rules,\pi) \subseteq
\baseformrules(\aterm,\Rules,\pi)$, also $\baseformrules$ is a
formative rules approximation.  Consequently, both can be used in
Theorem~\ref{thm:redpairffrfirst}.

Finally, we can also add the implicit filtering to $\sformrules$,
along with $\TCapp$:

\begin{definition}[Split-Formative Rules with $\TCapp$ and an Argument Filtering]
Let $\bterm$ be a term, $\Rules$ a set of rules and $\pi$ an argument
filtering.
$\tcapsformrules(\bterm,A,\pi)$ is the smallest set $\subseteq A$ such
that:
\begin{itemize}
\item if $\bterm$ is not linear, then $\tcapsformrules(\bterm,A,\pi)
  = A$;
\item all collapsing rules in $A$ are included in
  $\tcapsformrules(\bterm,A,\pi)$;
\item if $\bterm = \afun(\bterm_1,\ldots,\bterm_n)$, then:
  \begin{itemize}
  \item $\tcapsformrules(\bterm_i,A,\pi) \subseteq
    \tcapsformrules(\bterm,A,\pi)$ for all $i \in \pi(\afun)$;
  \item any rule $\ell \arrz \afun(r_1,\ldots,r_n)$ where
    $\afun(\TCapp(r_1),\ldots,\TCapp(r_n))$ unifies with $\bterm$ is
    in $\tcapsformrules(\bterm,A,\pi)$;
  \end{itemize}
\item if $\ell \arrz r \in \tcapsformrules(\bterm,A,\pi)$ and
  $r$ is not a variable, then $\tcapsformrules(\ell,A,\pi)
  \subseteq \tcapsformrules(\bterm,A,\pi)$.
\end{itemize}
For a set of rules $\Pairs$, we define $\tcapsformrules(\Pairs,A,\pi) =
\bigcup_{s \to t \in \Pairs} \tcapsformrules(s,A,\pi)$.
\end{definition}

\begin{lemma}\label{lem:tcapsplit}
Given a set of rules $\Rules$ which satisfies the variable condition
(i.e. $\FV(r) \subseteq \FV(\ell)$ for all $\ell \arrz r \in \Rules$),
if $\aterm \arrr{\Rules} \ell\gamma$
by a formative $\ell$-reduction, then $\aterm \arrr{A_\Rules}
\ell\gamma$ by a formative $\ell$-reduction, which only uses the rules
of $\tcapsformrules(\ell,A_\Rules,\pi)$ on regarded positions.
\end{lemma}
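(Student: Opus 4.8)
The plan is to fuse the two proof techniques already available: the $A_\Rules$-machinery of Lemma~\ref{lem:redtorulesalternative}, which shows how to absorb collapsing rules, together with the regarded-position bookkeeping and the $\TCapp$-matching argument of Theorem~\ref{thm:extendedapproximation}. First I would pass to $A_\Rules$ exactly as in the proof of Lemma~\ref{lem:redtorulesalternative}: the given formative $\ell$-reduction over $\Rules$ induces a formative $\ell$-reduction over $A_\Rules$, because every non-collapsing rule of $\Rules$ lies in $\noncollapsing$ and every collapsing step is replayed as a sequence of projections from $\collapsing$; it is precisely the hypothesis $\FV(r) \subseteq \FV(\ell)$ that guarantees the collapsed variable occurs in the left-hand side and is thus reachable by projections. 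It then suffices to show that any formative $\ell$-reduction over $A_\Rules$ can be replaced by a formative $\ell$-reduction whose steps at regarded positions all use rules of $\tcapsformrules(\ell,A_\Rules,\pi)$.

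I would establish this by the same triple induction as Lemma~\ref{lem:redtorulesalternative} --- first on the number of non-collapsing steps, then on the length, then on the size of $\aterm$. The cases $\aterm = \ell\gamma$ and ``no root step'' are dispatched through the innermost (size) hypothesis applied to each argument $\aterm_i$, with one twist coming from filtering: for $i \notin \pi(\afun)$ the sub-reduction inside $\aterm_i$ only ever occupies non-regarded positions and so carries no obligation, whereas for $i \in \pi(\afun)$ the (now $\pi$-restricted) subterm clause of $\tcapsformrules$ yields $\tcapsformrules(l_i,A_\Rules,\pi) \subseteq \tcapsformrules(\ell,A_\Rules,\pi)$, which is exactly what a regarded position of $\aterm$ demands. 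For a root step $\ell' \arrz r'$ with $r' = \afun(r_1,\ldots,r_n)$ non-collapsing, I would invoke Lemma~\ref{lem:tcapprop}: part~(\ref{tcap:instantiate}) makes each argument $\bterm_i$ of $r'\delta = \afun(\bterm_1,\ldots,\bterm_n)$ an instance of $\TCapp(r_i)$, and iterating part~(\ref{tcap:reduce}) along $\bterm_i \arrr{A_\Rules} l_i\gamma$ makes $l_i\gamma$ an instance of $\TCapp(r_i)$ as well; by linearity of $\ell$ and the freshness of the $\TCapp$-variables these common instances amalgamate, so $\afun(\TCapp(r_1),\ldots,\TCapp(r_n))$ unifies with $\ell$ and hence $\ell' \arrz r' \in \tcapsformrules(\ell,A_\Rules,\pi)$. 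The prefix and suffix each contain strictly fewer non-collapsing steps, so the first hypothesis (with targets $\ell'$ and $\ell$) closes this case, the non-variable-right-hand-side clause supplying $\tcapsformrules(\ell',A_\Rules,\pi) \subseteq \tcapsformrules(\ell,A_\Rules,\pi)$.

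The genuinely delicate case, and the one I expect to be the main obstacle, is when the last root step is a projection from $\collapsing$. Here I would reuse the maximal trailing collapsing block analysis of Lemma~\ref{lem:redtorulesalternative}: if that block is preceded only by argument reductions then $\aterm$ is already the head $\bfun_1(\seq{\avar})\bsub_1$, the projections act at the root (a regarded position) and are always in $\tcapsformrules$, and the strictly shorter remainder is handled by the length hypothesis; otherwise the block is preceded by a non-collapsing root step $\ell'' \arrz r''$, which I combine with the first projection through clause~(\ref{it:combine}) of Definition~\ref{def:combrules} to obtain a rule $\ell'' \arrz r_{i_1} \in A_\Rules$, giving a formative $\ell$-reduction with either one fewer collapsing step (length hypothesis) or strictly fewer non-collapsing steps (first hypothesis) according to whether $r_{i_1}$ is a variable. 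The crux is to verify that this recombined reduction is still formative and that its induction measure really drops; once that is settled, no separate membership check for the combined rule is needed, since it is contracted at the root --- a regarded position --- and the induction hypothesis applied to the smaller reduction already furnishes a formative reduction all of whose regarded steps lie in $\tcapsformrules(\ell,A_\Rules,\pi)$. Assembling the three cases yields the lemma.
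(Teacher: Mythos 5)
Your proposal is correct and follows essentially the same route as the paper, whose own proof is just a two-line sketch (convert to $A_\Rules$ as in Lemma~\ref{lem:redtorulesalternative} so that a non-collapsing step is never followed by a collapsing step at the same position, then induct on the formative reduction); your write-up is in effect the detailed expansion of that sketch, correctly merging the triple induction of Lemma~\ref{lem:redtorulesalternative} with the regarded-position and $\TCapp$-unification bookkeeping of Theorem~\ref{thm:extendedapproximation}. The only slip is notational: the combined rule $\ell'' \arrz r_{i_1}$ lies in $X$ but belongs to $A_\Rules$ only when $r_{i_1}$ is not a variable (otherwise it must be replayed as a sequence of $\collapsing$ projections), which your subsequent case split on $r_{i_1}$ already handles.
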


\begin{proof}
Following the proof of Lemma~\ref{lem:redtorulesalternative}, the
reduction $\aterm \arrr{\Rules} \ell\gamma$ is turned into a
formative reduction $\aterm \arrr{A_\Rules} \ell\gamma$ with the
following property: a non-collapsing step at any given position is
never followed by a collapsing step at the same position.

Using induction on the definition of a formative chain, using that
the chain has this property, the lemma follows easily.
\qed
\end{proof}

\begin{theorem}\label{thm:powerproc:split}
Let $(\succsim,\succ)$ be a reduction pair and $\pi$ an argument
filtering.
The processor which maps a DP problem $(\Pairs,\Rules,\flag)$
to the following result is sound:
\begin{itemize}
\item $\{(\Pairs \setminus \Pairs^\succ,\Rules,\flag)\}$ if:
  \begin{itemize}
  \item $\filter(\ell) \succ \filter(r)$ for $\ell \arrz r \in \Pairs^\succ$ and
  $\filter(\ell) \succsim \filter(r)$ for $\ell \arrz r \in \Pairs \setminus \Pairs^\succ$;
  \item $\filter(\ell) \succsim \filter(r)$ for $\ell \arrz r \in \tcapsformrules(\Pairs,A_U,\pi)$, \\
    where $U = \Rules$ if $\flag = \all$ and $U = \tcapuserules(\Pairs,\Rules,\pitriv) \cup \Ce$ if $\flag = \minimal$;
  \end{itemize}
\item $\{(\Pairs, \Rules, \flag)\}$ otherwise.
\end{itemize}
\end{theorem}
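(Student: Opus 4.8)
The plan is to argue by contraposition along the lines of the proofs of Theorem~\ref{thm:redpairffrfirst} and Theorem~\ref{thm:altfilter}, combining the semantic treatment of the argument filtering via regarded positions with the split-formative transformation over the combined system $A_U$. Assuming $(\Pairs,\Rules,\flag)$ is not finite, I will construct an infinite $(\Pairs\setminus\Pairs^\succ,\Rules)$-chain (minimal if $\flag=\minimal$), which is exactly what soundness of the first branch requires; the second, do-nothing branch is trivially sound.

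First I would fix an infinite $(\Pairs,\Rules)$-chain $\rijtje{(\ell_i\arrz r_i,\gamma_i)\mid i\in\N}$, minimal if $\flag=\minimal$. If $\flag=\minimal$, I apply Lemma~\ref{lem:usablechain} with the trivial filtering $\pitriv$---which, as remarked in Appendix~\ref{app:tcap}, goes through unchanged for $\tcapuserules$---to obtain an infinite $(\Pairs,U)$-chain with $U=\tcapuserules(\Pairs,\Rules,\pitriv)\cup\Ce$; if $\flag=\all$ the chain already is a $(\Pairs,U)$-chain with $U=\Rules$. In either case $U$ satisfies the variable condition (usable rules are a subset of $\Rules$, and the $\Ce$ rules are variable-preserving), so Lemma~\ref{lem:tcapsplit} is applicable to $U$. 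Next I use Lemma~\ref{lem:changechain} to adjust the substitutions so that every reduction $r_i\delta_i\arrr{U}\ell_{i+1}\delta_{i+1}$ is a formative $\ell_{i+1}$-reduction.

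Then, for each $i$, I invoke Lemma~\ref{lem:tcapsplit} with $\Rules:=U$ to turn this into a formative $\ell_{i+1}$-reduction over $A_U$ using only rules of $\tcapsformrules(\ell_{i+1},A_U,\pi)\subseteq\tcapsformrules(\Pairs,A_U,\pi)$ at regarded positions. I now apply $\filter$: combining Lemma~\ref{lem:regarded:step} and Lemma~\ref{lem:regarded:nostep} exactly as in Lemma~\ref{lem:filteredformativeusage}, every non-regarded step collapses and every regarded step descends to a $\filter(\tcapsformrules(\Pairs,A_U,\pi))$-step, so $\filter(r_i)\delta_i^{\filter}\arrr{\filter(\tcapsformrules(\Pairs,A_U,\pi))}\filter(\ell_{i+1})\delta_{i+1}^{\filter}$. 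By the orientation hypothesis on $\tcapsformrules(\Pairs,A_U,\pi)$ together with monotonicity and stability of $\succsim$, this yields $\filter(r_i)\delta_i^{\filter}\succsim\filter(\ell_{i+1})\delta_{i+1}^{\filter}$; stability of $\succ$ and $\succsim$ turns the pair conditions into $\filter(\ell_i)\delta_i^{\filter}\succ\filter(r_i)\delta_i^{\filter}$ for $\ell_i\arrz r_i\in\Pairs^\succ$ and $\succsim$ otherwise. Interleaving these steps and using compatibility $\succ\cdot\succsim\ \subseteq\ \succ$ gives an infinite $\succsim$-chain with a strict $\succ$-descent at every use of a pair from $\Pairs^\succ$; if such pairs occurred infinitely often, well-foundedness of $\succ$ would be violated. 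Hence $\Pairs^\succ$-pairs appear only finitely often, and a tail of the original chain is the required infinite $(\Pairs\setminus\Pairs^\succ,\Rules)$-chain, still minimal when $\flag=\minimal$.

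The main obstacle I foresee is the interface between these three transformations in the filtering step. I must check that Lemma~\ref{lem:tcapsplit} really delivers rules at \emph{regarded} positions only, so that the subsequent filtering both discards the non-regarded steps and lands every surviving step inside $\filter(\tcapsformrules(\Pairs,A_U,\pi))$; here it matters that the $\TCapp$-unification clause of $\tcapsformrules$ inspects \emph{all} argument positions, not just those in $\pi(\afun)$, which is why the computed rule set is correct even though only the regarded steps are orientation-relevant. The second delicate point is that the usable-rules transformation is carried out with the trivial filtering $\pitriv$, not with $\pi$: the filtering $\pi$ is handled purely through regarded positions in the final step, never inside $\tcapuserules$ or $\tcapsformrules$. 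Once these two points are pinned down, the remainder is a routine reassembly of the cited lemmas.
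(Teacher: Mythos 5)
Your proof is correct and follows essentially the same route as the paper's: Lemma~\ref{lem:usablechain} with $\pitriv$ (for $\flag=\minimal$), then Lemma~\ref{lem:changechain} to make the chain formative, then Lemma~\ref{lem:tcapsplit} to restrict regarded positions to $\tcapsformrules(\Pairs,A_U,\pi)$, and finally filtering via the regarded-positions lemmas before the standard reduction-pair argument. You even handle the flag cases the right way around (the paper's proof text swaps them, evidently a typo) and correctly verify the variable condition needed for Lemma~\ref{lem:tcapsplit}.
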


\begin{proof}
Let an infinite $(\Pairs,\Rules)$-chain $\rijtje{(\ell_i \arrz r_i,
\gamma_i) \mid i \in \N}$ be given.
If $\flag = \minimal$ then let for all $i$ the substitution $\delta_i
:= \gamma_i$.  Otherwise let $\delta_i$ be given by
Lemma~\ref{lem:usablechain}, with $\pitriv$ as the argument filtering;
the proof to this lemma is easily adapted for $\tcapuserules$ (we do
not present the adaptation here, although an extended version of the
proof, with a different kind of filtering, is given in
Appendix~\ref{subsec:userules:alternative}).

Either way, we obtain an infinite $(\Pairs,U)$-chain
$\rijtje{(\ell_i \arrz r_i,\delta_i) \mid i \in \N}$.
Using Lemma~\ref{lem:changechain}, we can turn this into a formative
$(\Pairs,U)$-chain $\rijtje{(\ell_i \arrz r_i,\epsilon_i) \mid i \in
\N}$.

By Lemma~\ref{lem:tcapsplit}, each reduction $r_i\epsilon_i
\arrr{U} \ell_{i+1}\epsilon_{i+1}$ can be altered to use only rules
in $\tcapsformrules(\ell,A_\Rules,\pi)$ on regarded positions.
Consequently, the filtered chain $\rijtje{(\filter(\ell_i) \arrz
\filter(r_i),\epsilon_i^{\filter}) \mid i \in \N}$ uses only rules in
$\filter(\tcapsformrules(\Pairs,A_\Rules,\pi))$.
The requirements on $(\succsim,\succ)$ guarantee that a tail of this
chain uses no dependency pairs in $\filter(\Pairs^\succ)$!
\qed
\end{proof}

\subsection{$\TCapp$-Conscious Filtering}\label{subsec:tcapconscious}

Unfortunately, when using Theorem~\ref{thm:redpairffrfirst}, we do
lose out in some ways compared to Theorem~\ref{thm:redpairprocessor}:
rather than limiting rules to those in $\userules(\Pairs,\Rules,\pi)$,
we consider $\userules(\Pairs,\Rules)$.  The problem is that the proof
of Theorems~\ref{thm:redpairprocessor} and~\ref{thm:redpairffrfirst}
both rely on transforming the input chain to use only rules in
$U$.  And unfortunately, Lemma~\ref{lem:usablechain} only gives a
transformation to a \emph{filtered} chain.

Towards countering 
this objection,
we will define an alternative way
to ``filter'' terms: $\afilter$.  In
Appendix~\ref{subsec:userules:alternative} we will see how to use this
alternative filtering
to update the proof for usable rules, and obtain the desired processor
(which is also the one we use for implementation).

\begin{definition}\label{def:addtop}
Fixing a set of rules $\Rules$, we introduce for
any sort $\asort$ the fresh symbol $\top_\asort : \asort$, and let
for all terms $\aterm$ the term $\addtop(\aterm)$ be $\TCapp(\aterm)$
with all variables $\avar : \asort$ replaced by $\top_\asort$.
Additionally, let $\Rules_\top$ be a set of rules containing, for all
sorts $\asort$, a rule $\top_\asort \arrz \avar$ with $\avar$ a
variable.
\end{definition}

Obviously, $\Rules_\top$ is non-terminating.  But it does have certain
desirable qualities.

\begin{lemma}\label{lem:reduceanything}
For any term $\aterm : \asort$ we have: $\top_\asort
\arr{\Rules_\top} \aterm$.
\end{lemma}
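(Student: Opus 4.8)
The plan is to observe that the statement is an immediate instance of the root-rewrite clause defining $\arr{\Rules_\top}$, so no induction or auxiliary result is needed. First I would recall from Definition~\ref{def:addtop} that $\Rules_\top$ contains, for the sort $\asort$, a rule of the form $\top_\asort \arrz \avar$; since rules relate terms of equal sort and $\top_\asort : \asort$, the variable $\avar$ here has sort $\asort$. The one point worth flagging explicitly is that this rule carries a free variable on its right-hand side that does not occur on the left, so it violates the variable condition $\FV(r) \subseteq \FV(\ell)$. This is harmless because $\Rules_\top$ is an auxiliary set of rules rather than an MTRS, and hence is not required to satisfy that condition; indeed the text has already remarked that $\Rules_\top$ is non-terminating precisely for this reason.

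Next I would simply instantiate the defining clause $\ell\asub \arr{\Rules_\top} r\asub$ with $\ell := \top_\asort$, $\ r := \avar$, and the substitution $\asub := [\avar := \aterm]$. Because $\top_\asort$ is a constant and therefore contains no variables, we have $\ell\asub = \top_\asort$, while $r\asub = \avar\asub = \aterm$. This yields $\top_\asort \arr{\Rules_\top} \aterm$ in a single step, which is exactly the claim. The substitution $\asub$ is sort-correct since both $\avar$ and $\aterm$ have sort $\asort$.

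There is essentially no obstacle to overcome: the entire content of the lemma is that a collapsing-style rule with a fresh right-hand-side variable lets $\top_\asort$ rewrite to an arbitrary term of the matching sort in one step. The only thing requiring a moment's care is the well-sortedness of the chosen substitution, which I have just checked, and the observation that the rule in question is legitimately allowed in the non-MTRS set $\Rules_\top$.
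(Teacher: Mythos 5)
Your proof is correct and matches the paper's argument, which simply states that the rule $\top_\asort \arrz \avar$ can be instantiated with anything; you have just spelled out the instantiation $[\avar := \aterm]$ and the sort-correctness check explicitly. No issues.
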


\begin{proof}
Obvious ($\avar$ can be instantiated with anything).
\qed
\end{proof}

\begin{lemma}\label{lem:addtopsubstitute}
For any term $\aterm$ and substitutions $\gamma$ and $\delta$,
we have:
$\addtop(\aterm)\gamma \arrr{\Rules_\top} \addtop(\aterm\delta)$.
\end{lemma}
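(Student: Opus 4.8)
The plan is to begin with the observation that $\addtop(\aterm)$ is a \emph{ground} term: by definition it is $\TCapp(\aterm)$ with every variable replaced by some $\top_\asort$, and every variable appearing in $\TCapp(\aterm)$ is one of the fresh variables produced by the ``otherwise'' branch of $\TCapp$ (in particular, original variables of $\aterm$ are turned into fresh ones, since a variable is not of the form $\afun(\ldots)$). Hence $\addtop(\aterm)$ contains no variables, so $\addtop(\aterm)\gamma = \addtop(\aterm)$ and the substitution $\gamma$ plays no role whatsoever. It therefore suffices to prove $\addtop(\aterm) \arrr{\Rules_\top} \addtop(\aterm\delta)$, which I would establish by induction on the size of $\aterm$, distinguishing the two possible shapes of $\TCapp(\aterm)$.

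If $\TCapp(\aterm)$ is a variable — covering both $\aterm$ itself a variable and $\aterm = \afun(\ldots)$ whose $\TCapp$-image unifies with a left-hand side — then $\addtop(\aterm) = \top_\asort$, where $\asort$ is the sort of $\aterm$. Since substitution preserves sorts, $\aterm\delta$ and hence $\addtop(\aterm\delta)$ also have sort $\asort$, so a single application of Lemma~\ref{lem:reduceanything} gives $\top_\asort \arr{\Rules_\top} \addtop(\aterm\delta)$, as required.

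The interesting case is $\aterm = \afun(\aterm_1,\ldots,\aterm_n)$ with $\afun(\TCapp(\aterm_1),\ldots,\TCapp(\aterm_n))$ not unifying with any left-hand side, so that $\addtop(\aterm) = \afun(\addtop(\aterm_1),\ldots,\addtop(\aterm_n))$. The key subgoal is to show that $\TCapp$ keeps the same head structure after applying $\delta$, i.e.\ that $\TCapp(\aterm\delta) = \afun(\TCapp(\aterm_1\delta),\ldots,\TCapp(\aterm_n\delta))$. For this I would use Lemma~\ref{lem:tcapprop}: by item~(\ref{tcap:instantiate}) each $\aterm_i$ is an instance of $\TCapp(\aterm_i)$, hence so is $\aterm_i\delta$, and then item~(\ref{tcap:both}) yields a substitution $\eta_i$ with $\TCapp(\aterm_i\delta) = \TCapp(\aterm_i)\eta_i$. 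Because the fresh variables of distinct $\TCapp(\aterm_i)$ are disjoint, $\afun(\TCapp(\aterm_1\delta),\ldots,\TCapp(\aterm_n\delta))$ is an instance of $\afun(\TCapp(\aterm_1),\ldots,\TCapp(\aterm_n))$; as unifiability is preserved under generalisation, an instance that unified with a left-hand side would force the more general term to unify as well, contradicting the case assumption. Thus no root unification arises, $\TCapp(\aterm\delta)$ has the stated form, and $\addtop(\aterm\delta) = \afun(\addtop(\aterm_1\delta),\ldots,\addtop(\aterm_n\delta))$. Applying the induction hypothesis to each $\aterm_i$ gives $\addtop(\aterm_i) \arrr{\Rules_\top} \addtop(\aterm_i\delta)$, and closure of $\arrr{\Rules_\top}$ under contexts assembles these into $\addtop(\aterm) \arrr{\Rules_\top} \addtop(\aterm\delta)$.

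I expect the main obstacle to be exactly this head-preservation step: ruling out that instantiating by $\delta$ could suddenly make $\TCapp$ collapse to a variable at the root. The clean way through is to phrase it as ``unifiability with a left-hand side is inherited upward along generalisation'', which reduces the argument to the two already-proved $\TCapp$ properties and avoids any explicit manipulation of unifiers. The remaining bookkeeping — groundness of $\addtop$, preservation of sorts, and context closure of the reduction — is routine.
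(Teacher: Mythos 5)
Your proof is correct and follows essentially the same route as the paper's: induction on the size of $\aterm$, the variable case discharged by Lemma~\ref{lem:reduceanything}, and the non-variable case handled by observing that $\afun(\TCapp(\aterm_1\delta),\ldots,\TCapp(\aterm_n\delta))$ is an instance of $\afun(\TCapp(\aterm_1),\ldots,\TCapp(\aterm_n))$ and therefore cannot unify with any left-hand side either. Your explicit appeal to Lemma~\ref{lem:tcapprop} for the instance relation, and the up-front remark that $\addtop(\aterm)$ is ground so $\gamma$ is irrelevant, merely make explicit what the paper leaves implicit.
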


\begin{proof}
Let $\aterm : \asort$; we use induction on the size of $\aterm$.

If $\TCapp(\aterm)$ is a variable, then $\addtop(\aterm)\gamma =
\top_\asort\gamma = \top_\asort$.  By Lemma~\ref{lem:reduceanything}
this reduces in one step to $\addtop(\aterm\delta)$.

Otherwise, we can write $\aterm = \afun(\aterm_1,\ldots,\aterm_n)$
where $\afun(\TCapp(\aterm_1),\ldots,\TCapp(\aterm_n))$ does not unify
with any left-hand side of a rule in $\Rules$.  Thus,
$\addtop(\aterm)\gamma = \afun(\addtop(\aterm_1)\gamma,\ldots,
\addtop(\aterm_n)\gamma) \arrr{\Rules_\top}
\afun(\addtop(\aterm_1\delta),\ldots,\addtop(\aterm_n\delta))$ by the
induction hypothesis.  We are done if this is exactly
$\addtop(\afun(\aterm_1\delta,\ldots,\aterm_n\delta))$, which is the
case if $\afun(\TCapp(\aterm_1\delta),\ldots,\TCapp(\aterm_n\delta))$
also does not unify with any left-hand side.  But by nature of
$\TCapp$, this term can be expressed as an instance of $\afun(\TCapp(
\aterm_1),\ldots,\TCapp(\aterm_n))$, so clearly cannot be unified with
anything $\afun(\TCapp(\aterm_1),\ldots,\TCapp(\aterm_n))$ cannot be
unified with!
\qed
\end{proof}

\begin{lemma}\label{lem:addtopsubstitute2}
For any term $\aterm$: $\addtop(\aterm) \arrr{\Rules_\top} \aterm$.
\end{lemma}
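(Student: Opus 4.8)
The plan is to prove this by induction on the size of $\aterm$, closely mirroring the structure of the proof of Lemma~\ref{lem:addtopsubstitute}, which treats the analogous statement $\addtop(\aterm)\gamma \arrr{\Rules_\top} \addtop(\aterm\delta)$. The two cases are dictated by the definition of $\TCapp$: either $\TCapp(\aterm)$ is a fresh variable, or $\aterm$ has the form $\afun(\aterm_1,\ldots,\aterm_n)$ with $\TCapp(\aterm) = \afun(\TCapp(\aterm_1),\ldots,\TCapp(\aterm_n))$.

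First I would handle the base case. Suppose $\aterm : \asort$ and $\TCapp(\aterm)$ is a variable. Then by the definition of $\addtop$, we have $\addtop(\aterm) = \top_\asort$. Now Lemma~\ref{lem:reduceanything} applies directly, giving $\top_\asort \arr{\Rules_\top} \aterm$ (since $\aterm$ has sort $\asort$), and hence $\addtop(\aterm) \arrr{\Rules_\top} \aterm$ as required.

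For the step case, suppose $\TCapp(\aterm)$ is not a variable, so that $\aterm = \afun(\aterm_1,\ldots,\aterm_n)$ and $\TCapp(\aterm) = \afun(\TCapp(\aterm_1),\ldots,\TCapp(\aterm_n))$. Here I would first observe that $\addtop$ distributes over the root symbol in this case: replacing every variable in $\TCapp(\aterm)$ by the corresponding $\top_\asort$ yields $\addtop(\aterm) = \afun(\addtop(\aterm_1),\ldots,\addtop(\aterm_n))$. By the induction hypothesis, each $\addtop(\aterm_i) \arrr{\Rules_\top} \aterm_i$. Since $\arr{\Rules_\top}$ (and hence $\arrr{\Rules_\top}$) is closed under contexts, these reductions can be performed inside $\afun$, giving $\afun(\addtop(\aterm_1),\ldots,\addtop(\aterm_n)) \arrr{\Rules_\top} \afun(\aterm_1,\ldots,\aterm_n) = \aterm$, which completes the case.

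There is no substantial obstacle here: the argument is a routine structural induction. The only point requiring a moment of care is the distributivity of $\addtop$ over the function symbol in the step case, which follows immediately from the fact that $\addtop$ is defined as a uniform variable-to-$\top$ substitution applied to $\TCapp(\aterm)$, and that $\TCapp$ itself commutes with the root symbol precisely when $\TCapp(\aterm)$ is not a fresh variable. The base case is where all the work is delegated, namely to Lemma~\ref{lem:reduceanything}, which guarantees that $\top_\asort$ can be rewritten to any term of the appropriate sort.
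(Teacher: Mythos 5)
Your proof is correct, but it takes a different route from the paper's. The paper avoids induction altogether: by Lemma~\ref{lem:tcapprop}(\ref{tcap:instantiate}) one can write $\aterm = \TCapp(\aterm)\delta$ for some substitution $\delta$, and by definition $\addtop(\aterm) = \TCapp(\aterm)\epsilon$ where $\epsilon$ sends each variable of $\TCapp(\aterm)$ to the appropriate $\top_\asort$; since $\epsilon(\avar) \arr{\Rules_\top} \delta(\avar)$ for every such variable (this is Lemma~\ref{lem:reduceanything}), the whole reduction $\TCapp(\aterm)\epsilon \arrr{\Rules_\top} \TCapp(\aterm)\delta = \aterm$ follows in one stroke from closure of reduction under contexts. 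Your structural induction reproves, case by case, exactly what that one substitution-level observation packages up: your base case is the paper's per-variable step, and your step case re-establishes by hand that $\addtop$ commutes with the root symbol, which the paper never needs to state because it works with $\TCapp(\aterm)$ as a single linear pattern. Both arguments are sound and of comparable length; the paper's is slightly slicker and reuses the already-proved $\TCapp$ properties, while yours is more self-contained and matches the inductive style of the neighbouring Lemmas~\ref{lem:addtopsubstitute} and~\ref{lem:addtop:encode}, so either would fit the appendix.
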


\begin{proof}
Observe that we can write $\aterm = \TCapp(\aterm)\delta$ for some
substitution $\delta$ by Lemma~\ref{lem:tcapprop}, and
$\addtop(\aterm)$ can be written as $\TCapp(\aterm)\epsilon$, where
$\epsilon$ is the substitution mapping all variables in $\aterm$ to
the corresponding $\top_\asort$ symbol.  Since always
$\epsilon(\avar) \arr{\Rules_\top} \bterm$ for any $\bterm$, and in
particular for $\bterm = \delta(\avar)$, we have $\addtop(\aterm) =
\TCapp(\aterm)\epsilon \arrr{\Rules_\top} \TCapp(\aterm)\delta =
\aterm$.
\qed
\end{proof}

\begin{lemma}\label{lem:addtopreduce}
If $\aterm \arr{\Rules} \bterm$ then $\addtop(\aterm)
\arrr{\Rules_\top} \addtop(\bterm)$.
\end{lemma}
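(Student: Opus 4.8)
The plan is to avoid any case analysis on where the rewrite step takes place, and instead exploit the fact that $\TCapp(\bterm)$ is an instance of $\TCapp(\aterm)$. Once that is established, $\addtop(\aterm)$ and $\addtop(\bterm)$ differ only in that the former carries a $\top$ symbol at exactly those positions where the latter carries a (possibly larger) subterm, and each such $\top$ can be rewritten to the required subterm in a single $\Rules_\top$-step via Lemma~\ref{lem:reduceanything}.

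Concretely, I would first use the first part of Lemma~\ref{lem:tcapprop} to write $\aterm = \TCapp(\aterm)\asub$ for some substitution $\asub$ on $\FV(\TCapp(\aterm))$. Since then $\TCapp(\aterm)\asub = \aterm \arr{\Rules} \bterm$, the third part of Lemma~\ref{lem:tcapprop} yields a substitution $\bsub$ with $\bterm = \TCapp(\aterm)\bsub$, and the second part then provides a substitution $\csub$ with $\TCapp(\bterm) = \TCapp(\aterm)\csub$. Writing $\theta$ for the substitution that maps every variable $\avar : \asort$ to $\top_\asort$, the definition of $\addtop$ gives $\addtop(\aterm) = \TCapp(\aterm)\theta$ and $\addtop(\bterm) = \TCapp(\bterm)\theta = \TCapp(\aterm)\csub\theta$, where $\csub\theta$ denotes the substitution $\avar \mapsto \csub(\avar)\theta$.

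It then remains to show $\TCapp(\aterm)\theta \arrr{\Rules_\top} \TCapp(\aterm)(\csub\theta)$. For each variable $\avar : \asort$ occurring in $\TCapp(\aterm)$ we have $\theta(\avar) = \top_\asort$, and since substitutions preserve sorts the term $\csub(\avar)\theta$ also has sort $\asort$; hence Lemma~\ref{lem:reduceanything} gives $\top_\asort \arr{\Rules_\top} \csub(\avar)\theta$ in a single step. Because $\arr{\Rules_\top}$ is closed under contexts and the variables sit at pairwise disjoint leaf positions of $\TCapp(\aterm)$ --- indeed $\TCapp$ produces a fresh variable for every occurrence --- performing these reductions at all variable positions yields $\TCapp(\aterm)\theta \arrr{\Rules_\top} \TCapp(\aterm)(\csub\theta) = \addtop(\bterm)$, as required.

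The only delicate point is the substitution bookkeeping in the identity $\addtop(\bterm) = \TCapp(\aterm)(\csub\theta)$ together with the verification that each $\csub(\avar)\theta$ is well-sorted of sort $\asort$; both are routine once the three parts of Lemma~\ref{lem:tcapprop} are chained as above. An alternative would be a direct induction on the derivation of $\aterm \arr{\Rules} \bterm$, but that route forces one to argue separately that a root step collapses $\TCapp(\aterm)$ to a variable and that, below the root, $\afun(\TCapp(\aterm_1),\ldots,\TCapp(\aterm_n))$ unifies with a left-hand side exactly when its reduct does; the instance-based argument above sidesteps this unification reasoning entirely, which is why I expect it to be the cleaner path.
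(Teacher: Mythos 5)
Your proof is correct and follows essentially the same route as the paper's: establish that $\TCapp(\bterm)$ is an instance of $\TCapp(\aterm)$ via Lemma~\ref{lem:tcapprop} and then rewrite each $\top_\asort$ at the (pairwise disjoint, fresh) variable positions to the corresponding instance using Lemma~\ref{lem:reduceanything}. If anything, you are slightly more careful than the paper, which cites only part~(\ref{tcap:reduce}) of Lemma~\ref{lem:tcapprop} where the full chain of parts (\ref{tcap:instantiate}), (\ref{tcap:reduce}), (\ref{tcap:both}) that you spell out is what is actually needed.
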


\begin{proof}
By Lemma~\ref{lem:tcapprop}(\ref{tcap:reduce}), $\TCapp(\bterm) =
\TCapp(\aterm)\gamma$ for some $\gamma$.  Writing $\delta_\top$ for the
substitution that maps each $\avar : \asort$ in $\TCapp(\aterm)$ to
$\top_\asort$, and $\epsilon_\top$ for the substitution that does the
same for $\TCapp(\bterm)$, note that by Lemma~\ref{lem:reduceanything},
each $\delta_\top(\avar) \arr{\Rules} \gamma(\avar)\epsilon_\top$.
Hence we have: $\addtop(\aterm) = \TCapp(\aterm)\delta_\top
\arrr{\Rules_\top} \TCapp(\aterm)\gamma\epsilon_\top =
\addtop(\bterm)$.
\qed
\end{proof}

Using $\addtop$, we can define an alternative way to``filter'' a term.

\begin{definition}
Given an argument filtering $\pi$, the function $\afilter$ is given
by:
\begin{itemize}
\item $\afilter(\avar) = \avar$ for $\avar$ a variable
\item $\afilter(\afun(\aterm_1,\ldots,\aterm_n)) =
  \afun(\aterm_1',\ldots,\aterm_n')$, where each
  $\aterm_i' = \afilter(\aterm_i)$ if $i \in \pi(\afun)$ and
  $\aterm_i' = \addtop(\aterm_i)$ otherwise.
\end{itemize}
For a rule $\ell \arrz r$, let $\afilter(\ell \arrz r)$ be defined as
$\ell \arrz \afilter(r)$ (so with the left-hand side \emph{not}
filtered!).
For a set of rules $\Rules$, let $\afilter(\Rules) = \{ \afilter(\rho)
\mid \rho \in \Rules \}$.
\end{definition}

The normal filtering can be concatenated to $\afilter$:

\begin{lemma}\label{lem:filterafilter}
For all terms $\aterm$: $\filter(\afilter(\aterm)) = \filter(\aterm)$.
\end{lemma}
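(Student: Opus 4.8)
The plan is to prove the identity by structural induction on the term $\aterm$, exploiting the fact that $\afilter$ leaves the root symbol of a non-variable term untouched while $\filter$ discards exactly those argument positions at which $\afilter$ has inserted an $\addtop$-subterm. The base case is immediate: if $\aterm$ is a variable, then by definition $\afilter(\aterm) = \aterm$ and $\filter(\aterm) = \aterm$, so $\filter(\afilter(\aterm)) = \aterm = \filter(\aterm)$.

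For the inductive step I would write $\aterm = \afun(\aterm_1,\ldots,\aterm_n)$ with $\pi(\afun) = \{i_1,\ldots,i_k\}$. By definition $\afilter(\aterm) = \afun(\aterm_1',\ldots,\aterm_n')$, where $\aterm_i' = \afilter(\aterm_i)$ for $i \in \pi(\afun)$ and $\aterm_i' = \addtop(\aterm_i)$ otherwise; crucially the head symbol is still $\afun$ with its full arity $n$. Applying $\filter$ to this term therefore uses the same filtering $\pi(\afun) = \{i_1,\ldots,i_k\}$, yielding $\filter(\afilter(\aterm)) = \afun_\pi(\filter(\aterm_{i_1}'),\ldots,\filter(\aterm_{i_k}'))$. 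Since each retained index $i_j$ lies in $\pi(\afun)$, we have $\aterm_{i_j}' = \afilter(\aterm_{i_j})$, so the induction hypothesis gives $\filter(\aterm_{i_j}') = \filter(\afilter(\aterm_{i_j})) = \filter(\aterm_{i_j})$. Hence $\filter(\afilter(\aterm)) = \afun_\pi(\filter(\aterm_{i_1}),\ldots,\filter(\aterm_{i_k})) = \filter(\aterm)$, as required.

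There is essentially no hard obstacle here; the result is a direct induction. The only point worth being explicit about is that the $\addtop$-subterms placed at the non-regarded positions $i \notin \pi(\afun)$ play no role, because $\filter$ removes precisely these arguments before the inserted $\addtop$-material can ever be inspected. Consequently I never need to invoke any structural property of $\addtop$ (not even Lemmas~\ref{lem:addtopsubstitute}--\ref{lem:addtopreduce}); it suffices that $\addtop(\aterm_i)$ is some well-sorted term of the same sort as $\aterm_i$, so that $\afun(\aterm_1',\ldots,\aterm_n')$ is itself well-sorted and the application of $\filter$ to it is defined. The degenerate case $\pi(\afun) = \emptyset$ is handled automatically by the same computation: the list of retained arguments is empty and both sides reduce to the constant $\afun_\pi$.
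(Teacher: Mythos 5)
Your proof is correct and coincides with the paper's own argument: both proceed by structural induction on $\aterm$, observe that $\afilter$ preserves the root symbol so $\filter$ applies the same filter set $\pi(\afun)$, and note that the retained positions carry $\afilter(\aterm_{i_j})$ while the $\addtop$-subterms sit only at discarded positions. Your additional remarks about well-sortedness and the empty-filter case are harmless elaborations of the same computation.
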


\begin{proof}
By induction on the definition of $\aterm$.
For a variable it is obvious.
For $\aterm = \afun(\aterm_1,\ldots,\aterm_n)$ with $\pi(\afun) =
\{i_1,\ldots,i_k\}$, we have $\filter(\afilter(\aterm)) =
\filter(\afun(\aterm_1',\ldots,\aterm_n'))$ with $\aterm_j' =
\afilter(\aterm_j)$ if $j \in \pi(\afun)$.  This is equal to
$\afun_\pi(\filter(\afilter(\aterm_{i_1})),\ldots,\filter(\afilter(
\aterm_{i_k})))$, which by the induction hypothesis is exactly
$\afun_\pi(\filter(\aterm_{i_1}),\ldots,\filter(\aterm_{i_k})) =
\filter(\aterm)$.
\qed
\end{proof}

For a corresponding result to the standard $\filter(\aterm)\gamma^{
\filter} = \filter(\aterm\gamma)$, we will use the following two
lemmas:

\begin{lemma}\label{lem:filterleftsubstitute}
For all terms $\aterm$ and substitutions $\gamma$ on a domain
containing $\FV(\aterm)$: $\afilter(\aterm\gamma) \arrr{\Rules_\top}
\aterm\gamma^{\afilter}$.
\end{lemma}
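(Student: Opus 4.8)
The plan is to prove the statement by induction on the structure of $\aterm$, after first isolating the key auxiliary fact that for \emph{every} term $\bterm$ one has $\addtop(\bterm) \arrr{\Rules_\top} \afilter(\bterm)$. This auxiliary claim is precisely what lets the non-regarded arguments, which $\afilter$ has collapsed using $\addtop$, be recovered by $\Rules_\top$-steps, so I would state and prove it first and then feed it into the main induction.

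For the auxiliary claim I would induct on $\bterm$. If $\bterm$ is a variable, both sides are identical. If $\bterm = \afun(\bterm_1,\ldots,\bterm_n)$, there are two cases following the definition of $\TCapp$. If $\afun(\TCapp(\bterm_1),\ldots,\TCapp(\bterm_n))$ unifies with some left-hand side, then $\TCapp(\bterm)$ is a fresh variable, so $\addtop(\bterm) = \top_\csort$ where $\bterm : \csort$; since $\afilter$ preserves sorts, $\afilter(\bterm) : \csort$, and Lemma~\ref{lem:reduceanything} gives $\top_\csort \arr{\Rules_\top} \afilter(\bterm)$ in a single step. Otherwise $\addtop(\bterm) = \afun(\addtop(\bterm_1),\ldots,\addtop(\bterm_n))$ while $\afilter(\bterm) = \afun(\bterm_1',\ldots,\bterm_n')$ with $\bterm_i' = \afilter(\bterm_i)$ for $i \in \pi(\afun)$ and $\bterm_i' = \addtop(\bterm_i)$ otherwise; the induction hypothesis handles the regarded arguments and the non-regarded ones are already equal, so closing $\arrr{\Rules_\top}$ under contexts finishes this case.

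With the auxiliary claim available I would prove the lemma itself by induction on $\aterm$. The variable case is an equality, as $\afilter(\avar\gamma) = \afilter(\gamma(\avar)) = \gamma^{\afilter}(\avar)$. For $\aterm = \afun(\aterm_1,\ldots,\aterm_n)$ I compute $\afilter(\aterm\gamma) = \afun(c_1,\ldots,c_n)$, where $c_i = \afilter(\aterm_i\gamma)$ for $i \in \pi(\afun)$ and $c_i = \addtop(\aterm_i\gamma)$ otherwise, and it then suffices to reduce each $c_i$ to $\aterm_i\gamma^{\afilter}$. For regarded $i$ this is exactly the induction hypothesis on $\aterm_i$. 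For non-regarded $i$ I first apply the auxiliary claim to the term $\aterm_i\gamma$, obtaining $\addtop(\aterm_i\gamma) \arrr{\Rules_\top} \afilter(\aterm_i\gamma)$, and then continue with the induction hypothesis on $\aterm_i$ to reach $\aterm_i\gamma^{\afilter}$. Composing these reductions and closing under the context $\afun(\ldots)$ gives the claim.

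The main obstacle is the auxiliary claim, and specifically the subcase where $\TCapp$ collapses a subterm to a fresh variable: there the two sides are structurally unrelated, and the reduction is possible only because the $\top$ symbols introduced by $\addtop$ can be rewritten to arbitrary terms of the right sort via $\Rules_\top$ (Lemma~\ref{lem:reduceanything}). Getting this interplay between $\TCapp$-collapsing and the $\top_\csort \arrz \avar$ rules right, together with the observation that both $\addtop$ and $\afilter$ preserve sorts so that the single $\Rules_\top$-step is well-sorted, is the only genuinely delicate point; the remaining bookkeeping is routine.
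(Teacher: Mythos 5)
Your proof is correct and follows the same overall structure as the paper's: induction on $\aterm$, a trivial variable case, and a split on regarded versus non-regarded argument positions in the function case. The difference lies in how the non-regarded positions are discharged. The paper invokes Lemma~\ref{lem:addtopsubstitute2} to get $\addtop(\aterm_i\gamma) \arrr{\Rules_\top} \aterm_i\gamma$ and then, as written, concludes with $\aterm\gamma$ rather than the stated target $\aterm\gamma^{\afilter}$; these differ at non-regarded positions \emph{inside} the substituted values, and no $\Rules_\top$-step can introduce the $\top$-symbols needed to pass from $\gamma(\avar)$ to $\afilter(\gamma(\avar))$, so the paper's write-up is at best imprecise at that point. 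Your auxiliary claim $\addtop(\bterm) \arrr{\Rules_\top} \afilter(\bterm)$, chained with the induction hypothesis on $\aterm_i$, lands exactly on $\aterm_i\gamma^{\afilter}$ and hence proves the lemma precisely as stated -- arguably the cleaner route, at the modest cost of one extra induction. One small slip: in the variable case of your auxiliary claim the two sides are \emph{not} identical, since $\addtop(\avar) = \top_\csort$ whereas $\afilter(\avar) = \avar$; but the single $\Rules_\top$-step supplied by Lemma~\ref{lem:reduceanything} closes the gap, exactly as in your unifiable-cap subcase, so nothing breaks.
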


\begin{proof}
By induction on the form of $\aterm$.

If $\aterm$ is a variable, then $\afilter(\aterm\gamma)
\afilter(\gamma(\aterm)) = \gamma^{\afilter}(\aterm) =
\aterm\gamma^{\afilter}$.

Otherwise, $\aterm = \afun(\aterm_1,\ldots,\aterm_n)$ and
$\afilter(\aterm\gamma) = \afun(\bterm_1,\ldots,\bterm_n)$ where
$\bterm_i = \afilter(\aterm_i\gamma)$ if $i \in \pi(\afun)$ and
$\bterm_i = \addtop(\aterm_i\gamma)$ otherwise.

By the induction hypothesis, for $i \in \pi(\afun)$ we have
$\bterm_i = \afilter(\aterm_i\gamma) \arrr{\Rules_\top}
\aterm_i\gamma^{\afilter}$.
For $i \notin \pi(\afun)$ we have $\bterm_i = \addtop(\aterm_i
\gamma) \arrr{\Rules_\top} \aterm_i\gamma$ by
Lemma~\ref{lem:addtopsubstitute2}

Consequently, each $\bterm_i \arrr{\Rules_\top} \aterm_i\gamma$, so
$\afilter(\aterm\gamma) = \afun(\seq{\bterm}) \arrr{\Rules_\top}
\afun(\seq{\aterm})\gamma = \aterm\gamma$.
\qed
\end{proof}

\begin{lemma}\label{lem:filterrightsubstitute}
For all terms $\aterm$ and substitutions $\gamma$ whose domain
contains $\FV(\aterm)$:
$\afilter(\aterm)\gamma^{\afilter} \arrr{\Rules_\top}
\afilter(\aterm\gamma)$.
\end{lemma}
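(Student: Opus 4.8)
The plan is to prove this by structural induction on $\aterm$, mirroring the proof of Lemma~\ref{lem:filterleftsubstitute} but pushing the substitution inward through $\addtop$ via Lemma~\ref{lem:addtopsubstitute} instead of collapsing $\addtop$ via Lemma~\ref{lem:addtopsubstitute2}.

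First I would dispatch the base case: if $\aterm$ is a variable $\avar$, then $\afilter(\avar) = \avar$, so $\afilter(\aterm)\gamma^{\afilter} = \gamma^{\afilter}(\avar) = \afilter(\gamma(\avar)) = \afilter(\aterm\gamma)$; the two sides coincide, so the required reduction holds reflexively in zero steps.

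For the inductive step, write $\aterm = \afun(\aterm_1,\ldots,\aterm_n)$. Then $\afilter(\aterm) = \afun(\bterm_1,\ldots,\bterm_n)$ with $\bterm_i = \afilter(\aterm_i)$ for $i \in \pi(\afun)$ and $\bterm_i = \addtop(\aterm_i)$ otherwise, and likewise $\afilter(\aterm\gamma) = \afun(\cterm_1,\ldots,\cterm_n)$ with $\cterm_i = \afilter(\aterm_i\gamma)$ for $i \in \pi(\afun)$ and $\cterm_i = \addtop(\aterm_i\gamma)$ otherwise, according to the same case split. Since $\arrr{\Rules_\top}$ is closed under contexts, I would reduce the goal to showing $\bterm_i\gamma^{\afilter} \arrr{\Rules_\top} \cterm_i$ for each argument $i$ separately. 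For regarded arguments $i \in \pi(\afun)$ this is precisely the induction hypothesis applied to $\aterm_i$, giving $\afilter(\aterm_i)\gamma^{\afilter} \arrr{\Rules_\top} \afilter(\aterm_i\gamma)$. For filtered-away arguments $i \notin \pi(\afun)$ I would invoke Lemma~\ref{lem:addtopsubstitute}, taking the outer substitution on $\addtop(\aterm_i)$ to be $\gamma^{\afilter}$ and the inner substitution (called $\delta$ in that lemma) to be $\gamma$, which yields $\addtop(\aterm_i)\gamma^{\afilter} \arrr{\Rules_\top} \addtop(\aterm_i\gamma)$. Reassembling these argument-wise reductions under the common head $\afun$ then gives $\afilter(\aterm)\gamma^{\afilter} \arrr{\Rules_\top} \afilter(\aterm\gamma)$, as required.

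There is no serious obstacle here; the only point demanding care is matching the substitution roles when applying Lemma~\ref{lem:addtopsubstitute} in the non-regarded case, so that $\gamma^{\afilter}$ plays the role of the outer substitution and $\gamma$ the inner one. This is exactly where the direction differs from Lemma~\ref{lem:filterleftsubstitute}: there one collapses $\addtop$ down to the unfiltered subterm, whereas here the $\addtop$ wrapper is retained on both sides of the reduction and only the substitution migrates inward.
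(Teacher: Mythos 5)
Your proof is correct and follows essentially the same route as the paper: structural induction on $\aterm$, with the variable case holding by definition, regarded arguments handled by the induction hypothesis, and filtered-away arguments handled by Lemma~\ref{lem:addtopsubstitute} with exactly the substitution roles you identify ($\gamma^{\afilter}$ outer, $\gamma$ inner). No gaps.
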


\begin{proof}
By induction on the form of $\aterm$.

If $\aterm$ is a variable, then $\afilter(\aterm)\gamma^{\afilter} =
\gamma^{\afilter}(\aterm) = \afilter(\gamma(\aterm)) =
\afilter(\aterm\gamma)$.

Otherwise, $\aterm = \afun(\aterm_1,\ldots,\aterm_n)$ and
$\afilter(\aterm) = \afun(\aterm_1',\ldots,\aterm_n')$ with
$\aterm_i' = \afilter(\aterm_i)$ for $i \in \pi(\afun)$ and
$\aterm_i' = \addtop(\aterm_i)$ otherwise.  Moreover,
$\afilter(\aterm\gamma) = \afun(\bterm_1,\ldots,\bterm_n)$ where
$\bterm_i = \afilter(\aterm_i\gamma)$ for $i \in \pi(\afun)$ and
$\bterm_i = \addtop(\aterm_i\gamma)$ otherwise.
By the induction hypothesis, for $i \in \pi(\afun)$ we have
$\aterm_i'\gamma^{\afilter} \arrr{\Rules_\top} \bterm_i$.
By Lemma~\ref{lem:addtopsubstitute} the same holds for $i \notin
\pi(\afun)$.
\qed
\end{proof}

With this, we can see that ``filtered'' terms can be reduced with
``filtered'' rules, just like we could do in full filtering (except
that sometimes $\Rules_\top$ rules are needed to fill in the gaps):

\begin{lemma}\label{lem:afilter:reduce}
If $\aterm \arr{\Rules} \bterm$, then $\afilter(\aterm) \arrr{\afilter(
\Rules) \cup \Rules_\top} \afilter(\bterm)$.
\end{lemma}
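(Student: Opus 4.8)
The plan is to prove the statement by induction on the derivation of $\aterm \arr{\Rules} \bterm$, splitting into a root step and an internal step, exactly parallel to the structure of Lemma~\ref{lem:filterreduce}. The two substitution lemmas~\ref{lem:filterleftsubstitute} and~\ref{lem:filterrightsubstitute} are tailored to make the root case succeed, so once they are available the remaining work is largely bookkeeping with monotonicity of $\arrr{\afilter(\Rules) \cup \Rules_\top}$.

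For the root case I would write $\aterm = \ell\asub$ and $\bterm = r\asub$ for some $\ell \arrz r \in \Rules$. Since $\afilter$ leaves left-hand sides of rules untouched, the relevant filtered rule is $\ell \arrz \afilter(r) \in \afilter(\Rules)$; this is a legitimate rule because $\addtop$ only erases variables, so $\FV(\afilter(r)) \subseteq \FV(r)$. I would then assemble the chain
\[
\afilter(\ell\asub) \arrr{\Rules_\top} \ell\asub^{\afilter}
\arr{\afilter(\Rules)} \afilter(r)\asub^{\afilter}
\arrr{\Rules_\top} \afilter(r\asub),
\]
whose first reduction is Lemma~\ref{lem:filterleftsubstitute} (with the term $\ell$), whose middle step applies the filtered rule under the substitution $\asub^{\afilter}$, and whose final reduction is Lemma~\ref{lem:filterrightsubstitute} (with the term $r$). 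Since both $\Rules_\top$ and $\afilter(\Rules)$ lie inside $\afilter(\Rules) \cup \Rules_\top$, this yields $\afilter(\aterm) \arrr{\afilter(\Rules) \cup \Rules_\top} \afilter(\bterm)$.

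For the internal case I would write $\aterm = \afun(\aterm_1,\ldots,\aterm_j,\ldots,\aterm_n)$ and $\bterm = \afun(\aterm_1,\ldots,\aterm_j',\ldots,\aterm_n)$ with $\aterm_j \arr{\Rules} \aterm_j'$. Every argument $i \neq j$ is mapped identically by $\afilter$, so only the $j$-th argument changes, and $\afilter$ preserves the full root symbol $\afun$. If $j \in \pi(\afun)$, that argument is $\afilter(\aterm_j)$ and the induction hypothesis gives $\afilter(\aterm_j) \arrr{\afilter(\Rules) \cup \Rules_\top} \afilter(\aterm_j')$; if $j \notin \pi(\afun)$, it is $\addtop(\aterm_j)$ and Lemma~\ref{lem:addtopreduce} gives $\addtop(\aterm_j) \arrr{\Rules_\top} \addtop(\aterm_j')$. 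In either subcase the $j$-th argument of $\afilter(\aterm)$ reduces over $\afilter(\Rules) \cup \Rules_\top$ to the corresponding argument of $\afilter(\bterm)$, and closing under the context $\afun(\ldots)$ completes the case.

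The only genuine obstacle is the root case, and precisely the asymmetry that $\afilter$ filters right-hand sides but not left-hand sides of rules: the redex $\ell\asub$ is matched against the unfiltered $\ell$, so no single filtered-substitution identity connects $\afilter(\ell\asub)$ directly to $\afilter(r\asub)$. The slack is supplied by the erasing rules $\top_\asort \arrz \avar$ of $\Rules_\top$, which is exactly why Lemmas~\ref{lem:filterleftsubstitute} and~\ref{lem:filterrightsubstitute} produce $\arrr{\Rules_\top}$-reductions rather than equalities; bridging with these two reductions is the heart of the argument, and everything else is routine.
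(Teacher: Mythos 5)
Your proof is correct and follows essentially the same route as the paper's: induction on the derivation, with the root case handled by the chain $\afilter(\ell\asub) \arrr{\Rules_\top} \ell\asub^{\afilter} \arr{\afilter(\Rules)} \afilter(r)\asub^{\afilter} \arrr{\Rules_\top} \afilter(r\asub)$ via Lemmas~\ref{lem:filterleftsubstitute} and~\ref{lem:filterrightsubstitute}, and the internal case split on $j \in \pi(\afun)$ using the induction hypothesis versus Lemma~\ref{lem:addtopreduce}. No gaps.
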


\begin{proof}
By induction on the derivation of $\aterm \arr{\Rules} \bterm$.

If this is done by a top reduction, so $\aterm = \ell\gamma$ and
$\bterm = r\gamma$ for some $\ell \arrz r \in \Rules$ and substitution
$\gamma$, then $\afilter(\aterm) \arrr{\Rules_\top} \ell
\gamma^{\afilter}$ by Lemma~\ref{lem:filterleftsubstitute}.
This term reduces by $\arr{\afilter(\Rules)}$ to $\afilter(r)
\gamma^{\afilter}$.
By Lemma~\ref{lem:filterrightsubstitute}, this reduces by
$\arrr{\Rules_\top}$ to $\afilter(r\gamma) = \afilter(\bterm)$.

Otherwise, we can write $\aterm = \afun(\ldots,\aterm_i,\ldots)$ and
$\bterm = \afun(\ldots,\bterm_i,\ldots)$ with $\aterm_i \arr{\Rules}
\bterm_i$.  If $i \notin \pi(\afun)$ then observe that
$\addtop(\aterm_i) \arrr{\Rules_\top} \addtop(\bterm_i)$ by
Lemma~\ref{lem:addtopreduce}.  Otherwise $\filter(\aterm_i)
\arrr{\afilter(\Rules) \cup \Rules_\top} \filter(\bterm_i)$ by the
induction hypothesis.  Either way we are done.
\qed
\end{proof}

The important result, to use $\implformrules$ with ``filtered'' rules,
is the following:

\begin{lemma}\label{lem:afilteruseful}
For all $\aterm,\Rules,\pi$:
$\implformrules(\aterm,\afilter(\Rules),\pi) \subseteq
\afilter(\implformrules(\aterm,\Rules,\pi))$.
\end{lemma}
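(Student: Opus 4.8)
The plan is to argue by a \emph{closure} (post-fixed-point) argument rather than by unfolding reductions. Recall that $\implformrules(\bterm,\afilter(\Rules),\pi)$ is defined as the least family of rule sets, indexed by terms $\bterm$, that is closed under the four conditions of its definition. Hence, to prove $\implformrules(\bterm,\afilter(\Rules),\pi)\subseteq\afilter(\implformrules(\bterm,\Rules,\pi))$ for all terms simultaneously, it suffices to show that the family $T_\bterm:=\afilter(\implformrules(\bterm,\Rules,\pi))$ already satisfies those four closure conditions for $\implformrules(\cdot,\afilter(\Rules),\pi)$: being least, $\implformrules(\cdot,\afilter(\Rules),\pi)$ is then pointwise contained in $T$. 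Note first that $T_\bterm\subseteq\afilter(\Rules)$, since $\implformrules(\bterm,\Rules,\pi)\subseteq\Rules$. Two elementary observations drive everything: (i) $\afilter$ never touches left-hand sides and fixes variables, so $\afilter(\ell\arrz r)=\ell\arrz\afilter(r)$, with $\afilter(r)$ a variable iff $r$ is (and then equal to $r$), and $\afilter(r)$ headed by $\afun$ iff $r$ is; and (ii) since $\afilter(\Rules)$ and $\Rules$ have exactly the same left-hand sides, $\TCapp(\cdot,\afilter(\Rules))=\TCapp(\cdot,\Rules)$.

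First I would dispatch the easy conditions. If $\bterm$ is non-linear then $\implformrules(\bterm,\Rules,\pi)=\Rules$, so $T_\bterm=\afilter(\Rules)$ and every constraint on $T_\bterm$ holds trivially; this also lets me assume $\bterm$ linear in all remaining cases. The subterm condition ($T_{\bterm_i}\subseteq T_\bterm$ for $i\in\pi(\afun)$) and the recursive condition ($\ell'\arrz r'\in T_\bterm$ implies $T_{\ell'}\subseteq T_\bterm$) follow by applying the monotone image map $\afilter(\cdot)$ to the corresponding inclusions for $\implformrules(\cdot,\Rules,\pi)$, using $\ell'=\ell$ from observation (i). For the collapsing condition, observation (i) shows that the collapsing rules of $\afilter(\Rules)$ are literally the collapsing rules of $\Rules$ (unchanged, with the same sort), so membership transfers directly through $\afilter$.

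The hard part is the $\TCapp$-unification condition. Here a rule $\ell'\arrz\afun(r_1',\dots,r_n')\in\afilter(\Rules)$ equals $\afilter(\ell\arrz\afun(r_1,\dots,r_n))$ with $r_i'=\afilter(r_i)$ for $i\in\pi(\afun)$ and $r_i'=\addtop(r_i)$ otherwise, and I am given that $\afun(\TCapp(r_1'),\dots,\TCapp(r_n'))$ unifies with $\bterm$ (where $\TCapp$ is w.r.t.\ $\Rules$ by (ii)). I must deduce that $\afun(\TCapp(r_1),\dots,\TCapp(r_n))$ unifies with $\bterm$, which by the definition for $\Rules$ puts $\ell\arrz\afun(\seq{r})$ into $\implformrules(\bterm,\Rules,\pi)$ and hence $\ell'\arrz\afun(\seq{r'})=\afilter(\ell\arrz\afun(\seq{r}))$ into $T_\bterm$. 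The key lemma I would isolate is: \emph{for every term $t$, each of $\TCapp(\afilter(t),\Rules)$ and $\TCapp(\addtop(t),\Rules)$ is an instance of $\TCapp(t,\Rules)$.} The $\addtop$ case is immediate from Lemma~\ref{lem:tcapprop}(\ref{tcap:both}): $\addtop(t)=\TCapp(t,\Rules)\sigma$ for $\sigma$ replacing the fresh variables by $\top$-constants, so $\TCapp(\addtop(t),\Rules)=\TCapp(t,\Rules)\rho$ for some $\rho$. The $\afilter$ case I would prove by induction on $t$: for $t=\afun(\seq{t})$, each argument $\TCapp(t_i')$ is an instance of $\TCapp(t_i)$ by the induction hypothesis (regarded $i$) or the $\addtop$ case (non-regarded $i$); then, if $\TCapp(t)$ is not capped at the root, neither is $\TCapp(\afilter(t))$ — since an instance that unifies with some left-hand side forces the more general term to unify as well — so the instance relation propagates, while if $\TCapp(t)$ is a fresh variable there is nothing to show.

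Finally, armed with this lemma I would close the unification step. As $\bterm$ may be assumed linear, its argument positions $\bterm_1,\dots,\bterm_n$ have pairwise disjoint variables, and the $\TCapp(r_i')$ use pairwise disjoint fresh variables, so the joint unification of $\afun(\TCapp(\seq{r'}))$ with $\bterm$ decomposes into per-position unifiability of $\TCapp(r_i')$ with $\bterm_i$. Since $\TCapp(r_i')$ is an instance of $\TCapp(r_i)$ with fresh (hence $\bterm$-disjoint) variables, the standard subsumption fact — if $s=s_0\sigma$ with $\mathrm{dom}(\sigma)$ disjoint from the variables of $w$ and $s$ unifies with $w$, then $s_0$ unifies with $w$ — yields unifiability of $\TCapp(r_i)$ with $\bterm_i$; recomposing across the disjoint positions gives unifiability of $\afun(\TCapp(\seq{r}))$ with $\bterm$, as required. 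I expect the only genuine obstacle to be the $\afilter$ case of the instance lemma, where I must rule out that filtering makes $\TCapp$ cap a position the original leaves uncapped; this is exactly where the subsumption monotonicity of ``unifies with a left-hand side'' is essential, and where observation (ii) is needed so that both $\TCapp$ computations refer to the same left-hand sides.
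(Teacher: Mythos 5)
Your proof is correct, and its overall architecture coincides with the paper's: both arguments proceed by induction over the inductive definition of $\implformrules(\cdot,\afilter(\Rules),\pi)$ (your post-fixed-point phrasing is just a cleaner packaging of the same induction), both exploit that $\afilter$ fixes left-hand sides and preserves variables, collapsing-ness and root symbols, and both reduce everything to the $\TCapp$-unification clause. Where you genuinely diverge is in that hard clause. The paper asserts the \emph{equalities} $\TCapp(\afilter(\cterm)) = \TCapp(\cterm)$ and $\TCapp(\addtop(r_i')) = \TCapp(r_i')$ modulo renaming, the latter resting on the observation that the $\top_\asort$ constants unify with left-hand sides of $\Rules_\top$ and hence get re-capped as fresh variables --- a step that silently depends on which rule set $\TCapp$ is computed against. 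You instead prove the weaker statement that $\TCapp(\afilter(t))$ and $\TCapp(\addtop(t))$ are \emph{instances} of $\TCapp(t)$ (the $\addtop$ case directly from Lemma~\ref{lem:tcapprop}(\ref{tcap:both})), and recover unifiability via the subsumption fact for terms with disjoint fresh variables, using linearity of $\bterm$ to decompose the unification positionwise. This buys robustness: your argument goes through whether or not the $\top_\asort$ symbols are capped, i.e.\ independently of whether $\Rules_\top$ is counted among the rules $\TCapp$ consults, whereas the paper's equality claim is shorter but only valid under that additional assumption. The one point worth making explicit if you write this up is the well-definedness of the combined unifier in the subsumption step, which you correctly ground in the freshness of $\TCapp$'s variables and the pairwise disjointness across argument positions.
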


\begin{proof}
We prove: if $\ell \arrz r \in \implformrules(\aterm,\afilter(\Rules),
\pi)$, then some $r'$ exists such that $r = \afilter(r')$ and $\ell
\arrz r' \in \implformrules(\aterm,\Rules,\pi)$ (so therefore
$\afilter(\ell \arrz r') = \ell \arrz r \in
\afilter(\implformrules(\aterm,\Rules,\pi))$).

Obviously, linearity is the same in both cases, so if
$\implformrules(\aterm,\afilter(\Rules),\pi) = \afilter(\Rules)$ by
linearity, the same holds for $\afilter(\implformrules(\aterm,\Rules,
\pi))$.  Otherwise, let $\aterm$ be linear, and assume that $\ell
\arrz r \in \implformrules(\aterm,\afilter(\Rules),\pi)$.
Since $\implformrules(\aterm,\afilter(\Rules),\pi)$ is non-empty,
$\aterm$ cannot be a variable; we can write $\aterm = \afun(\aterm_1,
\ldots,\aterm_n)$.

First, suppose $\ell \arrz r \in \implformrules(\bterm,\afilter(
\Rules),\pi)$ for some $\bterm \arrz \cterm \in \implformrules(
\aterm,\afilter(\Rules),
\pi)$.  By the induction
hypothesis, we can find suitable $\cterm'$ such that $\bterm \arrz
\cterm' \in \implformrules(\aterm,\Rules,\pi)$, and $\cterm =
\afilter(\cterm')$, so applying again the induction hypothesis with
$\ell \arrz r$ and $\bterm$, we find suitable $r'$ such that $\ell
\arrz r' \in \implformrules(\bterm,\Rules,\pi) \subseteq
\implformrules(\aterm,\Rules,\pi)$.

Second, suppose
$\ell \arrz r \in \implformrules(\aterm_i,\afilter(\Rules),\pi)$ for
some $i \in \pi(\afun)$.  Then by the induction hypothesis, we can
find suitable $r'$ with $\ell \arrz r' \in \implformrules(\aterm_i,
\Rules,\pi) \subseteq \implformrules(\aterm,\Rules,\pi)$.

Otherwise, note that since $\ell \arrz r \in \afilter(\Rules)$ we can
always find $r'$ such that $\afilter(\ell \arrz r') = \ell \arrz r$.
If $\ell \arrz r \in \implformrules(\aterm,\afilter(\Rules),\pi)$
because it is collapsing, then $r'$ is a variable too, and therefore
$\ell' \arrz r' \in \implformrules(\aterm,\Rules,\pi)$.

If not collapsing, $r' = \afun(r_1',\ldots,r_n')$ and $r = \afun(r_1,\ldots,
r_n)$ where $r_i = \afilter(r_i')$ for $i \in \pi(\afun)$ but
$r_i = \addtop(r_i')$ and $i \notin \pi(\afun)$; also,
$\afun(\TCapp(r_1),\ldots,\TCapp(r_n))$ unifies with $\aterm$.  This
exactly means that each $\TCapp(r_i)$ unifies with
$\aterm_i$.
We are done if also $\TCapp(r_i')$ unifies with $\aterm_i$.

For $i \notin \pi(\afun)$, this means: given that $\TCapp(\addtop(r_i
'))$ unifies with $\aterm_i$, we must see that also $\TCapp(r_i')$
unifies with $\aterm_i$.  Observing that the $\top_\asort$ symbols
obviously unify with a left-hand side in $\Rules_\top$, we have
$\TCapp(\addtop(r_i')) = \TCapp(r_i')$, so this is obviously true.

For $i \in \pi(\afun)$, this means: given that $\TCapp(\afilter(r_i')
)$ unifies with $\aterm_i$, we must see that also $\TCapp(r_i')$
unifies with $\aterm_i$.  But with induction on the form of $\cterm$
we very easily see that $\TCapp(\afilter(\cterm)) = \TCapp(\cterm)$
for all $\cterm$, modulo renaming of variables.  So this, too, is
obvious.
\qed
\end{proof}

\subsection{Usable Rules with an Argument Filtering: Alternative Way}
\label{subsec:userules:alternative}

As announced in Appendix~\ref{subsec:tcapconscious}, we will now look
at an alternative way to prove Theorem~\ref{thm:userules}, which also
uses the strengthened usable rules from Definition~\ref{def:urtcap}.

\begin{theorem}[Extended version of Theorem~\ref{thm:userules}]\label{thm:userulesext}
Let $(\succsim,\succ)$ be a reduction pair and $\pi$ an argument
filtering.  The processor which maps
$(\Pairs,\Rules,\flag)$
to the following result is sound:
\begin{itemize}
\item $\{(\Pairs \setminus \Pairs^\succ,\Rules,\minimal)\}$ if $\flag = \minimal$ and:
  \begin{itemize}
  \item $\filter(\ell) \succ \filter(r)$ for $\ell \arrz r \in \Pairs^\succ$ and
  $\filter(\ell) \succsim \filter(r)$ for $\ell \arrz r \in \Pairs \setminus \Pairs^\succ$;
  \item $\filter(u) \succsim \filter(v)$ for $u \arrz v \in \tcapuserules(\Pairs,\Rules,\pi) \cup \Ce$.
  \end{itemize}
\item $\{(\Pairs, \Rules, \flag)\}$ otherwise
\end{itemize}
\end{theorem}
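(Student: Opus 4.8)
The plan is to give an $\afilter$-based reprise of the standard usable-rules argument (Lemma~\ref{lem:usablechain} together with the proof of Theorem~\ref{thm:userules}), replacing the full filtering $\filter$ by the $\TCapp$-conscious filtering $\afilter$ of Appendix~\ref{subsec:tcapconscious} and replacing $\userules$ by $\tcapuserules$ of Definition~\ref{def:urtcap}. The case $\flag = \all$, and the case where the orientation conditions fail, are trivial, since the processor then returns its input; so I would assume $\flag = \minimal$ and that an infinite minimal $(\Pairs \setminus \Pairs^\succ, \Rules)$-chain $\rijtje{(\ell_i \arrz r_i, \gamma_i) \mid i \in \N}$ exists, aiming to contradict well-foundedness of $\succ$. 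If $\Rules$ is not finitely branching, then $\tcapuserules(\Pairs,\Rules,\pi) = \Rules$ by definition and the claim follows exactly as in the non-finitely-branching case of Lemma~\ref{lem:usablechain}; so I would also assume $\Rules$ finitely branching, which makes the interpretation below well-defined on the (terminating) subterms of the chain.

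The core is a $\TCapp$-conscious analogue of Lemma~\ref{lem:usablechain}: from the minimal chain, produce substitutions $\delta_i$ so that $\rijtje{(\afilter(\ell_i) \arrz \afilter(r_i), \delta_i) \mid i \in \N}$ is an infinite $(\afilter(\Pairs),\ \afilter(\tcapuserules(\Pairs,\Rules,\pi)) \cup \Rules_\top \cup \Ce)$-chain using the same pairs infinitely often. To get there I would adapt the interpretation $\encode$ to a variant $\encode'$ in which non-regarded arguments are tipped off with $\addtop$ (instead of being deleted), while subterms whose root is not a usable symbol are still absorbed into the $\Fc_\asort(\cdot,\makelist(\cdot))$ structure that precomputes their reducts. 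Taking ``usable symbol'' to mean a constructor or the root of a rule in $\tcapuserules(\Pairs,\Rules,\pi)$, I would then prove the two simulation lemmas in the style of Lemmas~\ref{lem:ur1} and~\ref{lem:ur2}: that $\encode'(\aterm\gamma)$ reaches $\afilter(\aterm)\gamma^{\encode'}$ up to $\Ce$- and $\Rules_\top$-steps (with equality on completely usable, regarded terms), and that a single $\Rules$-step on a terminating term is simulated by $\afilter(\tcapuserules(\Pairs,\Rules,\pi)) \cup \Rules_\top \cup \Ce$ steps on the $\encode'$-images, where Lemma~\ref{lem:afilter:reduce} and the $\addtop$-lemmas handle the non-regarded part and the $\Rules_\top$-filling.

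With the $\afilter$-chain in hand, I would convert it back to a $\filter$-chain and read off the required inequalities. Applying $\filter$ throughout and using $\filter(\afilter(\aterm)) = \filter(\aterm)$ (Lemma~\ref{lem:filterafilter}) together with Lemma~\ref{lem:filtersubstitute} turns each pair $\afilter(\ell_i) \arrz \afilter(r_i)$ back into $\filter(\ell_i) \arrz \filter(r_i)$; crucially, every $\Rules_\top$-step sits strictly inside a non-regarded position and is therefore erased by $\filter$, so the surviving reduction uses only $\filter(\tcapuserules(\Pairs,\Rules,\pi)) \cup \Ce$. The orientation hypotheses then make this a $\succsim$-decreasing sequence along which a pair from $\Pairs^\succ$ would force a $\succ$-step; if such pairs occurred infinitely often we would contradict well-foundedness of $\succ$, so the original chain has a tail avoiding $\Pairs^\succ$, establishing soundness.

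The main obstacle is the $\tcapuserules$ refinement inside the simulation lemma: I must show that whenever a rule actually fires at a regarded position of the encoded term, its left-hand side unifies with the $\TCapp$-abstraction of the relevant arguments, so that it genuinely belongs to $\tcapuserules(\Pairs,\Rules,\pi)$ rather than merely sharing a root symbol with some usable rule. This hinges on $\TCapp$ being preserved under reduction (Lemma~\ref{lem:tcapprop}, especially part~\ref{tcap:reduce}) and on $\addtop$ retaining exactly that $\TCapp$-abstraction at the positions that matter; the remaining delicacy is the bookkeeping ensuring that $\encode'$ is well-defined and that the three rule sets (the $\afilter$-filtered usable rules, $\Rules_\top$, and $\Ce$) interact correctly across a single reduction step.
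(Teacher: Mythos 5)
Your proposal follows essentially the same route as the paper's proof: an $\afilter$-based reprise of the usable-rules chain transformation via a modified interpretation $\encode$ that tips off non-regarded arguments with $\addtop$, yielding an $(\afilter(\Pairs),\afilter(\tcapuserules(\Pairs,\Rules,\pi)) \cup \Ce \cup \Rules_\top)$-chain with $\top$-symbols only at filtered positions, then collapsing back to $\filter$ via $\filter(\afilter(\aterm)) = \filter(\aterm)$. The one delicacy you flag is resolved in the paper exactly as you anticipate: the case split in the interpretation is made on whether $\afun(\TCapp(\seq{\aterm}))$ unifies with a left-hand side of $\Rules \setminus \tcapuserules(\Pairs,\Rules,\pi)$ (rather than on root symbols), so that a root step forces the fired rule into $\tcapuserules$ by Lemma~\ref{lem:tcapprop}.
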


We prove this with a reasoning much like the one used in
Appendix~\ref{app:proofs}.  However, we will use $\afilter$ instead
of $\filter$.

\begin{definition}
Fixing a finitely branching set of rules $\Rules$, a set of rules
$\Pairs$ and an argument filtering $\pi$, we introduce for any sort
$\asort$ additional fresh symbols $\bot_\asort : \asort$ and
$\Fc_\asort : [\asort \times \asort] \decpijl \asort$.
The function $\encode$
from \emph{terminating} terms to terms is inductively defined by:
\begin{itemize}
\item if $\aterm$ is a variable, then $\encode(\aterm) = \aterm$;
\item if $\aterm$ has the form $\afun(\aterm_1,\ldots,\aterm_n)$,
  with $\afun(\seq{\aterm}) : \asort$, then let for all $1 \leq k
  \leq n$: $\aterm_k' = \encode(\aterm_k)$ if $k \in \pi(\afun)$ and
  $\aterm_k' = \addtop(\aterm_k)$ otherwise.  We have:
  \begin{itemize}
  \item if $\afun(\TCapp(\aterm_1),\ldots,\TCapp(\aterm_n))$ does not
    unify with any left-hand side of a rule from $\Rules \setminus
    \tcapuserules(\Pairs,\Rules,\pi)$, then $\encode(\aterm) =
    \afun(\aterm_1',\ldots,\aterm_n')$
  \item otherwise,
    $\encode(\aterm) = \Fc_\asort(\afun(\aterm_1',\ldots,\aterm_n'),
    \makelist_\asort(\{\bterm \mid \afun(\seq{\aterm}) \arr{\Rules}
    \bterm\}))$, where $\makelist_\asort$ is defined by:
    \begin{itemize}
    \item $\makelist_\asort(\emptyset) = \bot_\asort$
    \item $\makelist_\asort(X) = \Fc_\asort(\encode(\bterm),
      \makelist_\asort(X \setminus \{\bterm\}))$ if $X$ is non-empty
      and $\bterm$ is its smallest element (lexicographically).
    \end{itemize}
  \end{itemize}
\end{itemize}
\end{definition}

This definition is well-defined because the system is finitely
branching (so $X$ is always finite) and terminating.

Let us start off with some technical lemmas about the relation
between $\addtop$ and $\encode$.

\begin{lemma}\label{lem:addtop:encode}
For all terms $\aterm$: $\addtop(\aterm) \arrr{\Rules_\top}
\encode(\aterm)$.
\end{lemma}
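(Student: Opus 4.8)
The plan is to prove the lemma by induction on the size of $\aterm$, with the case split driven by the distinction that governs $\addtop$: whether $\TCapp(\aterm)$ collapses to a fresh variable or has the form $\afun(\TCapp(\aterm_1),\ldots,\TCapp(\aterm_n))$. This is the natural split because $\addtop(\aterm)$ is $\TCapp(\aterm)$ with its variables replaced by $\top$-symbols, so its top-level shape is exactly determined by this dichotomy.

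First I would treat the case where $\TCapp(\aterm)$ is a variable. By the definition of $\TCapp$ this happens precisely when $\aterm$ is itself a variable, or when $\aterm = \afun(\aterm_1,\ldots,\aterm_n)$ and $\afun(\TCapp(\aterm_1),\ldots,\TCapp(\aterm_n))$ unifies with some left-hand side of $\Rules$. In either situation $\addtop(\aterm) = \top_\asort$, where $\asort$ is the sort of $\aterm$. Since $\encode$ preserves sorts (both branches of its definition turn a term of sort $\asort$ into a term of sort $\asort$), the term $\encode(\aterm)$ also has sort $\asort$, and a single application of Lemma~\ref{lem:reduceanything} gives $\top_\asort \arr{\Rules_\top} \encode(\aterm)$, hence $\addtop(\aterm) \arrr{\Rules_\top} \encode(\aterm)$. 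Here I deliberately do not need to determine which branch of the $\encode$ definition applies, since $\top_\asort$ rewrites in one step to anything of its sort.

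In the remaining case $\aterm = \afun(\aterm_1,\ldots,\aterm_n)$ and $\afun(\TCapp(\aterm_1),\ldots,\TCapp(\aterm_n))$ does not unify with any left-hand side of $\Rules$, so $\addtop(\aterm) = \afun(\addtop(\aterm_1),\ldots,\addtop(\aterm_n))$. Because $\Rules \setminus \tcapuserules(\Pairs,\Rules,\pi) \subseteq \Rules$, the same non-unification holds for that subset, so $\encode$ falls into its first branch: $\encode(\aterm) = \afun(\aterm_1',\ldots,\aterm_n')$ with $\aterm_k' = \encode(\aterm_k)$ for $k \in \pi(\afun)$ and $\aterm_k' = \addtop(\aterm_k)$ otherwise. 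I would then reduce argument by argument: for $k \in \pi(\afun)$ the induction hypothesis yields $\addtop(\aterm_k) \arrr{\Rules_\top} \encode(\aterm_k) = \aterm_k'$, and for $k \notin \pi(\afun)$ we have $\addtop(\aterm_k) = \aterm_k'$, a reduction of length zero. Closing these under the context $\afun(\ldots)$ delivers $\addtop(\aterm) \arrr{\Rules_\top} \encode(\aterm)$.

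The one point that needs care — and the only genuine obstacle — is that the split built into $\addtop$ (variable versus non-variable $\TCapp$, based on unification with all of $\Rules$) does not coincide with the branch distinction built into $\encode$ (based on unification with $\Rules \setminus \tcapuserules(\Pairs,\Rules,\pi)$). The argument sidesteps this mismatch by keeping the two splits separate: when $\TCapp(\aterm)$ collapses to a variable the precise $\encode$-branch is irrelevant, as $\top_\asort$ reduces to everything of its sort; and when $\TCapp(\aterm)$ does not collapse, the inclusion $\Rules \setminus \tcapuserules(\Pairs,\Rules,\pi) \subseteq \Rules$ forces $\encode$ into its structural first branch, so the two term structures align exactly. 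I would also verify in passing the sort-preservation of $\encode$ invoked in the first case, which is immediate from its definition.
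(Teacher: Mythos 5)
Your proof is correct and follows essentially the same route as the paper: induction on the structure of $\aterm$, dispatching the case $\addtop(\aterm) = \top_\asort$ via Lemma~\ref{lem:reduceanything}, and in the remaining case observing that non-unification with left-hand sides of $\Rules$ implies non-unification with those of $\Rules \setminus \tcapuserules(\Pairs,\Rules,\pi)$, so both terms have the same root and the arguments are handled by the induction hypothesis (for $i \in \pi(\afun)$) or are syntactically equal (otherwise). Your extra remarks on sort preservation and on the mismatch between the two case splits are careful but not needed beyond what the paper already does implicitly.
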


\begin{proof}
By induction on the form of $\aterm$.

If $\addtop(\aterm) = \top_\asort$ for some $\asort$, then we are
immediately done by Lemma~\ref{lem:reduceanything}.  Otherwise,
$\aterm = \afun(\aterm_1,\ldots,\aterm_n)$, with
$\afun(\TCapp(\aterm_1),\ldots,\TCapp(\aterm_n))$ not unifying with
any left-hand side of $\Rules$.
In particular, it does not unify with a left-hand side from $\Rules
\setminus \tcapuserules(\Pairs,\Rules,\pi)$.  Therefore,
$\encode(\aterm) = \afun(\aterm_1',\ldots,\aterm_n')$ where:
\begin{itemize}
\item for $i \in \pi(\afun)$: $\aterm_i' = \encode(\aterm_i)$, so by
  the induction hypothesis $\addtop(\aterm_i) \arrr{\Rules_\top}
  \aterm_i'$;
\item for $i \notin \pi(\afun)$: $\aterm_i' = \addtop(\aterm_i)$, so
  $\addtop(\aterm_i) \arrr{\Rules_\top} \aterm_i'$ in 0 steps.
\end{itemize}
Thus, $\addtop(\aterm) = \afun(\addtop(\aterm_1),\ldots,\addtop(
\aterm_n)) \arrr{\Rules_\top} \afun(\aterm_1',\ldots,\aterm_n') =
\encode(\aterm)$.
\qed
\end{proof}

\begin{lemma}\label{lem:addtop:encode:sub}
For all terms $\aterm$ and substitutions $\gamma$:
$\addtop(\aterm\gamma) \arrr{\Rules_\top} \aterm\gamma^{\encode}$.
\end{lemma}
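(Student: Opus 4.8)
The plan is to prove this by induction on the structure of $\aterm$, case-splitting exactly as in the proof of Lemma~\ref{lem:addtop:encode} but feeding that earlier lemma in at the base case. Recall that $\addtop(\bterm)$ is $\TCapp(\bterm)$ with each variable $\avar : \asort$ replaced by $\top_\asort$, and that $\gamma^{\encode}$ is the substitution with $\gamma^{\encode}(\avar) = \encode(\gamma(\avar))$; in particular the statement tacitly assumes $\gamma$ maps into terminating terms, so that $\gamma^{\encode}$ is defined. First I would dispose of the base case where $\aterm$ is a variable $\avar$. Here $\aterm\gamma = \gamma(\avar)$, so $\addtop(\aterm\gamma) = \addtop(\gamma(\avar))$, whereas $\aterm\gamma^{\encode} = \gamma^{\encode}(\avar) = \encode(\gamma(\avar))$. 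Thus the goal is precisely $\addtop(\gamma(\avar)) \arrr{\Rules_\top} \encode(\gamma(\avar))$, which is Lemma~\ref{lem:addtop:encode} applied to the term $\gamma(\avar)$; no induction is needed in this case.

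For the inductive step, let $\aterm = \afun(\aterm_1,\ldots,\aterm_n)$, so that $\aterm\gamma = \afun(\aterm_1\gamma,\ldots,\aterm_n\gamma)$, and I would split on the shape of $\TCapp(\aterm\gamma)$. If $\afun(\TCapp(\aterm_1\gamma),\ldots,\TCapp(\aterm_n\gamma))$ unifies with some left-hand side of $\Rules$, then by the definition of $\TCapp$ the term $\TCapp(\aterm\gamma)$ is a fresh variable, and hence $\addtop(\aterm\gamma) = \top_\asort$, where $\asort$ is the common sort of $\aterm$ and $\aterm\gamma$. Since $\encode$, and therefore $\gamma^{\encode}$, preserves sorts, the term $\aterm\gamma^{\encode}$ also has sort $\asort$, so Lemma~\ref{lem:reduceanything} gives $\top_\asort \arr{\Rules_\top} \aterm\gamma^{\encode}$, which is in particular a $\arrr{\Rules_\top}$ reduction. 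This settles the collapsing sub-case.

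Otherwise $\afun(\TCapp(\aterm_1\gamma),\ldots,\TCapp(\aterm_n\gamma))$ unifies with no left-hand side, so $\TCapp(\aterm\gamma) = \afun(\TCapp(\aterm_1\gamma),\ldots,\TCapp(\aterm_n\gamma))$, and consequently $\addtop(\aterm\gamma) = \afun(\addtop(\aterm_1\gamma),\ldots,\addtop(\aterm_n\gamma))$, since the variables of $\TCapp(\aterm\gamma)$ are exactly those of the $\TCapp(\aterm_i\gamma)$ and replacing all of them by $\top$ symbols acts componentwise. The induction hypothesis applied to each strict subterm $\aterm_i$ yields $\addtop(\aterm_i\gamma) \arrr{\Rules_\top} \aterm_i\gamma^{\encode}$, so reducing in each argument in turn gives $\addtop(\aterm\gamma) \arrr{\Rules_\top} \afun(\aterm_1\gamma^{\encode},\ldots,\aterm_n\gamma^{\encode}) = \aterm\gamma^{\encode}$, as required.

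The argument is essentially mechanical once the definitions are unfolded, and I expect the only delicate points to be bookkeeping ones. The first is recognising that the variable case is not a genuine induction step but an appeal to Lemma~\ref{lem:addtop:encode}, which is what lets $\encode$ enter on the right-hand side. The second is the case analysis on whether $\TCapp(\aterm\gamma)$ collapses to a fresh variable: it is precisely the \emph{non}-collapsing condition that licenses decomposing $\addtop(\aterm\gamma)$ along $\afun$ and hence invoking the induction hypothesis, whereas the collapsing case is absorbed cheaply by Lemma~\ref{lem:reduceanything}. I would also be careful to record the implicit sort-preservation of $\gamma^{\encode}$, since it is needed to apply Lemma~\ref{lem:reduceanything} at the correct sort $\asort$.
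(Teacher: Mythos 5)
Your proof is correct and follows essentially the same route as the paper: induction on the structure of $\aterm$, handling the variable case via Lemma~\ref{lem:addtop:encode}, the case where $\addtop(\aterm\gamma)$ collapses to $\top_\asort$ via Lemma~\ref{lem:reduceanything}, and the remaining case componentwise with the induction hypothesis. The additional bookkeeping you record (well-definedness of $\gamma^{\encode}$ on terminating terms, sort preservation) is left implicit in the paper but is accurate.
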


\begin{proof}
By induction on the form of $\aterm$.

If $\aterm$ is a variable, then $\addtop(\aterm\gamma) =
\addtop(\gamma(\aterm)) \arrr{\Rules_\top} \encode(\gamma(\aterm))
= \aterm\gamma^\encode$ by Lemma~\ref{lem:addtop:encode}.

Otherwise, let $\aterm = \afun(\aterm_1,\ldots,\aterm_n)$.
If $\addtop(\aterm\gamma) = \top_\asort$ for some $\asort$, then we
are immediately done by Lemma~\ref{lem:reduceanything}.  If not,
then $\addtop(\aterm\gamma) = \afun(\addtop(\aterm_1\gamma),\ldots,\linebreak
\addtop(\aterm_n\gamma)) \arrr{\Rules_\top} \afun(\aterm_1
\gamma^\encode,\ldots,\aterm_n\gamma^\encode) = \aterm\gamma^\encode$
by the induction hypothesis.
\qed
\end{proof}

Furthermore, it will be useful to know what happens to the special
$\top_\asort$ symbols, as we will eventually want to get rid of them
again!

\begin{lemma}\label{lem:filterencode}
If $\aterm$ is terminating, and all its (direct and indirect) reducts
do not contain any symbols $\top_\asort$, then
$\filter(\encode(\aterm))$ also does not.
\end{lemma}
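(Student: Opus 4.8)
The plan is to prove the statement by well-founded induction on $\aterm$, using the order that underlies the very definition of $\encode$: the relation that relates a term to each of its strict subterms and to each of its one-step $\arr{\Rules}$-reducts. On terminating terms this relation is well-founded (since $\arr{\Rules}$ is terminating there and the subterm relation is well-founded), and it is exactly the recursion pattern of $\encode$, so every recursive call $\encode(\aterm_i)$ for $i \in \pi(\afun)$ and $\encode(\bterm)$ for a direct reduct $\bterm$ lands on a smaller argument. The conceptual heart of the argument is the observation that $\encode$ introduces a symbol $\top_\asort$ only through the $\addtop(\aterm_k)$ factors, and these sit precisely at the non-regarded positions $k \notin \pi(\afun)$; since $\filter$ discards exactly those positions, the $\addtop$-contaminated parts never survive the outer filtering.

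Concretely, I would case-split on the form of $\aterm$. If $\aterm$ is a variable then $\encode(\aterm) = \aterm$ and $\filter(\aterm) = \aterm$ contains no $\top_\asort$. If $\aterm = \afun(\aterm_1,\ldots,\aterm_n) : \asort$ with $\pi(\afun) = \{i_1,\ldots,i_k\}$, then in the non-$\Fc$ case $\filter(\encode(\aterm)) = \afun_\pi(\filter(\encode(\aterm_{i_1})),\ldots,\filter(\encode(\aterm_{i_k})))$, so the $\addtop(\aterm_k)$ factors at positions $k \notin \pi(\afun)$ vanish, and each surviving $\filter(\encode(\aterm_{i_j}))$ is $\top$-free by the induction hypothesis. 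In the $\Fc$ case we additionally obtain a factor $\filter(\makelist_\asort(\{\bterm \mid \aterm \arr{\Rules} \bterm\}))$; since $\Fc_\asort$ and $\bot_\asort$ are not filtered and are distinct from every $\top_\asort$, a trivial side-induction on the size of the reduct set reduces this to showing that each $\filter(\encode(\bterm))$ is $\top$-free, which again follows from the induction hypothesis applied to the reduct $\bterm$.

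The step that requires care is verifying, before each appeal to the induction hypothesis, that the smaller term still satisfies the lemma's hypothesis, i.e.\ that all of its reducts are $\top$-free. For a subterm $\aterm_{i_j}$ this holds because any reduct $\aterm_{i_j}'$ of $\aterm_{i_j}$ yields a reduct $\afun(\ldots,\aterm_{i_j}',\ldots)$ of $\aterm$, which is $\top$-free by assumption, so $\aterm_{i_j}'$ is too; and for a direct reduct $\bterm$ of $\aterm$, every reduct of $\bterm$ is itself a reduct of $\aterm$ and hence $\top$-free. Both recursion arguments are terminating as subterms or reducts of the terminating term $\aterm$, so $\encode$ is defined on them. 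I expect this transfer-of-hypothesis bookkeeping -- keeping the ``no $\top_\asort$ in any reduct'' invariant intact across both the subterm and the reduction recursions -- to be the main (though routine) obstacle; once it is in place, the filtering observation does all the real work.
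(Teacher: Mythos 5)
Your proof is correct and follows essentially the same route as the paper's: induction on the well-founded union of $\arr{\Rules}$ and the strict subterm relation, with the observation that $\filter$ discards exactly the positions where $\addtop$ places $\top_\asort$ symbols, and that the $\makelist$ component only inherits $\top_\asort$ from $\filter(\encode(\bterm))$ for direct reducts $\bterm$, handled by the induction hypothesis. The hypothesis-transfer bookkeeping you spell out is left implicit in the paper but is exactly the right check.
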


\begin{proof}
By induction on $\aterm$ with $\arr{\Rules} \cup\; \rhd$, where $\rhd$
is the strict superterm relation (this union is well-founded because
$\aterm$ is assumed terminating).

If $\aterm$ is a variable, then $\filter(\encode(\aterm)) = \aterm$.

If $\aterm = \afun(\aterm_1,\ldots,\aterm_n)$, then either it does not
unify with the left-hand side of some rule in $\Rules \setminus
\tcapuserules(\Pairs,\Rules,\pi)$ or it does.

In the first case,
  $\encode(\aterm) = \afun(\aterm_1',\ldots,\aterm_n')$ where
  $\aterm_i' = \encode(\aterm_i)$ for $i \in \pi(\afun)$; writing
  $\pi(\afun) = \{ i_1,\ldots,i_k \}$, we have
  $\filter(\encode(\aterm)) = \afun_\pi(\aterm_{i_1}',\ldots,
  \aterm_{i_k}') = \afun_\pi(\encode(\aterm_{i_1}),\ldots,
  \encode(\aterm_{i_k}))$, which by the induction hypothesis does
  not contain any $\top_\asort$-symbols.

In the second case,
  $\encode(\aterm) = \Fc(\afun(\aterm_1',\ldots,\aterm_n'),
  \makelist(X))$ where, as we saw before, $\filter(\afun(\aterm_1',
  \ldots,\aterm_n'))$ does not contain any illegal symbols, and
  $\top_\asort$ can only occur in $\filter(\makelist(X))$ if it
  occurs in some $\filter(\encode(\bterm))$ with $\aterm \arr{\Rules}
  \bterm$.  By the induction hypothesis, this is not the case.
\qed
\end{proof}

Now, let us move on to lemmas corresponding to the results in
Appendix~\ref{app:preliminaries}.

\begin{lemma}\label{lem:encode:left}
For all terms $\aterm$ and substitutions $\gamma$, we have:
$\encode(\aterm\gamma) \arrr{\Ce \cup \Rules_\top}
\aterm\gamma^\encode$.
\end{lemma}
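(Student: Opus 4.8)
The plan is to prove the statement by induction on the size of $\aterm$, following the same pattern as the proof of Lemma~\ref{lem:ur1} but accounting for the two novelties of the present $\encode$: the $\TCapp$-based case split and the treatment of non-regarded argument positions via $\addtop$. Note first that since the statement refers to $\encode(\aterm\gamma)$, we work under the standing assumption that $\aterm\gamma$ is terminating (so that $\encode$ is defined); consequently every subterm of $\aterm\gamma$, in particular each $\aterm_k\gamma$, is terminating as well, which is what licenses the recursive use of $\encode$ and of the induction hypothesis below.

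For the base case, if $\aterm$ is a variable then $\aterm\gamma = \gamma(\aterm)$ and $\encode(\aterm\gamma) = \encode(\gamma(\aterm)) = \gamma^\encode(\aterm) = \aterm\gamma^\encode$, so the two sides are literally equal and the empty reduction suffices. For the inductive step, write $\aterm = \afun(\aterm_1,\ldots,\aterm_n)$ of sort $\asort$, so $\aterm\gamma = \afun(\aterm_1\gamma,\ldots,\aterm_n\gamma)$ and $\aterm\gamma^\encode = \afun(\aterm_1\gamma^\encode,\ldots,\aterm_n\gamma^\encode)$. In both branches of the definition the same ``core'' term $\afun(b_1,\ldots,b_n)$ appears, where $b_k = \encode(\aterm_k\gamma)$ for $k \in \pi(\afun)$ and $b_k = \addtop(\aterm_k\gamma)$ otherwise. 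The key step is to reduce this core to $\aterm\gamma^\encode$ one argument at a time inside the context $\afun(\ldots)$: for a regarded position $k \in \pi(\afun)$ the induction hypothesis (applicable since $\aterm_k$ is smaller and $\aterm_k\gamma$ terminating) yields $\encode(\aterm_k\gamma) \arrr{\Ce \cup \Rules_\top} \aterm_k\gamma^\encode$, while for a non-regarded position $k \notin \pi(\afun)$ Lemma~\ref{lem:addtop:encode:sub} yields $\addtop(\aterm_k\gamma) \arrr{\Rules_\top} \aterm_k\gamma^\encode$, which is a fortiori a $\Ce \cup \Rules_\top$ reduction. Closure under contexts then gives $\afun(b_1,\ldots,b_n) \arrr{\Ce \cup \Rules_\top} \aterm\gamma^\encode$.

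It remains to connect $\encode(\aterm\gamma)$ to this core. If $\afun(\TCapp(\aterm_1\gamma),\ldots,\TCapp(\aterm_n\gamma))$ does not unify with a left-hand side of a rule in $\Rules \setminus \tcapuserules(\Pairs,\Rules,\pi)$, then $\encode(\aterm\gamma)$ is exactly $\afun(b_1,\ldots,b_n)$ and we are finished. Otherwise $\encode(\aterm\gamma) = \Fc_\asort(\afun(b_1,\ldots,b_n),\makelist_\asort(X))$; here I would prepend a single application of the projection rule $\Fc_\asort(\avar,\bvar) \arrz \avar$ from $\Ce$, which discards the $\makelist_\asort(X)$ component wholesale and leaves precisely the core, after which the argument above concludes. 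The main obstacle is thus not a genuine difficulty but careful bookkeeping: recognizing that both branches of the definition share the identical core, that one $\Ce$-step removes the $\makelist$ wrapper no matter how large it is, and that the regarded and non-regarded arguments are matched exactly by the induction hypothesis and by Lemma~\ref{lem:addtop:encode:sub}, respectively.
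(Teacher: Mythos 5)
Your proof is correct and follows essentially the same route as the paper's: induction on the size of $\aterm$, the trivial variable case, the shared core $\afun(b_1,\ldots,b_n)$ handled argumentwise via the induction hypothesis for $i \in \pi(\afun)$ and Lemma~\ref{lem:addtop:encode:sub} otherwise, and a single optional $\Ce$-projection step to strip the $\makelist$ wrapper. The only addition is your explicit remark that $\aterm\gamma$ must be terminating for $\encode$ to be defined, which the paper leaves implicit.
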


\begin{proof}
By induction on the size of $\aterm$.

For a variable we easily see that $\encode(\aterm\gamma) =
\encode(\gamma(\aterm)) = \gamma^\encode(\aterm) =
\aterm\gamma^\encode$.

If $\aterm = \afun(\aterm_1,\ldots,\aterm_n)$, then let for all
$1 \leq i \leq n$: $\bterm_i = \encode(\aterm_i\asub)$ if $i \in
\pi(\afun)$ and $\bterm_i = \addtop(\aterm_i\asub)$ otherwise.
Then either $\encode(\aterm\asub) = \afun(\seq{\bterm})$, or
$\encode(\aterm\asub) \arr{\Ce} \afun(\seq{\bterm})$.

Now, for all $i$, we have $\bterm_i \arrr{\Ce \cup \Rules_\top}
\aterm_i\gamma^\encode$: if $i \in \pi(\afun)$ then this is provided
by the induction hypothesis, otherwise it holds by
Lemma~\ref{lem:addtop:encode:sub}.
Thus, $\encode(\aterm\asub) \arr{\Ce}^= \afun(\bterm_1,\ldots,
\bterm_n) \arrr{\Ce \cup \Rules_\top} \afun(\aterm_1'\gamma^\encode,
\ldots,\aterm_n'\gamma^\encode) = \aterm\gamma^\encode$.
\qed
\end{proof}

To handle the updated definition of usable rules (with $\TCapp$), we
must also redefine ``completely $\pi$-usable'':

\begin{definition}
A term $\afun(\aterm_1,
\ldots,\aterm_n)$ is \emph{completely $\TCapp$-$\pi$-usable} if:
\begin{itemize}
\item $\afun(\TCapp(\aterm_1),\ldots,\TCapp(\aterm_n))$ does not
  unify with the left-hand side of any rule in $\Rules \setminus
  \tcapuserules(\Pairs,\Rules,\pi)$
\item for all $i \in \pi(\afun)$ also $\aterm_i$ is completely
  $\TCapp$-$\pi$-usable.
\end{itemize}
In addition, all variables are completely $\TCapp$-$\pi$-usable.
\end{definition}

\begin{lemma}\label{lem:encode:right}
For all substitutions $\gamma$ on domain $\FV(\aterm)$ and completely
$\pi$-usable terms $\aterm$:
$\afilter(\aterm)\gamma^\encode \arrr{\Rules_\top} \encode(\aterm\gamma)$.
\end{lemma}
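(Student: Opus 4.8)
The plan is to induct on the structure (equivalently, the size) of the completely $\TCapp$-$\pi$-usable term $\aterm$. In the base case $\aterm$ is a variable $\avar$: here $\afilter(\avar) = \avar$, so $\afilter(\aterm)\gamma^\encode = \gamma^\encode(\avar) = \encode(\gamma(\avar)) = \encode(\aterm\gamma)$, and the required reduction holds in zero steps. For the inductive step, write $\aterm = \afun(\aterm_1,\ldots,\aterm_n)$. By the definition of $\afilter$ we have $\afilter(\aterm)\gamma^\encode = \afun(\cterm_1,\ldots,\cterm_n)$, where $\cterm_i = \afilter(\aterm_i)\gamma^\encode$ for $i \in \pi(\afun)$ and $\cterm_i = \addtop(\aterm_i)\gamma^\encode$ otherwise.

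The first and central step is to show that $\encode(\aterm\gamma)$ falls into the \emph{good} (non-$\Fc_\asort$) branch of its definition, i.e.\ that $\encode(\aterm\gamma) = \afun(\bterm_1,\ldots,\bterm_n)$ with $\bterm_i = \encode(\aterm_i\gamma)$ for $i \in \pi(\afun)$ and $\bterm_i = \addtop(\aterm_i\gamma)$ otherwise. By definition this requires that $\afun(\TCapp(\aterm_1\gamma),\ldots,\TCapp(\aterm_n\gamma))$ does not unify with any left-hand side of $\Rules \setminus \tcapuserules(\Pairs,\Rules,\pi)$. This is where the hypothesis is used: being completely $\TCapp$-$\pi$-usable, $\aterm$ satisfies that $\afun(\TCapp(\aterm_1),\ldots,\TCapp(\aterm_n))$ does \emph{not} unify with any such left-hand side. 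I would then argue that the capped version of $\aterm\gamma$ is an instance of the capped version of $\aterm$: Lemma~\ref{lem:tcapprop}(1) gives $\sigma_i$ with $\aterm_i = \TCapp(\aterm_i)\sigma_i$, hence $\aterm_i\gamma = \TCapp(\aterm_i)\sigma_i\gamma$, and Lemma~\ref{lem:tcapprop}(2) then yields $\tau_i$ with $\TCapp(\aterm_i\gamma) = \TCapp(\aterm_i)\tau_i$. Since the $\TCapp$-variables are fresh and disjoint across the $\aterm_i$, the $\tau_i$ combine into a single $\tau$ with $\afun(\TCapp(\aterm_1\gamma),\ldots,\TCapp(\aterm_n\gamma)) = \afun(\TCapp(\aterm_1),\ldots,\TCapp(\aterm_n))\tau$.

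The expected main obstacle is to turn this instance relation into the non-unifiability claim: if the (variable-disjoint) instance $\afun(\TCapp(\aterm_1),\ldots,\TCapp(\aterm_n))\tau$ had a unifier $\theta$ with some left-hand side $\ell$, then $\tau\theta$ together with $\theta$ would unify $\afun(\TCapp(\aterm_1),\ldots,\TCapp(\aterm_n))$ itself with $\ell$, contradicting the hypothesis. Hence no such $\ell$ exists and the good branch applies.

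It then remains to reduce componentwise; since $\arrr{\Rules_\top}$ is closed under contexts, it suffices to show $\cterm_i \arrr{\Rules_\top} \bterm_i$ for each $i$. For $i \in \pi(\afun)$ the subterm $\aterm_i$ is itself completely $\TCapp$-$\pi$-usable, so $\afilter(\aterm_i)\gamma^\encode \arrr{\Rules_\top} \encode(\aterm_i\gamma)$ is exactly the induction hypothesis. For $i \notin \pi(\afun)$ we have $\cterm_i = \addtop(\aterm_i)\gamma^\encode$ and $\bterm_i = \addtop(\aterm_i\gamma)$, and $\addtop(\aterm_i)\gamma^\encode \arrr{\Rules_\top} \addtop(\aterm_i\gamma)$ is an instance of Lemma~\ref{lem:addtopsubstitute} (taking its two substitutions to be $\gamma^\encode$ and $\gamma$). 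Placing these subreductions under the common context $\afun(\cdot,\ldots,\cdot)$ gives $\afilter(\aterm)\gamma^\encode \arrr{\Rules_\top} \encode(\aterm\gamma)$, completing the induction. Apart from the unification argument, the remaining steps are routine bookkeeping with the induction hypothesis and the cited $\addtop$ lemma.
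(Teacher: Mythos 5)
Your proof is correct and follows essentially the same route as the paper's: induction on the size of $\aterm$, the variable case as an equality, the good-branch observation for $\encode(\aterm\gamma)$ from complete $\TCapp$-$\pi$-usability, the induction hypothesis at regarded positions, and the $\addtop$ lemma at filtered ones. You are in fact slightly more careful than the paper, which silently identifies non-unifiability of $\afun(\TCapp(\aterm_1),\ldots,\TCapp(\aterm_n))$ with that of $\afun(\TCapp(\aterm_1\gamma),\ldots,\TCapp(\aterm_n\gamma))$; your instance-of-a-cap argument via Lemma~\ref{lem:tcapprop}(1)--(2) supplies exactly the missing justification (the same argument the paper uses inside the proof of Lemma~\ref{lem:addtopsubstitute}), and citing Lemma~\ref{lem:addtopsubstitute} for the unregarded arguments is the more precise reference.
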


\begin{proof}
By induction on the size of $\aterm$.

For a variable we easily see that
$
\afilter(\aterm)\gamma^\encode =
\aterm\gamma^\encode =
\gamma^\encode(\aterm) =
\encode(\gamma(\aterm)) =
\encode(\aterm\gamma)
$.

If $\aterm = \afun(\aterm_1,\ldots,\aterm_n)$, then
$\afilter(\aterm) = \afun(\aterm_1',\ldots,\aterm_n')$ where
each $\aterm_i' = \afilter(\aterm_i)$ if $i \in \pi(\afun)$ and
$\aterm_i' = \addtop(\aterm_i)$ otherwise.
On the other hand, since $\aterm$ is completely
$\TCapp$-$\pi$-usable, we
have $\encode(\aterm\gamma) = \afun(\bterm_1,\ldots,\bterm_n)$ with
$\bterm_i = \encode(\aterm_i\asub)$ if $i \in \pi(\afun)$ and
$\bterm_i = \addtop(\aterm_i\asub)$ otherwise.
We are done if each $\aterm_i'\gamma^\encode \arrr{\Rules_\top}
\bterm_i$.

For $i \in \pi(\afun)$, note that $\aterm_i$ is also completely
$\TCapp$-$\pi$-usable, so we can use the induction hypothesis:
$\aterm_i'\gamma^\encode = \afilter(\aterm_i)\gamma^\encode
\arrr{\Rules_\top} \encode(\aterm_i\gamma) = \bterm_i$.
For other $i$, we need: $\addtop(\aterm_i)\gamma^\encode
\arrr{\Rules_\top} \addtop(\aterm_i\gamma)$.  This follows
immediately with Lemma~\ref{lem:reduceanything}.
\qed
\end{proof}

\begin{lemma}\label{lem:encode:reduce}
If $\aterm$ is a terminating term and $\Rules$ finitely branching, and
$\aterm \arr{\Rules} \bterm$,
then $\encode(\aterm) \arrr{\afilter(\tcapuserules(\Pairs,\Rules,\pi))
\cup \Ce \cup \Rules_\top} \encode(\bterm)$.
\end{lemma}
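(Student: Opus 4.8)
The plan is to follow the proof of Lemma~\ref{lem:ur2} almost verbatim, proceeding by induction on the derivation of $\aterm \arr{\Rules} \bterm$, while carefully threading through the three features that are new here: the $\TCapp$-unification test that governs the case split in the definition of $\encode$, the asymmetric filtering $\afilter$ (whose left-hand sides are \emph{not} filtered), and the auxiliary rules $\Rules_\top$ that bridge the filtered-away arguments. Since $\aterm$ reduces it cannot be a variable, so I write $\aterm = \afun(\aterm_1,\ldots,\aterm_n)$ and abbreviate $U := \afilter(\tcapuserules(\Pairs,\Rules,\pi))$. The outermost distinction mirrors the definition of $\encode(\aterm)$: whether or not $\afun(\TCapp(\aterm_1),\ldots,\TCapp(\aterm_n))$ unifies with the left-hand side of some rule in $\Rules \setminus \tcapuserules(\Pairs,\Rules,\pi)$.

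If it does unify, then $\encode(\aterm) = \Fc_\asort(\afun(\ldots),\makelist_\asort(X))$ where $X$ is the (finite, by finite branching) set of one-step reducts of $\aterm$ and therefore contains $\bterm$. Exactly as in Lemma~\ref{lem:ur2}, one $\Ce$ step descends into the list and a further chain of $\Ce$ projections selects the summand $\encode(\bterm)$, so $\encode(\aterm) \arrr{\Ce} \encode(\bterm)$; since $\Ce$ is among the permitted rules this case is finished.

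Otherwise $\encode(\aterm) = \afun(\aterm_1',\ldots,\aterm_n')$ with $\aterm_i' = \encode(\aterm_i)$ for $i \in \pi(\afun)$ and $\aterm_i' = \addtop(\aterm_i)$ for $i \notin \pi(\afun)$, and I split on whether the step is at the root. For a root step $\aterm = \ell\gamma \arr{\Rules} r\gamma = \bterm$ with $\ell = \afun(l_1,\ldots,l_n)$, I first argue the rule must be usable: by Lemma~\ref{lem:tcapprop}(\ref{tcap:instantiate}) each $\aterm_i = l_i\gamma$ is an instance of $\TCapp(\aterm_i)$, so $\afun(\TCapp(\aterm_1),\ldots,\TCapp(\aterm_n))$ unifies with $\ell$; were $\ell \arrz r \notin \tcapuserules(\Pairs,\Rules,\pi)$ this would contradict the current case. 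Hence $\afilter(\ell \arrz r) = \ell \arrz \afilter(r) \in U$. I then reduce $\encode(\aterm)$ to $\ell\gamma^{\encode}$ argument by argument, using Lemma~\ref{lem:encode:left} on the regarded arguments and Lemma~\ref{lem:addtop:encode:sub} on the filtered-away ones, both of which stay inside $\Ce \cup \Rules_\top$; I apply the usable rule (valid since $\FV(\afilter(r)) \subseteq \FV(r) \subseteq \FV(\ell)$) to obtain $\afilter(r)\gamma^{\encode}$, and finish with Lemma~\ref{lem:encode:right}, right-hand sides of usable rules being completely $\TCapp$-$\pi$-usable, to land on $\encode(r\gamma) = \encode(\bterm)$.

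The step I expect to be the main obstacle is the internal case $\bterm = \afun(\aterm_1,\ldots,\aterm_i',\ldots,\aterm_n)$ with $\aterm_i \arr{\Rules} \aterm_i'$, where I must check that $\bterm$ stays in the same branch of the definition of $\encode$, i.e. that $\afun(\TCapp(\aterm_1),\ldots,\TCapp(\aterm_i'),\ldots,\TCapp(\aterm_n))$ again fails to unify with any non-usable left-hand side. For this I use Lemma~\ref{lem:tcapprop}(\ref{tcap:reduce}) and~(\ref{tcap:both}) to see that $\TCapp(\aterm_i')$ is an instance of $\TCapp(\aterm_i)$ (all $\TCapp$-variables being fresh), so the $\bterm$-tuple is an instance of the $\aterm$-tuple; unifiability of an instance would force unifiability of the more general tuple, contradicting the case assumption. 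With both sides structural, I conclude componentwise: for $i \notin \pi(\afun)$ the changed argument is $\addtop(\aterm_i) \arrr{\Rules_\top} \addtop(\aterm_i')$ by Lemma~\ref{lem:addtopreduce}, and for $i \in \pi(\afun)$ it is $\encode(\aterm_i) \arrr{U \cup \Ce \cup \Rules_\top} \encode(\aterm_i')$ by the induction hypothesis (applicable since $\aterm_i$ is a terminating subterm). Closing under the $\afun$-context then yields $\encode(\aterm) \arrr{U \cup \Ce \cup \Rules_\top} \encode(\bterm)$, and verifying that the three rule sets are permitted throughout is routine.
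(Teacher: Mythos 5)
Your proof is correct and follows essentially the same route as the paper's: induction on the derivation of the step, the same case split driven by the $\TCapp$-unification test in the definition of $\encode$, the $\Ce$-projection argument when the root is non-usable, Lemmas~\ref{lem:encode:left} and~\ref{lem:encode:right} for the root step, and Lemma~\ref{lem:tcapprop} plus Lemma~\ref{lem:addtopreduce} for the internal step. The extra care you take to justify that a reduct stays in the same branch of $\encode$'s definition is exactly the (tersely stated) instance argument in the paper's proof.
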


\begin{proof}
By induction on the derivation of $\aterm \arr{\Rules} \bterm$. 
Since $\aterm$ reduces, it is not a variable, so let
$\aterm = \afun(\aterm_1,\ldots,\aterm_n)$.
For brevity, also denote $U := \afilter(\tcapuserules(\Pairs,\Rules,
\pi))$.

If $\afun(\TCapp(\aterm_1),\ldots,\TCapp(\aterm_n))$ unifies with the
left-hand side of some rule in $\Rules \setminus U$, then
$\encode(\aterm) \arr{\Ce} \makelist_\asort(X)$, where $X$ is a set
which contains $\bterm$.
It is easy to see that therefore $\makelist_\asort(X) \arrr{\Ce}
\encode(\bterm)$.  Thus, let us henceforth assume that $\afun(
\TCapp(\seq{\aterm}))$ does not unify with any left-hand sides of
$\Rules \setminus U$.  We can write $\encode(\aterm) = \afun(
\aterm_1',\ldots,\aterm_n')$ with $\aterm_i' = \encode(\aterm_i)$ if
$i \in \pi(\afun)$ and $\aterm_i' = \addtop(\aterm_i)$ otherwise.

First suppose the reduction is not at the root, so $\bterm =
\afun(\aterm_1,\ldots,\bterm_i,\ldots,\aterm_n)$ with $\aterm_i
\arr{\Rules} \bterm_i$.  In this case, $\TCapp(\bterm)$ is an
instance of $\TCapp(\aterm)$ by Lemma~\ref{lem:tcapprop}, so does
not unify with anything $\TCapp(\aterm)$ does not unify with.  Thus,
also $\encode(\bterm) = \afun(\aterm_1',\ldots,\bterm_i',\ldots,
\aterm_n')$ with $\bterm_i' = \encode(\bterm_i)$ if $i \in
\pi(\afun)$ and $\bterm_i' = \addtop(\bterm_i)$ otherwise.
In the first case, the induction hypothesis provides that $\aterm_i'
\arrr{U \cup \Ce \cup \Rules_\top} \bterm_i'$.  In the second case,
Lemma~\ref{lem:addtopreduce} gives that $\aterm_i' \arrr{\Rules_\top}
\bterm_i'$.

Alternatively, if the reduction is at the root, then $\aterm =
\ell\gamma$ and $\bterm = r\gamma$ for some rule $\ell \arrz r$ and
substitution $\gamma$.  Since $\aterm = \TCapp(\aterm)\delta$ for
some $\delta$ by Lemma~\ref{lem:tcapprop}, obviously $\TCapp(\aterm)$
unifies with $\ell$, so $\ell \arrz r$ must be in $U$.  Now we
have: $\encode(\aterm) = \encode(\ell\gamma) \arrr{\Ce \cup
\Rules_\top} \ell\gamma^\encode$ by Lemma~\ref{lem:encode:left},
$\arrr{\afilter(U)} \afilter(r)\gamma^\encode
\arrr{\Rules_\top} \encode(r\gamma)$ by Lemma~\ref{lem:encode:right}
because for a usable rule $\ell \arrz r$, indeed $r$ is
completely $\TCapp$-$\pi$-usable!
\qed
\end{proof}

\begin{lemma}\label{lem:furchain}
If there is an infinite minimal $(\Pairs,\Rules)$-chain
$\rijtje{(\ell_i \arrz r_i,\gamma_i) \mid i \in \N}$, then there are
$\delta_1,\delta_2,\ldots$ such that $\rijtje{(\ell_i \arrz
\afilter(r_i),\delta_i) \mid i \in \N}$ is an infinite
$(\afilter(\Pairs),\afilter(\tcapuserules(\Pairs,\Rules,\pi)) \cup
\Ce \cup \Rules_\top)$-chain.

In this chain, the symbols $\top_\asort$ occur only at filtered
positions; that is, if the reduction for a given $i$ has the form
$\afilter(r_i)\delta_i \arrz^* \cterm \arrz^* \ell_{i+1}
\delta_{i+1}$, then $\filter(\cterm)$ does not contain any symbol
$\top_\asort$.
\end{lemma}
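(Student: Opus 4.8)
The plan is to mirror the proof of Lemma~\ref{lem:usablechain}, replacing the ordinary filtering $\filter$ by the $\TCapp$-conscious filtering $\afilter$ and the encoding by its $\TCapp$-conscious variant $\encode$ from this subsection, and feeding in the three transfer lemmas just established (Lemmas~\ref{lem:encode:left}, \ref{lem:encode:right} and~\ref{lem:encode:reduce}) in place of Lemmas~\ref{lem:ur1} and~\ref{lem:ur2}. First I would restrict each $\gamma_i$ to $\FV(\ell_i) \cup \FV(r_i)$, so that every $\gamma_i(\avar)$ is terminating, exactly as in the proof of Lemma~\ref{lem:usablechain}. If $\Rules$ is not finitely branching then $\tcapuserules(\Pairs,\Rules,\pi) = \Rules$ and the statement is obtained by applying $\afilter$ to the given chain, using Lemmas~\ref{lem:filterleftsubstitute}, \ref{lem:filterrightsubstitute} and~\ref{lem:afilter:reduce} with $\delta_i := \gamma_i^{\afilter}$; so I would concentrate on the finitely branching case, where $\encode$ is defined on all the terminating terms in sight and I set $\delta_i := \gamma_i^{\encode}$.

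For the chain property, writing $F := \afilter(\tcapuserules(\Pairs,\Rules,\pi)) \cup \Ce \cup \Rules_\top$, I would split each reduction $r_i\gamma_i \arrr{\Rules} \ell_{i+1}\gamma_{i+1}$ into three consecutive pieces. Lemma~\ref{lem:encode:right} gives $\afilter(r_i)\delta_i = \afilter(r_i)\gamma_i^{\encode} \arrr{\Rules_\top} \encode(r_i\gamma_i)$, which applies because the right-hand side $r_i$ of a dependency pair is \emph{completely $\TCapp$-$\pi$-usable} (this is exactly what $\tcapuserules(\Pairs,\Rules,\pi)$ is built to guarantee for the regarded subterms of dependency-pair and usable-rule right-hand sides). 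Iterating Lemma~\ref{lem:encode:reduce} along the reduction then yields $\encode(r_i\gamma_i) \arrr{F} \encode(\ell_{i+1}\gamma_{i+1})$ — each intermediate term being a terminating reduct of the terminating $r_i\gamma_i$ — and Lemma~\ref{lem:encode:left} gives $\encode(\ell_{i+1}\gamma_{i+1}) \arrr{\Ce \cup \Rules_\top} \ell_{i+1}\gamma_{i+1}^{\encode} = \ell_{i+1}\delta_{i+1}$. Concatenating produces $\afilter(r_i)\delta_i \arrr{F} \ell_{i+1}\delta_{i+1}$, so $\rijtje{(\ell_i \arrz \afilter(r_i),\delta_i) \mid i \in \N}$ is an infinite chain using the same dependency pairs.

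For the remaining claim I would prove the invariant that $\filter(\cterm)$ is $\top_\asort$-free for every term $\cterm$ occurring along this reduction. The $\encode$-valued anchor terms are handled by Lemma~\ref{lem:filterencode}: the terms $r_i\gamma_i$, $\ell_{i+1}\gamma_{i+1}$ and all their $\Rules$-reducts live over $\Sigma^\sharp$ and hence contain no $\top_\asort$, so $\filter(\encode(r_i\gamma_i))$ and $\filter(\encode(\ell_{i+1}\gamma_{i+1}))$ are clean; the endpoints $\afilter(r_i)\delta_i$ and $\ell_{i+1}\delta_{i+1}$ are clean after filtering by Lemmas~\ref{lem:filtersubstitute} and~\ref{lem:filterafilter} combined with Lemma~\ref{lem:filterencode} applied to each $\encode(\gamma_i(\avar))$. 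The intermediate terms are then covered by checking that each rule of $F$ preserves the invariant: a $\Rules_\top$-step can only fire on an occurrence of some $\top_\asort$, which by invariant sits at a filtered position, so it neither reads nor writes a regarded position; a $\Ce$-step acts at an (unfiltered) $\Fc_\asort$-node and merely deletes a subterm, creating no new $\top_\asort$; and an $\afilter(\tcapuserules)$-step introduces $\top_\asort$ only through $\addtop$ at the non-regarded argument positions of the rule's right-hand side, hence again only at filtered positions.

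I expect this last part to be the main obstacle. The delicate point is showing that $\top_\asort$ never migrates to a regarded position, which requires a position-by-position analysis of how $\addtop$, $\afilter$ and the $\Rules_\top$/$\Ce$ steps interact with $\pi$ — in particular, verifying that matching an $\afilter(\tcapuserules)$-rule can transport variable bindings (which may themselves carry $\top_\asort$, but only at their internal filtered positions) solely into positions of $\afilter(r)$ that remain filtered in the whole term. This bookkeeping is routine given Lemmas~\ref{lem:filterencode}, \ref{lem:filterafilter} and~\ref{lem:filtersubstitute}, but it is the only genuinely new content beyond the template of Lemma~\ref{lem:usablechain}, since it is exactly what later lets us push $\filter$ through the chain and collapse the $\Rules_\top$-steps.
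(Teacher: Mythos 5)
Your proposal follows essentially the same route as the paper's proof: in the finitely branching case it sets $\delta_i := \gamma_i^{\encode}$ and chains Lemmas~\ref{lem:encode:right}, \ref{lem:encode:reduce} and~\ref{lem:encode:left} exactly as the paper does, and it establishes the $\top_\asort$-at-filtered-positions claim by the same invariant-preservation induction (anchored by Lemma~\ref{lem:filterencode} and the variable condition on the rules actually used). The only cosmetic difference is the non-finitely-branching case, where you take $\delta_i := \gamma_i^{\afilter}$ rather than $\delta_i := \gamma_i$; both work, and the rest is a faithful match.
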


\begin{proof}
This is obvious if $\Rules$ is not finitely branching, taking
$\delta_i = \gamma_i$ for all $i$ and noting that the original
chain is over
a signature
not containing any $\top_\asort,\bot_\asort$ or $\Fc_\asort$.

Otherwise, $\encode$ is defined on terminating terms, such as all
$r_i\gamma_i$ and their reducts.
Thus, for all $i$, we define $\delta_i = \gamma_i^\encode$.  Then we
have: $r_i\gamma \arrr{\Rules} \ell_{i+1}\gamma$ implies that
$\filter(r_i)\delta_i \arrr{\Rules_\top}
\encode(r_i\gamma_i)$ (by Lemma~\ref{lem:encode:right})
$\arrr{\afilter(\tcapuserules(\Pairs,\Rules,\pi)) \cup \Ce \cup
\Rules_\top}
\encode(\ell_i\gamma_{i+1})$ (by Lemma~\ref{lem:encode:reduce})
$\arrr{\Ce \cup \Rules_\top} \ell_i\delta_{i+1}$ (by
Lemma~\ref{lem:encode:left}).

Now we observe that since minimality gives termination of $\Rules$ on
the $r_i\gamma$, that any rule which is used in the reduction
$r_i\gamma_i \arrr{\Rules} \ell_{i+1}\gamma_{i+1}$ satisfies the
variable condition: the right-hand side of the rule contains no
variables not occurring on the left.  Limiting interest to only rules
which actually occur in the original chain, we can therefore say that
if $r_i\gamma_i$ does not contain $\top$-symbols (and we assume that
the initial chain does not), then neither does any of its direct or
indirect reducts.

Observing that
$\filter(\afilter(r_i)\gamma_i^\encode) =
\filter(\afilter(r_i))\gamma_i^{\filter \circ \encode} =
\filter(r_i)\gamma_i^{\filter \circ \encode}$, and noting that that
$\filter(r_i)$ does not contain any $\top_\asort$ symbols because
$r_i$ does not, we also see that $\gamma_i^{\filter \circ \encode}$
(which maps variables to values $\filter(\encode(\gamma(\avar)))$)
does not contain any $\top$-symbols by
Lemma~\ref{lem:filterencode}, using the above observation that no
reducts of $r_i\gamma_i$ (or any of its subterms) contains a
$\top$-symbol.

Thus, the first element of the reduction
$\afilter(r_i)\gamma_i^\encode$ contains the $\top$-symbols only at
filtered positions.  The lemma follows because this property is
preserved by reduction: if $\cterm$ contains $\top_\asort$ only at
filtered positions, and $\cterm \arr{\afilter(\userules(\Pairs,
\Rules,\pi)) \cup \Ce \cup \Rules_\top} \dterm$, then the same holds
for $\dterm$.  This follows easily with induction on $\cterm$ (noting
that the right-hand sides of rules in $\afilter(\userules(\Pairs,
\Rules,\pi))$ only contain $\top$-symbols at filtered positions, and
using that the right-hand sides of rules do not introduce variables
not occurring on the left).
\qed
\end{proof}

\begin{proof}[Proof of Theorem~\ref{thm:userulesext}]
Let $\rijtje{(\ell_i \arrz r_i,\gamma_i) \mid i \in \N}$ be an
infinite minimal $(\Pairs \setminus \Pairs^\succ,\Rules)$-chain.
By Lemma~\ref{lem:furchain}, we can
find $\delta_1,\delta_2,\ldots$ such that
$\rijtje{(\ell_i \arrz \afilter(r_i),\delta_i) \mid i \in \N}$ is a
chain.  This new chain uses (filtered versions of) the same
dependency pairs infinitely often, but uses
$\afilter(\tcapuserules(\Pairs,\Rules,\pi)) \cup \Ce \cup \Rules_\top$
for the reduction $\afilter(r_i)\delta_i \arrz^*
\ell_{i+1}\delta_{i+1}$, and contains the $\Rules_\top$ symbols only
at filtered positions.

Now observe that using the results in Appendix~\ref{app:basics}, we
\pagebreak
can modify this chain with full filterings.  We have, for all $i$:
$\filter(r_i)\delta_i^{\filter} = \filter(\afilter(r_i))
\delta_i^{\filter}$ (by Lemma~\ref{lem:filterafilter}), $=
\filter(\afilter(r_i)\delta_i) \arrr{\filter(\afilter(\tcapuserules(
\Pairs,\Rules,\pi)) \cup \Ce \cup \Rules_\top)} \filter(\afilter(
\ell_{i+1})\delta_{i+1}) = \filter(\ell_{i+1})\delta_{i+1}^{\filter}$.
Moreover, this reduction does not use the rules in
$\filter(\Rules_\top)$, as the $\top$-symbols occur only at filtered
positions in the reduction, and thus are all filtered away.

Since $\filter(\afilter(\tcapuserules(\Pairs,\Rules,\pi))) =
\filter(\tcapuserules(\Pairs,\Rules,\pi))$ and we can assume, as we
did in the proof of Theorem~\ref{thm:userules}, that $\pi$ does not
touch the $\Fc_\asort$ symbols, the requirements guarantee that the
new chain cannot use filtered elements of $\Pairs^\succ$ infinitely
often.  Hence, the original chain must have a tail without these
pairs.
\qed
\end{proof}

\subsection{Usable and Formative Rules with $\TCapp$ and an argument
Filtering}\label{subsec:useformtcapfilter}

And now, at last, do we have all the preparations for the result that
$\implformrules$ and $\tcapuserules$ can be combined as you might
hope, including argument filterings:

\begin{theorem}\label{thm:powerproc}
Let $(\succsim,\succ)$ be a reduction pair and $\pi$ an argument
filtering. The processor which maps a DP problem $(\Pairs,\Rules,
\flag)$ to the following result is sound:
\begin{itemize}
\item $\{(\Pairs \setminus \Pairs^\succ,\Rules,\flag)\}$ if:
  \begin{itemize}
  \item $\filter(\ell) \succ \filter(r)$ for $\ell \arrz r \in \Pairs^\succ$ and
  $\filter(\ell) \succsim \filter(r)$ for $\ell \arrz r \in \Pairs \setminus \Pairs^\succ$;
  \item $\filter(\ell) \succsim \filter(r)$ for $\ell \arrz r \in \implformrules(\Pairs,U,\pi)$ \\
    where $U = \Rules$ if $\flag = \all$ and $U = \tcapuserules(\Pairs,\Rules,\pi) \cup \Ce$ if $\flag = \minimal$;
  \end{itemize}
\item $\{(\Pairs, \Rules, \flag)\}$ otherwise.
\end{itemize}
\end{theorem}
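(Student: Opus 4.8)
The plan is to prove soundness in the usual contrapositive form: assuming $(\Pairs,\Rules,\flag)$ is not finite, I will produce an infinite $(\Pairs \setminus \Pairs^\succ,\Rules)$-chain (minimal if $\flag = \minimal$), witnessing non-finiteness of the single output problem. The strategy is to chain together the two transformations from the appendix. First I take an infinite $(\Pairs,\Rules)$-chain $\rijtje{(\ell_i \arrz r_i,\gamma_i) \mid i \in \N}$ and move to the $\TCapp$-conscious filtering $\afilter$: for $\flag = \minimal$, Lemma~\ref{lem:furchain} provides $\delta_i$ such that $\rijtje{(\ell_i \arrz \afilter(r_i),\delta_i) \mid i \in \N}$ is an infinite $(\afilter(\Pairs), \afilter(\tcapuserules(\Pairs,\Rules,\pi)) \cup \Ce \cup \Rules_\top)$-chain in which every $\top_\asort$ occurs only at filtered positions; for $\flag = \all$ I obtain the analogous chain over $\afilter(\Rules) \cup \Rules_\top$ directly by setting $\delta_i := \gamma_i^{\afilter}$ and using Lemmas~\ref{lem:afilter:reduce}, \ref{lem:filterleftsubstitute} and~\ref{lem:filterrightsubstitute}. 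In both cases, writing $V := \afilter(U) \cup \Rules_\top$ (noting $\afilter(\Ce) = \Ce$ and $\afilter(\Rules_\top) = \Rules_\top$), I have an infinite $V$-chain satisfying the $\top$-invariant, with the (unfiltered) DP left-hand sides $\ell_i$ intact.

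Next I make this chain formative. By Lemma~\ref{lem:changechain} there are $\epsilon_i$ so that $\rijtje{(\ell_i \arrz \afilter(r_i),\epsilon_i) \mid i \in \N}$ is formative, i.e.\ each $\afilter(r_i)\epsilon_i \arrr{V} \ell_{i+1}\epsilon_{i+1}$ is a formative $\ell_{i+1}$-reduction. Since $\implformrules$ is a formative rules approximation with respect to $\pi$ (Theorem~\ref{thm:extendedapproximation}), this reduction uses only rules of $\implformrules(\ell_{i+1},V,\pi)$ at regarded positions, so Lemma~\ref{lem:filteredformativeusage} yields $\filter(\afilter(r_i))\epsilon_i^{\filter} \arrr{F} \filter(\ell_{i+1})\epsilon_{i+1}^{\filter}$ with $F = \filter(\implformrules(\Pairs,V,\pi))$; using $\filter(\afilter(t)) = \filter(t)$ (Lemma~\ref{lem:filterafilter}) the endpoints become $\filter(r_i)\epsilon_i^{\filter}$ and $\filter(\ell_{i+1})\epsilon_{i+1}^{\filter}$. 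To identify the rule set I rewrite $V = \afilter(U \cup \Rules_\top)$ and apply Lemma~\ref{lem:afilteruseful}, giving $\implformrules(\ell,V,\pi) \subseteq \afilter(\implformrules(\ell,U \cup \Rules_\top,\pi))$ for each DP left side $\ell$; since every rule of $\Rules_\top$ is collapsing we have $\implformrules(\ell,U \cup \Rules_\top,\pi) = \implformrules(\ell,U,\pi) \cup \Rules_\top$ (the $\top$-free right-hand sides of $U$ never match a $\top$-headed shape), and as $\filter(\afilter(\rho)) = \filter(\rho)$ for rules $\rho$, the filtered reduction uses only rules from $\filter(\implformrules(\Pairs,U,\pi)) \cup \Rules_\top$.

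The main obstacle is then to discharge the auxiliary rules $\Rules_\top$, which the reduction pair is not required to orient. Here I would lean on the $\top$-at-filtered-positions invariant: after applying the full filtering $\filter$ all $\top_\asort$-symbols sit at discarded argument positions and are erased, so $\filter(r_i)\epsilon_i^{\filter}$ and every term of the reduction are $\top$-free and no step of $\filter(\Rules_\top)$ can fire, leaving the reduction confined to $\filter(\implformrules(\Pairs,U,\pi))$. The delicate point I expect to have to argue carefully is that this invariant genuinely survives the passage through Lemma~\ref{lem:changechain}: that transformation rebuilds the substitutions $\epsilon_i$ with $\epsilon_i(\avar) \arrr{V} \delta_i(\avar)$, and since $\Rules_\top$ is collapsing it could a priori relocate a $\top$-symbol from a regarded to a filtered position, so I must re-establish the invariant for the formative chain, mirroring the argument at the end of the proof of Lemma~\ref{lem:furchain}. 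Once the reduction lies in $\filter(\implformrules(\Pairs,U,\pi))$, the orientation hypotheses give an infinite $\succsim$-chain with a strict $\succ$-step whenever a pair of $\Pairs^\succ$ is used; compatibility and well-foundedness of $\succ$ force $\Pairs^\succ$ to occur only finitely often, so a tail of the chain lies in $\Pairs \setminus \Pairs^\succ$ and is the required infinite (minimal) $(\Pairs \setminus \Pairs^\succ,\Rules)$-chain.
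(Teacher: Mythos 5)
Your proposal is correct and follows essentially the same route as the paper's proof: pass to the $\afilter$-chain via Lemma~\ref{lem:furchain} (or $\delta_i := \gamma_i^{\afilter}$ for $\flag = \all$), make it formative with Lemma~\ref{lem:changechain}, apply Lemma~\ref{lem:filteredformativeusage} and Lemma~\ref{lem:afilteruseful} to land in $\filter(\implformrules(\Pairs,U,\pi))$, and discharge $\Rules_\top$ by the $\top$-at-filtered-positions invariant. The ``delicate point'' you flag about the invariant surviving the formative transformation is exactly the one the paper also notes (and dispatches by inspecting the proof of Lemma~\ref{lem:changechain}), so your argument matches the paper's in both structure and substance.
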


\begin{proof}
Assume given an infinite $(\Pairs,\Rules)$-chain $\rijtje{(\ell_i
\arrz r_i,\gamma_i) \mid i \in \N}$.

If $\flag = \all$, then let, for all $i$, $\delta_i = \gamma_i^{\afilter}$.
Then by
Lemmas~\ref{lem:filterleftsubstitute}--\ref{lem:afilter:reduce},
$\rijtje{(\ell_i \arrz \afilter(r_i),\delta_i)}$ is an
infinite $(\afilter(\Pairs),\afilter(U) \cup \Rules_\top)$-chain,
where $U = \Rules$.  Moreover, it is easy to see that since the
$r_i\gamma_i$ do not contain any $\top_\asort$-symbols, all terms in
this chain contain $\top_\asort$-symbols only at filtered positions.

On the other hand, if $\flag = \minimal$, then we may assume that all
$r_i\gamma_i$ are terminating.  By Lemma~\ref{lem:furchain} there are
$\delta_1,\delta_2,\ldots$ such that
$\rijtje{(\ell_i \arrz \afilter(r_i),\delta_i) \mid i \in \N}$ is an
infinite $(\afilter(\Pairs), \afilter(U) \cup \Rules_\top)$-chain,
where $U = \tcapuserules(\Pairs,\Rules,\pi) \cup \Ce$.  Also in this
chain, the $\top_\asort$-symbols occur only at filtered positions.

Now, using Lemma~\ref{lem:changechain}, we can turn this chain
formative: there are substitutions $\epsilon_1,\epsilon_2,\ldots$
such that $\rijtje{(\ell_i \arrz \afilter(r_i),\epsilon_i) \mid i \in
\N}$ is an infinite formative $(\afilter(\Pairs),\afilter(U) \cup
\Rules_\top)$-chain.  Considering the way this result is proved, no
$\top_\asort$ symbols are introduced at unfiltered positions.

Since $\implformrules$ is a formative rules approximation, we see by
Lemma~\ref{lem:filteredformativeusage} that each
$\filter(r_i)\epsilon_i^{\filter} =
\filter(\afilter(r_i))\epsilon_i^{\filter} \arrr{F}
\filter(\ell_{i+1})\epsilon_i^{\filter}$, where $F = \filter(
\formrules(\Pairs,\afilter(U) \cup \Rules_\top,\pi))$.
And what is more: since the original reduction $\afilter(r_i)
\epsilon_i$ only contains $\top_\asort$-symbols at filtered
positions, this filtered reduction does not contain those symbols at
all!  Thus, the $\Rules_\top$ rules are never used.
In addition, by Lemma~\ref{lem:afilteruseful}, $\filter(\formrules(
\Pairs,\afilter(U),\pi)) \subseteq \filter(\afilter(\formrules(\Pairs,
U,\pi))) = \filter(\formrules(\Pairs,U,\pi))$.
Thus, we obtain an infinite $(\filter(\Pairs),\filter(\formrules(
\Pairs,U,\pi)))$-chain $\rijtje{(\filter(\ell_i) \arrz \filter(r_i),
\epsilon_i^{\filter}) \mid i \in \N}$.

Throughout these transformations, the dependency pairs (in filtered
form) have stayed in the same place.  Since the reduction pair proves
that for $\ell \arrz r \in \Pairs^\succ$ the
pair $\filter(\ell) \arrz \filter(r)$ cannot occur infinitely often,
we therefore also see that the original chain must have had a tail
where pairs in $\Pairs^\succ$ do not occur.
\qed
\end{proof}

\newpage
\section{Implementation Details}
\label{app:aprove_impl}
One might wonder why we went to such lengths to prove a slightly
stronger version of Theorem~\ref{thm:redpairprocessor}; the difference
in $\filter(\implformrules(\Pairs,\Rules,\pi))$ and
$\implformrules(\filter(\Pairs),\filter(\Rules))$ is not so great, and
we even lose out by considering linearity before filtering rather than
after (although we could get around this issue without large
problems).

The main reason for this is implementation: it is significantly easier
to encode, for a given rule $\ell \arrz \afun(r_1,\ldots,r_n)$, that
it is usable if $\afun(\TCapp(r_1),\ldots,\TCapp(r_n))$ unifies with
the left-hand side of a usable rule, than it is to encode that
$\filter(\afun(\TCapp(r_1),\ldots,\TCapp(r_n)))$ unifies with the filtered
left-hand side of a usable rule: at the time of encoding, the
filtering is not yet known. For formative rules, the issue is analogous.

\subsection{Implementing Formative Rules $\implformrules$}

When both formative rules with respect to an argument filtering
and usable rules with respect to an argument filtering
are used in the reduction pair processor on a DP problem
$(\Pairs,\Rules,\minimal)$, \aprove\ orients the following
rules from $\Rules$ with $\succsim$:
\[
(\implformrules(\Pairs,\tcapuserules(\Pairs,\Rules) \cup \Ce,\pi) \: \cap \:
\tcapuserules(\Pairs,\Rules,\pi)) \: \cup \: \Ce
\]
The reason for this particular choice is that it allowed to
reuse significant parts of \aprove's implementation for the
constraint encodings of usable rules with respect to an argument
filtering \cite{cod:gie:sch:thi:12}.
As in \cite{cod:gie:sch:thi:12}, the generated constraints are
then solved by means of SAT or SMT solvers.

To see that
\[
(\implformrules(\Pairs,\tcapuserules(\Pairs,\Rules) \cup \Ce,\pi) \: \cap \:
\tcapuserules(\Pairs,\Rules,\pi)) \: \cup \: \Ce
\]
is a superset of the rules from $\Rules \cup \Ce$
that need to be oriented by $\succsim$
in Theorem~\ref{thm:powerproc} (and hence that this choice leads to a
sound DP processor), it suffices to consider
the following the points.
\begin{enumerate}
\item
We have:
\[
\implformrules(\Pairs,\tcapuserules(\Pairs,\Rules) \cup \Ce,\pi) \supseteq
\implformrules(\Pairs,\tcapuserules(\Pairs,\Rules,\pi) \cup \Ce,\pi)\]
because $\tcapuserules(\Pairs,\Rules) \supseteq
\tcapuserules(\Pairs,\Rules,\pi)$
and by monotonicity of
$\implformrules(\Pairs,U,\pi)$
with respect to the set of rules $U$, i.e.:
\[
 U \subseteq U' \quad \Rightarrow \quad
 \implformrules(\Pairs,U,\pi) \subseteq \implformrules(\Pairs,U',\pi)
\]
\item
Moreover, we have:
\[
\tcapuserules(\Pairs,\Rules, \pi) \cup \Ce \supseteq
\implformrules(\Pairs,\tcapuserules(\Pairs,\Rules,\pi) \cup \Ce,\pi)
\cup \Ce
\]
because
$\implformrules(\Pairs,U, \pi) \subseteq U$.
\end{enumerate}

\subsection{Implementing Split-Formative Rules $\tcapsformrules$}

For split-formative rules, \aprove\ directly follows
Theorem~\ref{thm:powerproc:split}, where we consider only the usable
rules with respect to the trivial argument filtering.
Conveniently (for automation, not for power), they can be
computed in advance.


} 

\end{document}